\documentclass[sigconf,nonacm]{acmart}
\renewcommand\footnotetextcopyrightpermission[1]{}
\usepackage{algorithm}
\usepackage{multirow}
\usepackage{multicol}
\usepackage{float}
\usepackage{enumitem}
\usepackage{tcolorbox}

\usepackage{booktabs}
\usepackage{tabularx}
\usepackage{subcaption}
\usepackage{algpseudocode}
\usepackage{tikz}
\usetikzlibrary{positioning, shapes.geometric, arrows.meta, calc}
\usepackage{placeins}  
\usepackage{afterpage} 

\definecolor{HCSCstageNavy}{HTML}{2C4870}
\definecolor{HCSCstageNavyFill}{HTML}{E5ECF6}
\definecolor{HCSCblueOut}{HTML}{4A6FA5}
\definecolor{HCSCblueFill}{HTML}{E8F0FA}
\definecolor{HCSCgreenOut}{HTML}{3F7E3F}
\definecolor{HCSCgreenFill}{HTML}{E5F2E5}
\definecolor{HCSCorangeOut}{HTML}{C97A1C}
\definecolor{HCSCorangeFill}{HTML}{FDF0DC}
\definecolor{HCSCredOut}{HTML}{B0383E}
\definecolor{HCSCredFill}{HTML}{FBE5E5}
\definecolor{HCSCgrayOut}{HTML}{555555}
\definecolor{HCSCgrayFill}{HTML}{ECECEC}

\usepackage{float}
\usepackage{amsmath}

\newtheorem{assumption}{Assumption}[section]

\theoremstyle{plain}
\newtheorem{theorem}{Theorem}[section]
\newtheorem{lemma}{Lemma}[section]
\newtheorem{corollary}{Corollary}[section]
\newtheorem{proposition}{Proposition}[section]

\theoremstyle{definition}
\newtheorem{definition}{Definition}[section]
\newtheorem{example}{Example}[section]

\theoremstyle{remark}
\newtheorem{remark}{Remark}[section]

\newcommand{\cE}{\mathcal{E}}             
\newcommand{\PH}[1]{{\color{red}?.??}}
\newcommand{\core}{\widehat{C}}           
\newcommand{\agg}{\mathcal{A}}            
\newcommand{\quant}[1]{Q_{\eta}(#1)}      
\newcommand{\cert}[1]{\mathrm{Cert}(#1)}  
\newcommand{\hstar}{h^{\star}}            
\newcommand{\ystar}{\mathbf{y}^{\star}}   
\newcommand{\thalpha}{\theta_{\alpha}}    
\newcommand{\eps}{\bar{\varepsilon}_{\alpha}} 
\newcommand{\Honest}{\mathcal{H}}         
\newcommand{\Byz}{\mathcal{B}}            
\newcommand{\Nset}{\mathcal{N}}           

\providecommand{\CR}{\ensuremath{\mathrm{CR}}}          
\providecommand{\SEM}{\ensuremath{s_{\textsc{sem}}}}    
\providecommand{\INVgold}{\ensuremath{\mathrm{inv}_{\mathrm{gold}}}}
\providecommand{\INVhmaj}{\ensuremath{\mathrm{inv}_{\mathrm{hmaj}}}}
\providecommand{\byzcore}{\ensuremath{\beta_{\mathrm{core}}}}
\providecommand{\VCC}{\ensuremath{\mathrm{VCC}}}
\providecommand{\marginmin}{\ensuremath{m_{\min}}}

\providecommand{\Commit}{\textsc{commit}}
\providecommand{\Abort}{\textsc{abort}}

\AtBeginDocument{%
  }

\begin{document}

\title{Certifiable Semantic Agreement Among LLM Agents:
What the Admissibility Instrument Decides}

\author{Haoran Xu}
\email{2614067X@student.gla.ac.uk}
\orcid{0009-0000-6659-2043}
\affiliation{%
  \institution{University of Glasgow}
  \city{Glasgow}
  \country{United Kingdom}
}

\author{Lei Zhang}
\email{lei.zhang@glasgow.ac.uk}
\orcid{0000-0002-4767-3849}
\affiliation{%
  \institution{University of Glasgow}
  \city{Glasgow}
  \country{United Kingdom}
}

\author{Iadh Ounis}
\email{Iadh.Ounis@glasgow.ac.uk}
\orcid{0000-0003-4701-3223}
\affiliation{
    \institution{University of Glasgow}
    \city{Glasgow}
    \country{United Kingdom}
}

\author{Xianbin Wang}
\email{xianbin.wang@uwo.ca}
\orcid{0000-0003-4890-0748}
\affiliation{
    \institution{University of Western Ontario}
    \city{London}
    \country{Canada}
}

\renewcommand{\shortauthors}{H. Xu et al.}


\begin{abstract}
Can a committee of large-language-model (LLM) agents reach agreement that is
\emph{certifiable at the level of meaning}, rather than only at the level of a
label? We build a protocol to find out. \textsc{H-CSC} produces one of three
typed outcomes per round under a common $2f{+}1$ distinct-signer certificate: a
\emph{semantic commit} whose digest binds a quantised aggregate over a
within-verdict admissible core, a \emph{verdict commit} that binds only the
label, or a typed \emph{abort}. Using it as an instrument, we report what
certifiable semantic agreement costs and what it buys.

Our answer is conditional, and the condition is not the protocol. We prove a
\emph{containment lemma}: whenever the semantic core is large enough to make the
committed verdict deterministically valid, the verdict margin already exceeds
$f$, so at matched deterministic guarantees the semantic predicate is strictly
stronger than the verdict predicate and \emph{no coverage separation from
certificate-wrapped majority is possible}. Measurement confirms it --- the two
rules commit the identical task set. The value therefore cannot lie in the
verdict layer, and we look where the lemma points.

There we find that the \emph{admissibility instrument}, not the protocol, is the
design variable that matters. Against adversaries that preserve the verdict and
corrupt only the reasoning, a $442$\,MB fine-tuned sentence encoder reaches
AUROC $0.621$--$0.744$ at $0$--$8\%$ true-positive rate (at $5\%$ honest false
positives), while a training-free lexical-conformity predicate reaches
$0.865$--$0.982$ at $38$--$80\%$ ($50$ tasks, $200$ attacks, $400$ honest); the lexical predicate is also exactly
deterministic, discharging an encoder-determinism assumption the digest proofs
depend on. The mechanism generalises beyond our protocol: honest agents disperse
further than attacks displace ($90$th-percentile honest angular distance
$0.992$\,rad against an admissibility radius of $0.65$), so on free-text
rationales no embedding-level filter is viable --- and this is a property of the
\emph{output schema}, which when tightened collapses honest dispersion
twenty-fold and reverses the inequality.

Finally we show the committed digest is not inert. Under non-disclosure --- an
auditor holding the commitment record but not the agent texts, the regime that
data-minimisation and retention limits actually produce --- a similarity-preserving
committed sketch separates faithful derived statements from substituted ones at
AUROC $0.741$ ($95\%$ CI $[0.69, 0.79]$, $n{=}350$), where a cryptographic hash
and a verdict-only digest score exactly $0.500$. The capability is concentrated
on reasoning substitution ($0.770$) and absent on scope shift ($0.617$), a
boundary that coincides exactly with the semantics neither representation
encodes. We also report a tie-break capture vulnerability we found in our own
protocol and its four-line fix. We claim no safety or coverage advantage over
verdict-only certification; the containment lemma explains why there is none to
claim.
\end{abstract}
\maketitle

\section{Introduction}
\label{sec:introduction}
\label{sec:intro}

Multi-agent systems built on LLMs increasingly produce \emph{natural-language
artefacts}: a verdict on a scientific claim, a triage recommendation, a plan.
Classical Byzantine agreement gives deterministic finality for discrete values
--- a quorum of $2f{+}1$ distinct signatures on one byte-string --- and recent
multi-agent LLM work gives the opposite extreme, a soft aggregation with no
certificate and no abort behaviour. Neither fits a setting in which honest
agents given the same evidence emit textually distinct proposals that mean the
same thing, and a Byzantine fraction emits fluent proposals that do not.

The obvious construction, wrapping a majority verdict in a $2f{+}1$ certificate,
records \emph{what} was decided and nothing about \emph{on what grounds}. Two
rounds that commit the same label on mutually incompatible reasoning produce
byte-identical records. \emph{Certified semantic commitment} is the natural
response: additionally bind a deterministic function of the committed content,
so the record commits to more than a token. This paper asks whether that is
worth doing, and answers by building the protocol and measuring.

\paragraph{A concrete failure that motivates the question.}
Our own protocol admits the following attack, which we found while auditing it.
On a claim where the honest agents split $5$--$3$ for \textsc{insufficient}, two
Byzantine agents both vote \textsc{support}. The delivered view is a $5$--$5$
tie; the protocol's published tie-break order resolves it to \textsc{support};
the semantic core within that group is three honest agents and both Byzantine
agents, which meets the $2f{+}1$ quorum and the angular admissibility radius. The
protocol emits a fully valid, fully certified \emph{semantic commit} for the
honest \emph{minority} verdict, with $40\%$ of its core Byzantine. The root cause
is that the semantic branch never consulted the verdict margin. Two agents
converted a losing position into a certified one by reading a tie-break rule off
the specification. The fix is four lines
(\S\ref{sec:margin-gate}); the point is that a protocol can be provably correct
in every stated theorem and still be captured, which is why the rest of this
paper is about characterising boundaries rather than asserting guarantees.

\paragraph{Contributions.}
\begin{enumerate}[leftmargin=1.6em,itemsep=0.25em]
\item \textbf{A typed-finality protocol and a complete guarantee
characterisation} (\S\ref{sec:protocol}--\S\ref{sec:correctness}): a four-level
lattice separating digest agreement, certificate validity, semantic proximity
and verdict correctness, with an explicit statement of which levels are
deterministic and for which commit type. Its centrepiece is the
\textbf{containment lemma} and its corollary: at matched deterministic
guarantees the semantic admissibility predicate is strictly stronger than the
verdict predicate, so coverage separation from certificate-wrapped majority is
impossible \emph{by proof}. Empirically the two rules commit the identical task
set ($0.72$ static / $0.64$ rushing, zero discordant tasks).

\item \textbf{A vulnerability in our own protocol, with its fix and the fix's
price} (\S\ref{sec:margin-gate}): the tie-break capture above; gating both
branches at margin $\ge 1$ drives the honest-reference-invalid rate to
$0.00/0.00$ at a cost of $0.02$ static and $0.00$ rushing commit rate.

\item \textbf{A systematic study of the admissibility instrument}, the first we are aware of,
(\S\ref{sec:eval-r2}): learned embedding geometry versus lexical conformity
versus evidence-id overlap versus an LLM judge, across two verdict-preserving
attack families and two evasion levels. The learned encoder is dominated in
every cell by a training-free lexical predicate that is additionally exactly
deterministic, which discharges the encoder-determinism assumption on which
byte-identical digests depend.

\item \textbf{The mechanism, and its dependence on the output schema}
(\S\ref{sec:eval-r3}, \S\ref{sec:eval-r6}): honest dispersion exceeds attack
displacement on free-text rationales, so no admissibility filter of any kind can
work there; tightening the output schema collapses honest dispersion twenty-fold
while leaving attack displacement unchanged, and reverses the inequality. This
is a statement about agent populations and output formats, not about our
encoder.

\item \textbf{A downstream use in which the committed sketch beats a
cryptographic hash} (\S\ref{sec:audit}): under non-disclosure, an auditor
holding only the commitment record separates faithful derived statements from
substituted ones at AUROC $0.741$ $[0.69,0.79]$, $n{=}350$, against exactly
$0.500$ for a plain hash and for a verdict-only digest.
\end{enumerate}

\paragraph{What we do not claim.}
We do not claim that \textsc{H-CSC} is safer than, or covers more rounds than,
verdict-only certification; contribution~1 proves that it cannot be, at matched
guarantees. We do not claim that semantic admissibility defends against
evidential poisoning; we measure that it does not
(\S\ref{sec:eval-r1}). And we do not claim that the downstream
capability of contribution~5, at a $21\%$ true-positive rate, is on its own a
sufficient reason to deploy the machinery. What we claim is a map: where the
value cannot be, where it can, how large it is, and what determines it.

\paragraph{Relation to a prior version.}
An earlier version of this work claimed parity with the strongest verdict-only
baseline plus an additional semantic digest, and presented the digest's
existence as the contribution. That framing was not supportable: the parity is
provable rather than empirical (contribution~1), the attack suite used to
establish it could not exercise the semantic layer at all
(\S\ref{sec:eval-attack-audit}), and the digest's value was asserted rather than
measured. Appendix~\ref{app:errata} lists the specific corrections.

\section{Related Work}
\label{sec:related_work}
\label{sec:related} 

Four literatures bear on this paper: multi-agent LLM
collaboration and answer aggregation; Byzantine fault tolerance
applied to LLM-agent systems; Byzantine-robust aggregation in
distributed learning, together with the attacks that defeat it;
and approximate agreement over non-identical values. We review
each against a single question: \emph{does it already supply, or
already rule out, a certified single-round commitment over
natural-language proposals whose admissibility is decided by a
learned semantic predicate?} The answer that emerges structures
this revision. The typed commitment object appears to be new.
The \emph{instrument} we used to decide admissibility --- a
learned embedding plus an angular ball --- is not new, and the
way it fails in \S\ref{sec:eval-r2} is a close analogue of a
failure the distributed-learning community documented years ago.

\paragraph{Multi-agent LLM collaboration and aggregation.}
A large body of work studies multi-agent LLM collaboration ---
debate, deliberation, committee-style
workflows --- as a means of improving task performance and
reliability~\cite{park2023generative, li2024improving,
du2023improving, liang2024encouraging, chen2024reconcile}.
These systems typically consolidate stochastic outputs by voting
or pooling; recent evaluation frameworks adopt majority vote over
many LLM judgments explicitly, which makes both the prevalence
of aggregation and the operational cost of LLM non-determinism
visible~\cite{zhao2025lmc}. A parallel line derives principled
aggregation rules that optimise expected accuracy under
statistical models of judge reliability~\cite{ai2025beyond}.
Closest to this thread on the adversarial side is
SAC~\cite{lee2026sac}, a decentralised filter-and-refine scheme
for multi-agent LLM systems over $(f{+}1)$-robust communication
graphs, which restores task accuracy on mathematics and
commonsense benchmarks in the presence of Byzantine agents.
None of this work is a consensus protocol in the fault-tolerance
sense: a trusted coordinator or benign participation is normally
assumed, and no certified single outcome is produced. We also do
not compete with SAC on its axis. H-CSC is not an
accuracy-improving aggregator and, as
\S\ref{sec:eval-setup} reports, does not improve accuracy over a
verdict-only baseline; the question it addresses is what kind of
finality a round supports and what artefact the round leaves
behind.

\paragraph{Byzantine fault tolerance for LLM-agent systems.}
The most directly comparable work is
CP-WBFT~\cite{zheng2026cpwbft}, a confidence-probe-weighted BFT
consensus procedure for multi-agent LLM systems, evaluated over
several communication topologies and reported to remain reliable
at high fault rates. CP-WBFT was peer-reviewed and published
before the first version of this paper appeared; its omission
from that version was an error, and we correct it here. We claim
no priority over CP-WBFT for the observation that Byzantine
fault tolerance is a productive frame for LLM-agent reliability.
The two differ in what they produce rather than in whether they
use BFT: CP-WBFT weights agent votes by a probe of agent
confidence and returns a single agreed answer, whereas H-CSC
returns a typed, quorum-certified commitment object --- a
$2f{+}1$ distinct-signer certificate bound to either an
embedding-derived quantised aggregate or a verdict-level payload
--- that a third party can re-verify offline without re-running
the round.
Closest in framing is Semantic Quorum Assurance
(SQA)~\cite{he2026sqa}, which forms risk-adaptive quorums over
diverse validator agents, binds proposals to cryptographic
evidence chains, and gates execution behind approval. SQA is the
work that most directly answers the objection that motivated
this revision: it names a concrete downstream consumer --- the
execution gate --- and measures harm prevented rather than
structure emitted. We differ in operating inside a formal
$n \ge 3f{+}1$ model with a distinct-signer certificate and
typed finality, and in certifying an embedding-derived aggregate
rather than a validator-vote tally. SQA's evaluation design,
however, is the standard this revision holds itself to, and
\S\ref{sec:eval-r2} adopts it.
Blockchain-driven weighted BFT voting for multi-LLM
networks~\cite{luo2025wbft} and consensus-inspired robustness
for communicative multi-agent systems under communication
attacks~\cite{mao2024ibgp} share the mechanism but not the
commitment object. Anand and Pappas~\cite{anand2026resilient}
report that LLM agents fail to converge in settings where
classical resilient-consensus theory guarantees convergence,
which is useful evidence that the naive-aggregation baseline
this whole line displaces is genuinely broken. Classical BFT
itself~\cite{lamport1982byzantine, castro1999, yin2019,
dwork1988partial, bracha1987asynchronous} is the substrate we
borrow from; it agrees on byte-identical values, which LLM
proposals do not supply.

\paragraph{Provenance and accountability for agent systems.}
As LLM collaboration becomes agentic and tool-integrated,
adversarial participation is a first-order concern: prompt
injection propagates \emph{across} interconnected agents,
enabling persistent compromise of a
workflow~\cite{lee2024promptinfection}, and systematisations
catalogue cross-agent injection and protocol-layer
vulnerabilities in deployed agent
ecosystems~\cite{ferrag2025protocol, ferrag2025threats}. The
defensive response has largely been trace-level: a recent survey
of evidence tracing and execution provenance for LLM
agents~\cite{agentprov2026survey} organises the field into trace
logging, attribution, and audit-graph construction. That
taxonomy contains no category for Byzantine-fault-tolerant
agreement or cryptographic commitment over agent outputs. The
present paper is best read as an attempt to occupy that empty
cell, and \S\ref{sec:eval-r2} is an attempt to establish
which instrument should decide what is admissible once one does.

\paragraph{Certified artefacts whose value is not accuracy.}
Several security primitives are valuable while improving no
accuracy metric, and they share an evaluation shape worth
naming. Randomised smoothing~\cite{cohen2019smoothing} reports a
certified-accuracy-versus-radius curve --- how much guarantee at
how much coverage --- rather than an accuracy win, and clean
accuracy in fact decreases. Supply-chain and accountability
systems such as in-toto~\cite{torresarias2019intoto} and
PeerReview~\cite{haeberlen2007peerreview} are evaluated by
replaying real compromises and by a
completeness/soundness pair, not by a task score. We adopt this
shape rather than an accuracy comparison: the
harm-versus-coverage frontier of \S\ref{sec:eval-r2} is a
guarantee-versus-coverage curve, and its most important property
is that for the learned predicate the curve is flat.

\paragraph{Byzantine-robust aggregation, and the attacks that
defeat it.}
Robust aggregation replaces the mean with an estimator that
tolerates a bounded fraction of arbitrary inputs: distance-based
selection (Krum, Multi-Krum)~\cite{blanchard2017},
coordinate-wise and trimmed
median~\cite{yin2018median, chen2017}, Bulyan's recursive
composition~\cite{mhamdi2018}, geometric-median smoothing
(RFA)~\cite{pillutla2022}, and the high-dimensional robust
estimators behind them~\cite{diakonikolas2019}. H-CSC's
aggregation stage is a member of this family: it is a geometric
median over a selected core.
The part of this literature that v1 failed to engage with, and
that is now central to our positioning, is the attack side. ``A
Little Is Enough''~\cite{baruch2019little} shows that a small
number of colluding participants can defeat distance-based
defences by perturbing within the empirical variance of the
honest population, and ``Fall of Empires''~\cite{xie2020fall}
shows the same by manipulating inner products so that malicious
vectors remain geometrically ordinary. The common lesson is that
a geometric admissibility filter is only as discriminative as
the gap between honest dispersion and adversarial displacement.
\emph{Our negative results are the natural-language analogue of
these two attacks.} In \S\ref{sec:eval-r2} we construct
verdict-preserving evidence-poisoning attacks (n${=}40$ attacks
over $10$ tasks) that keep the adversary's verdict identical to
the honest plurality and cite only real evidence identifiers. The
honest population's own angular dispersion around its geometric
median has median $0.528$~rad and $90$th percentile
$0.992$~rad, while the attacks sit at median $0.58$--$0.85$~rad;
the operating threshold $\thalpha = 0.65$~rad therefore barely
contains the honest population itself, and $90\%$ of the
evasive-tier attacks land inside the ball. This is
\cite{baruch2019little}'s inequality restated in an embedding
space: the perturbation the adversary needs is smaller than the
noise the honest agents already produce. Stating this connection
is also, we think, the right answer to the criticism that a
protocol built from standard primitives is ``structural'': the
question is not whether the primitives are standard but whether
the composed predicate discriminates, and the honest way to
settle that is to measure it.

\paragraph{Approximate agreement, and why H-CSC is not an
instance of it.}
Approximate agreement relaxes exact equality: honest nodes must
output values within $\varepsilon$ of each other and inside the
convex hull of honest inputs. The scalar
formulation~\cite{dolev1986reaching}, the multidimensional
extension of Mendes and Herlihy~\cite{mendes2013multidimensional},
iterative approximate consensus over arbitrary directed
graphs~\cite{vaidya2013}, and recent work coupling robust
aggregation with an agreement
subroutine~\cite{milentijevic2025approxagree} are the canonical
references. H-CSC is not an instance of this family, for three
reasons, and it is worth being precise about them because the
resemblance is superficially strong.
First, the output requirement differs in kind. Approximate
agreement permits honest outputs to differ by up to
$\varepsilon$ and reaches that state by iterated convergence
rounds. H-CSC requires a \emph{byte-identical} digest
$\hstar$ across honest signers in one round, because the digest
is the object that gets certified and re-verified; the
quantiser $\quant{\cdot}$ exists precisely to convert
$\varepsilon$-closeness into byte equality
(Lemma~\ref{lem:digest-consistency}).
Second, the validity notion differs. Approximate-agreement
validity says the output lies in the convex hull of honest
inputs. On our data that guarantee is close to vacuous: the
honest hull is wide enough ($90$th-percentile dispersion
$0.992$~rad against $\thalpha = 0.65$~rad) that
verdict-preserving attacks lie inside it, so hull membership
certifies nothing an adversary cannot arrange.
Third, H-CSC's decision is typed and discrete. The protocol does
not converge toward a point; it evaluates a predicate and emits
one of three outcomes, one of which is an abort, and the
correctness question is which typed outcome was emitted rather
than how far apart two honest outputs are. The distinction
matters for this revision: the guarantee we prove in
\S\ref{sec:correctness} is a containment relation between two
admissibility conditions (Lemma~\ref{lem:containment}), not a
convergence rate.

\paragraph{Positioning of this version.}
Relative to the works above, the contribution we defend is
narrow and mostly negative. The typed commitment object and its
guarantee lattice are, as far as we can determine, not supplied
by CP-WBFT~\cite{zheng2026cpwbft}, SQA~\cite{he2026sqa},
SAC~\cite{lee2026sac}, or the agent-provenance
literature~\cite{agentprov2026survey}. The learned semantic
encoder that v1 used to decide admissibility is, on our
measurements, the wrong instrument for the job, and a
deterministic lexical-conformity predicate dominates it on the
harder attack family (\S\ref{sec:eval-r2}) while also
discharging the encoder-determinism assumption that
Assumption~\ref{assn:deterministic-cE} and
Lemma~\ref{lem:digest-consistency} rely on. Earlier formulations
of this work explored topology-based filtering as the central
mechanism; we retain that exploration in
Appendix~\ref{appendix:topology} as a design-space ablation
only, because on every measured benchmark it yields
commit/abort decisions either identical to the H-CSC main path
or strictly more conservative without improving
honest-reference validity.


\section{System Model and Problem Formulation}
\label{sec:problem}
\label{sec:system_model}
We consider a distributed system consisting of a set of $n$ nodes $\mathcal{N}=\{1,\dots,n\}$, where a subset $\mathcal{B}\subset\mathcal{N}$ of size at most $f$ may be Byzantine. We assume $n \geq 3f+1$, which is the tight bound for deterministic Byzantine agreement in partially synchronous systems~\cite{lamport1982byzantine,dwork1988partial}. This bound is both necessary and sufficient: with fewer than $3f+1$ nodes, an adversary controlling $f$ Byzantine nodes can partition honest nodes into two groups that commit conflicting values, violating safety~\cite{lamport1982byzantine}. The $3f+1$ threshold has been foundational since Lamport et al.'s seminal work on the Byzantine Generals Problem~\cite{lamport1982byzantine} and remains the standard fault tolerance limit for protocols requiring exact agreement under partial synchrony~\cite{dwork1988partial}.

\subsection{Adversary Model}
\label{subsec:adv}

We consider a Byzantine adversary that controls the set of faulty
nodes $\mathcal{B}$ with $|\mathcal{B}|\le f$. We evaluate the
protocol under two adversary modes:
\emph{static}, where Byzantine submissions are generated
independently of the honest broadcasts in the same round, and
\emph{rushing}, where the Byzantine reads the round's honest
broadcasts before submitting (subject to the network and
cryptographic constraints below). Rushing is the strict worst case
for our analysis; both modes are evaluated paired in
Section~\ref{sec:evaluation}. The semantic-poisoning capability
formalised in Definition~\ref{def:semantic-poisoning} is
independent of the timing mode.

\paragraph{Capabilities.}
Byzantine nodes may deviate arbitrarily from the protocol, including choosing arbitrary texts $x_b\in\mathcal{X}$ and coordinating across corrupted nodes.
We focus on \emph{semantic poisoning} behavior, where the adversary crafts inputs that are intended to distort the collectively committed meaning while remaining embedding-proximal to honest inputs (i.e., difficult to filter using geometry alone).

\paragraph{Cryptographic and communication constraints.}
Nodes communicate over authenticated point-to-point channels and signatures are unforgeable.
RBC is used for disseminating round inputs.
RBC does not prevent Byzantine senders from attempting equivocation, but it guarantees that honest nodes cannot end up with conflicting delivered values from the same sender (Agreement), nor can only a strict subset of honest nodes deliver a value from that sender (Totality/Uniform delivery).
Thus, any successful round proceeds from a consistent delivered view.


\subsection{Terminology used throughout}
\label{subsec:terms}

The formal definitions are collected in Appendix~\ref{app:moved}; the body uses
them in the following senses.

A \emph{semantic proposal} is a pair $z_i=(s_i,\rho_i)$: the natural-language
string an agent emitted, and the structured object extracted from it (verdict,
confidence, evidence-id list, rationale, claim text). Agents do not agree on
$s_i$; they agree on a deterministic function of $\rho_i$. The \emph{delivered
view} $V(r)$ of round $r$ collects the proposals that reached the protocol
together with their unit-norm embeddings $\mathbf{e}_i=\cE(\mathrm{CanonText}(\rho_i))$;
a \emph{successful round} delivers at least $n-f$ of them, and quorum thresholds
always reference the global bound $2f{+}1$ rather than the delivered count.

A round is \emph{admissible at radius $\thalpha$} when the candidate verdict group
contains a subset $\core$ of at least $2f{+}1$ proposals and a centre $h\in\core$
with $\angle(\mathbf{e}_h,\mathbf{e}_j)\le\thalpha$ for every $j\in\core$. The
protocol emits exactly one \emph{typed outcome} per round: a
\textsc{semantic\_commit} when such a core exists, whose digest binds the
quantised aggregate $\quant{\ystar}$ over it; a \textsc{verdict\_commit} when the
verdict-level signal is admissible but the semantic one is not, whose digest binds
only a verdict-level payload; or a typed \Abort{} carrying a reason from a closed
set. Both commit types travel under the same $2f{+}1$ distinct-signer certificate.

Two notions of validity are reported separately and must not be conflated.
\emph{Dataset-gold validity} compares the committed verdict against the
benchmark's reference label. \emph{Evidence-bounded honest-reference validity}
compares it against the honest plurality of the delivered view under the
protocol's own tie-break order --- the plurality, not a strict majority. The two
diverge whenever the gold label asserts more than the visible evidence supports,
and the honest-reference rate is the protocol-side safety metric; the protocol
itself never observes either.

\section{Hierarchical Certified Semantic Commitment Protocol}
\label{sec:protocol}
\label{sec:framework} 


\subsection{Overview}
\label{sec:protocol-overview}

\sloppy
H-CSC is a finality-control protocol for adversarial LLM-agent
collaboration.
Figure~\ref{fig:hcsc-arch} gives the architecture;
Algorithm~\ref{alg:certified-commitment-round} gives the formal
pseudocode, and Figure~\ref{fig:hcsc-protocol-flow} in
Appendix~\ref{app:moved} redraws the same round at stage
granularity. It operates on a single delivered view $V(r)$ in
round $r$, where each agent $i$ contributes a structured
proposal $z_i = (s_i, \rho_i)$ with the canonical fields
$(\textit{verdict}, \textit{confidence}, \textit{evidence ids},
\textit{rationale}, \textit{claim text})$. The protocol reads
the deterministic embedding
$\mathbf{e}_i = \cE(\mathrm{CanonText}(\rho_i))$ and emits one
of three typed outcomes
(Definition~\ref{def:csc-object}): a \textsc{semantic\_commit}
when both the verdict-level and the within-verdict
embedding-level finality signals are admissible, a
\textsc{verdict\_commit} when only the verdict-level signal is
admissible, or an \emph{abort} when neither is.

\begin{figure*}[t]
\centering
\includegraphics[width=\textwidth]{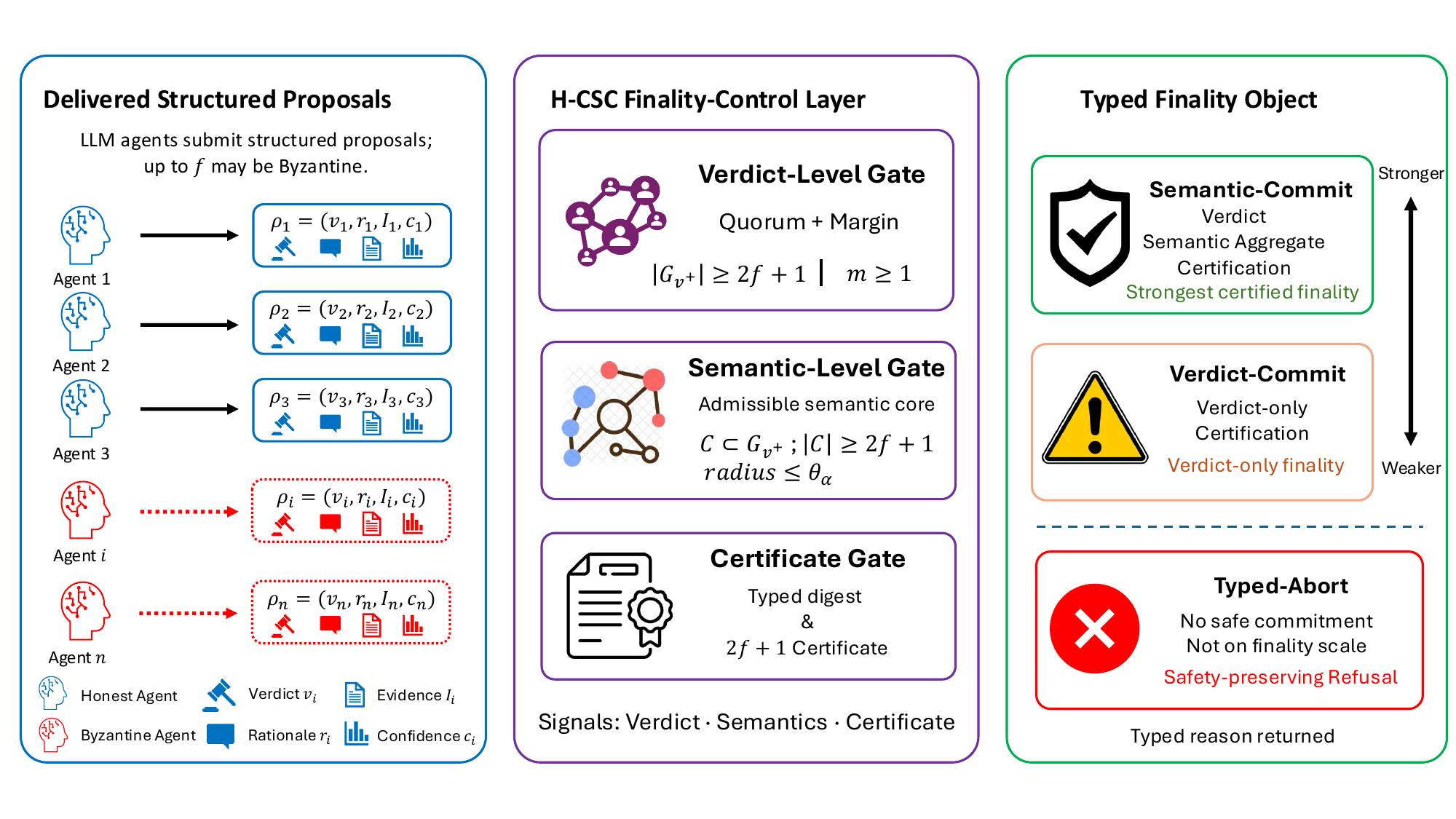}
\caption{\textbf{H-CSC finality-control architecture.} Given structured
proposals $\rho_i=(v_i,r_i,I_i,c_i)$ (verdict, rationale, evidence ids,
confidence) from $n$ LLM agents, of which up to $f$ may be Byzantine, H-CSC
applies three gates in sequence --- verdict support, semantic admissibility,
and certificate feasibility --- and then exposes the strongest typed outcome
those gates certify: \emph{semantic-commit}, \emph{verdict-commit}, or a typed
\emph{abort}. Semantic-commit provides embedding-backed finality,
verdict-commit provides verdict-only finality, and typed abort reports that no
safe commitment is available. The verdict-level gate is drawn with the margin
condition $m\ge 1$ of \S\ref{sec:margin-gate}, which this paper adds to
\emph{both} branches; \S\ref{sec:eval-r2} then asks which instrument the
semantic-level gate should be built from.}
\label{fig:hcsc-arch}
\end{figure*}

The four-stage pipeline plus the typed-decision step is
summarised in
Algorithm~\ref{alg:certified-commitment-round}. The protocol's
entire decision logic is a deterministic function of the
finality-signal vector $\Psi(r)$
(Definition~\ref{def:finality-signal}); it does not consult
honest, gold, evidence-truth, source-reliability, or
external-judge signals. The H-CSC main path uses
$\thalpha = 0.65$,
$\textit{candidate\_rule} = \textsc{largest\_verdict}$,
$\textit{fallback} = \textsc{margin\_1}$, and
$\textit{topology\_gate} = \bot$.\footnote{The reference
implementation of H-CSC is retained by the authors and is planned
to accompany a future reproducibility package. It layers the verdict grouping
and hierarchical typed-decision step on top of the
strict-semantic CSC primitives (admissibility predicate,
quantiser, digest, certificate). Promoting the H-CSC
typed-commit logic into a production-grade module remains
future work.}

\FloatBarrier  
\begin{algorithm*}[t]
\caption{Hierarchical Certified Semantic Commitment Round}
\label{alg:hcsc-round}
\label{alg:certified-commitment-round} 
\footnotesize
\setlength{\baselineskip}{0.95\baselineskip}
\begin{algorithmic}[1]
\Require Delivered structured proposals
$\{(i, z_i) : i \in I_r\}$ with $z_i = (s_i, \rho_i)$ and
$I_r \subseteq \Nset$, $|I_r| \ge n - f$ (the successful-round
delivered index set established by RBC,
\S\ref{subsec:net});
encoder $\cE$;
fault bound $f$ with $n\ge 3f{+}1$;
admissibility radius $\thalpha$;
diameter sister threshold $\eps$;
candidate rule $\textsc{largest\_verdict}$;
fallback condition $\textsc{margin\_1}$;
robust aggregator $\agg$;
quantizer $Q_\eta$ with step $\eta$;
round id $r$;
parameter digest $h_\pi$.
\Ensure A typed commit object: $\mathcal{O}^{\textsc{sem}}$,
$\mathcal{O}^{\textsc{ver}}$, or $\mathcal{O}^{\textsc{abort}}$
with a typed reason.
\Statex
\State \Comment{\emph{Stage 1: validate, canonicalise, encode.}}
\For{$i \in I_r$}
    \State $\rho_i \gets \textsc{Canonicalise}(z_i)$
    \Comment{\textit{verdict, confidence, evidence ids, rationale, claim}}
    \State $\mathbf{e}_i \gets \cE(\mathrm{CanonText}(\rho_i))$
    \State $\mathbf{e}_i \gets \mathbf{e}_i / \|\mathbf{e}_i\|_2$
\EndFor
\State $V(r) \gets \{(i,\rho_i,\mathbf{e}_i) : i \in I_r\}$
\Statex
\State \Comment{\emph{Stage 2: group by verdict and select candidate.}}
\State $\{G_v\}_v \gets$ partition $V(r)$ by $\rho_i.\textit{verdict}$
\State $v^\star \gets \arg\max_v |G_v|$
\Comment{tie-break: $\textsc{support} > \textsc{refute} > \textsc{insufficient}$}
\State $\textit{top\_count} \gets |G_{v^\star}|$
\State $\textit{verdict\_margin} \gets \textit{top\_count} - \max_{v \neq v^\star} |G_v|$
\If{$\textit{top\_count} < 2f{+}1$}
    \State \Return $\Abort(\texttt{verdict\_below\_quorum},\, \textit{top\_count},\, 2f{+}1)$
\EndIf
\Statex
\State \Comment{\emph{Stage 3: within-verdict semantic core extraction.}}
\State $\core \gets \textsc{LargestAngularComponent}\bigl(G_{v^\star},\, \thalpha\bigr)$
\State $\textit{semantic\_path\_ok} \gets (\core \neq \emptyset) \,\land\, (|\core| \ge 2f{+}1)$
\If{$\textit{semantic\_path\_ok}$}
    \State $r^\star \gets \min_{h\in\core}\,\max_{j\in\core}\,
    \angle(\mathbf{e}_h,\mathbf{e}_j)$
    \If{$r^\star > \thalpha$}
        \State $\textit{semantic\_path\_ok} \gets \textsc{false}$
        \State $\textit{semantic\_fail\_reason} \gets \texttt{admissibility\_failed}$
    \EndIf
\Else
    \State $\textit{semantic\_fail\_reason} \gets \texttt{core\_below\_quorum}$
\EndIf
\Statex
\If{$\textit{semantic\_path\_ok}$}
    \State \Comment{\emph{Stage 4a: aggregate, quantise, semantic digest.}}
    \State $\ystar \gets \agg(\{\mathbf{e}_i:i\in\core\})$
    \If{$\ystar$ is degenerate (zero norm or numerical failure)}
        \State \Return $\Abort(\texttt{aggregation\_failed})$
    \EndIf
    \State $\ystar \gets \ystar / \|\ystar\|_2$
    \State $\widetilde{\mathbf{y}} \gets Q_\eta(\ystar)$
    \State $\hstar \gets H\bigl(\texttt{"semantic\_commit"}\,\Vert\,\widetilde{\mathbf{y}}\,\Vert\,h_\pi\,\Vert\,r\,\Vert\,v^\star\bigr)$
    \State $\textit{commit\_type} \gets \textsc{semantic\_commit}$
    \State $\Sigma_{\textit{src}} \gets \core$
\Else
    \If{$\textit{verdict\_margin} \ge 1$}
        \State \Comment{\emph{Stage 4b: verdict-fallback path.}}
        \State $\textsc{VerdictPayload} \gets$ canonical-serialise $(v^\star, \textit{top\_count}, \textit{verdict\_margin}, n, f, r)$
        \State $\hstar \gets H\bigl(\texttt{"verdict\_commit"}\,\Vert\,\textsc{VerdictPayload}\,\Vert\,h_\pi\,\Vert\,r\bigr)$
        \State $\widetilde{\mathbf{y}} \gets \bot$
        \Comment{no semantic aggregate}
        \State $\textit{commit\_type} \gets \textsc{verdict\_commit}$
        \State $\Sigma_{\textit{src}} \gets G_{v^\star}$
        \Comment{any $2f{+}1$ valid signatures suffice}
    \Else
        \State \Return $\Abort(\texttt{v2\_both\_paths\_failed:semantic\_core\_failed:} \textit{semantic\_fail\_reason})$
    \EndIf
\EndIf
\Statex
\State \Comment{\emph{Stage 5: collect quorum certificate.}}
\State $\Sigma \gets \{\sigma_j : j \in \Sigma_{\textit{src}},\ \sigma_j \text{ valid on } \hstar\}$
\If{$|\Sigma| < 2f{+}1$}
    \State \Return $\Abort(\texttt{insufficient\_signers},\, |\Sigma|,\, 2f{+}1)$
\EndIf
\State $\cert{\hstar} \gets \Sigma$
\Statex
\If{$\textit{commit\_type} = \textsc{semantic\_commit}$}
    \State \Return $\mathcal{O}^{\textsc{sem}}\bigl(\hstar,\, \pi,\, \cert{\hstar},\, \core,\, \widetilde{\mathbf{y}},\, v^\star\bigr)$
\Else
    \State \Return $\mathcal{O}^{\textsc{ver}}\bigl(\hstar,\, \pi,\, \cert{\hstar},\, G_{v^\star},\, v^\star\bigr)$
\EndIf
\end{algorithmic}
\end{algorithm*}

\subsection{Proposal canonicalisation and encoding}
\label{subsec:phase1} 

A proposal $z_i = (s_i, \rho_i)$ is built per agent. The
structured part $\rho_i$ has the canonical fields
$(\textit{verdict}, \textit{confidence}, \textit{evidence ids},
\textit{rationale}, \textit{claim text})$. The encoder $\cE$
maps $\rho_i$ to a unit-norm embedding via a deterministic
canonicalisation of $\rho_i$ followed by a transformer-based
sentence encoder~\cite{reimers2019,devlin2019};
we use CRSE (a Byzantine-robust trained encoder) by default
and document the encoder identity in the
protocol parameters $\pi$ so that two honest signers reach
byte-identical embeddings. Importantly, the encoder reads only
the canonical proposal text (a deterministic serialisation of
the structured fields); the evidence-id list is part of that
serialisation but is \emph{not} consulted by the protocol's
typed-decision step (Remark~\ref{rem:hcsc-scope}).

\begin{figure*}[t]
    \centering
    \includegraphics[width=\linewidth]{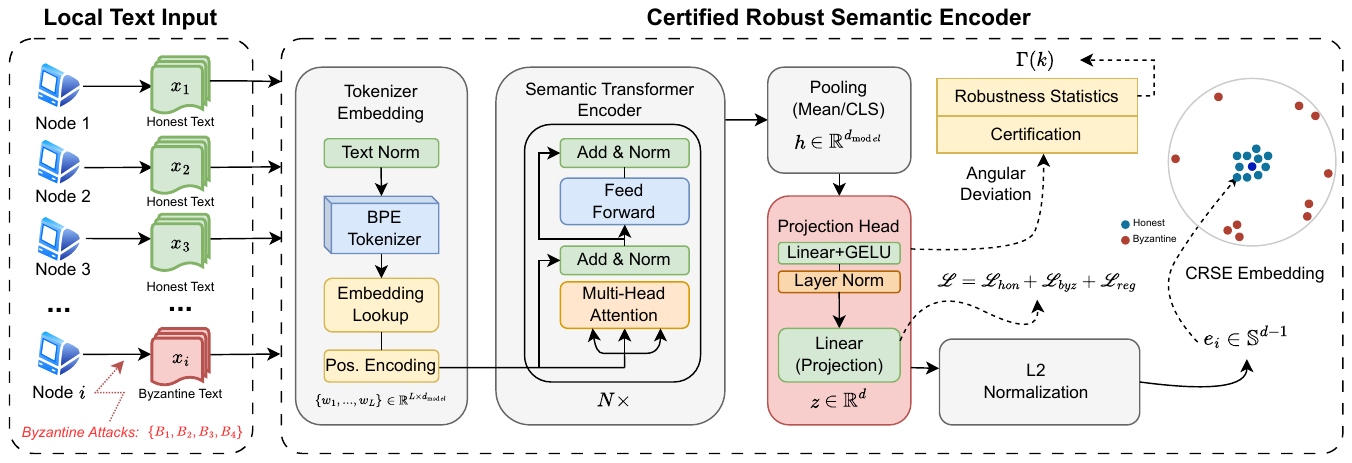}
    \caption{\textbf{Instantiation of the encoder primitive (CRSE).}
    The contrastive training objective minimises intra-honest
    distance and maximises the honest--Byzantine angular margin,
    producing an embedding space in which an honest sub-quorum
    typically satisfies the admissibility predicate of
    Definition~\ref{def:admissible-round}; empirical rates are
    reported in Section~\ref{sec:evaluation}.}
    \Description{A neural network architecture diagram for the Certified Robust Semantic Encoder. It shows a contrastive learning setup that clusters honest intents and pushes away semantic attacks in the embedding space.}
    \label{fig:crse_arch}
\end{figure*}

\subsection{Verdict grouping}

The protocol partitions $V(r)$ into verdict groups
$\{G_v\}_v$ with $G_v = \{i\in V(r) : \rho_i.\textit{verdict} = v\}$.
The candidate verdict group is selected by the
\textit{candidate\_rule} parameter, fixed to
\textsc{largest\_verdict} for the H-CSC main path:
$v^\star = \arg\max_v |G_v|$ with deterministic tie-break by the
fixed priority $\textsc{support} > \textsc{refute} > \textsc{insufficient}$
(applied only on exact ties). The candidate group's
$\textit{verdict\_margin}$ is the difference between the largest
and the second-largest group counts. If
$|G_{v^\star}| < 2f{+}1$ the protocol immediately emits
$\mathcal{O}^{\textsc{abort}}$ with typed reason
\texttt{verdict\_below\_quorum} and halts.

\subsection{Within-verdict semantic finality}
\label{subsec:phase2} 

Within $G_{v^\star}$, the protocol runs the
\textsc{angular\_component} core extraction at radius
$\thalpha$: the largest connected component of the angular-
threshold graph
$\{(i,j) : \angle(\mathbf{e}_i,\mathbf{e}_j) \le \thalpha\}$.
If the largest component has size $< 2f{+}1$, the within-
verdict semantic core check fails. Otherwise, the
admissibility predicate of
Definition~\ref{def:admissible-round} is evaluated on the
selected component: there must exist a centre node
$h \in \core$ such that
$\angle(\mathbf{e}_h, \mathbf{e}_j) \le \thalpha$ for every
$j \in \core$. If the predicate fails, the within-verdict
semantic core check fails. The H-CSC default uses
$\thalpha = 0.65$~rad ($\approx 37.2^\circ$), wider than the
strict-semantic CSC's $\thalpha = 0.55$~rad to accommodate
honest agents that agree on the verdict but use different
phrasings within the rationale.

\subsection{Hierarchical typed-decision step}

The typed-decision step combines the verdict-level and
embedding-level signals:

\begin{enumerate}
\item \textbf{Verdict-quorum gate.} If
$|G_{v^\star}| < 2f{+}1$, output
$\mathcal{O}^{\textsc{abort}}$ with reason
\texttt{verdict\_below\_quorum} (already triggered in the
verdict-grouping subsection above).
\item \textbf{Semantic-path attempt.} If the within-verdict
semantic core is admissible (size $\ge 2f{+}1$ and the radius
predicate passes), proceed to the semantic-commit
construction (\S\ref{sec:protocol-aggregate-digest}).
\item \textbf{Verdict-fallback gate.} Otherwise, if
$\textit{verdict\_margin} \ge 1$ (the
\textsc{margin\_1} fallback condition, fixed in the H-CSC main
path), proceed to the verdict-commit construction.
\item \textbf{Both paths failed.} Otherwise, output
$\mathcal{O}^{\textsc{abort}}$ with a typed reason recording
which gates failed.
\end{enumerate}

The fallback condition $\textit{margin}\ge 1$ is a calibrated
empirical operating point, not a deterministic honest-reference
guarantee: with $n=10$, $f=2$, and $\textit{top\_count}(V) \ge
2f{+}1 = 5$, $\textit{margin}\ge 1$ selects a unique top verdict
group and empirically recovers coverage on real-LLM rounds whose
honest agents agree at the verdict level but disperse at the
rationale level. Deterministic honest-reference validity would
require $\textit{verdict\_margin}^{\min} > f$ instead, as proven
in Theorem~\ref{thm:hcsc-verdict-validity}
(Section~\ref{sec:correctness}). We retain
\textsc{margin\_1} as the operating point because (i) it
preserves coverage on rounds the deterministic $m > f$
condition would abort, and (ii) the empirical
$\texttt{invalid\_hmaj}$ rate at $m = 1$ stays at or below
$0.04$ across all MVR-50 and MVR-100 benchmarks
(Section~\ref{sec:eval-setup}).

\subsection{Aggregation, quantisation, and digest binding}
\label{sec:protocol-aggregate-digest}
\label{subsec:phase3} 

\textbf{Semantic path.} The protocol aggregates the selected
within-verdict core $\core$ by Euclidean geometric
median~\cite{pillutla2022,blanchard2017,yin2018median}:
\begin{equation*}
\ystar \;=\; \arg\min_{\mathbf{y}}\,\sum_{i\in\core}
\|\mathbf{y} - \mathbf{e}_i\|_2,
\end{equation*}
solved iteratively to within numerical tolerance and re-
normalised to the unit sphere. The aggregate $\ystar$ is
quantised by $Q_\eta$ to a deterministic lattice point
$\widetilde{\mathbf{y}} = Q_\eta(\ystar)$. The semantic digest
is
\begin{equation*}
\hstar_{\textsc{sem}} \;=\; H\bigl(\,\texttt{"semantic\_commit"}\,\Vert\,\widetilde{\mathbf{y}}\,\Vert\, h_\pi \,\Vert\, r \,\Vert\, v^\star\,\bigr),
\end{equation*}
where $h_\pi$ is the parameter digest binding the parameter
tuple $\pi$ and the leading byte string
\texttt{"semantic\_commit"} is an explicit \emph{domain
separator} that prevents cross-type digest collision with the
verdict branch below.

\textbf{Verdict path.} The protocol assembles a verdict-only
payload
$\textsc{VerdictPayload}(v^\star, |G_{v^\star}|, \textit{verdict\_margin}, n, f, r)$
as a canonical serialisation. The verdict digest is
\begin{equation*}
\hstar_{\textsc{ver}} \;=\; H\bigl(\,\texttt{"verdict\_commit"}\,\Vert\,\textsc{VerdictPayload} \,\Vert\, h_\pi \,\Vert\, r\,\bigr).
\end{equation*}
A \textsc{verdict\_commit} explicitly does not carry an
embedding aggregate $\widetilde{\mathbf{y}}$; the
$\mathbf{no\_semantic\_aggregate} = \textsc{true}$ flag is
present on the commit object.

\subsection{Quorum certificate}
\label{subsec:phase4} 

For both commit types, the protocol collects a certificate
$\cert{\hstar} = \{\sigma_j\}$ of $2f{+}1$ distinct signers
drawn from the relevant signer set. For the semantic path,
signers may be any $2f{+}1$ distinct valid signers from
$\core$ (the within-verdict admissible core). For the verdict
path, signers may be any $2f{+}1$ distinct valid signers from
$G_{v^\star}$ (the candidate verdict group); the protocol does
not restrict the signer set to the first $2f{+}1$ agents by
id, so commit liveness does not depend on Byzantine
cooperation when $\Sigma_{\textit{src}}$ contains at least
$2f{+}1$ honest agents
(Theorem~\ref{thm:hcsc-termination}). Each $\sigma_j$ is an authenticator over the
appropriate digest under agent $j$'s identity. We use a
\emph{logical} signature simulator that preserves per-signer
uniqueness and the $2f{+}1$ distinct-signer threshold;
production threshold-signature schemes (e.g.\ BLS, FROST) are a
standard substitution and are out of scope for this paper. If
fewer than $2f{+}1$ distinct signatures arrive, the protocol
aborts with typed reason \texttt{insufficient\_signers}
regardless of which path produced the candidate digest.

\subsection{Commit/abort decision}

The per-round decision is a typed value
(Definition~\ref{def:csc-object}): a \textsc{semantic\_commit}
object, a \textsc{verdict\_commit} object, or an \emph{abort}
object with one of the closed-set typed reasons. The protocol
does not produce a ``soft'' commit, a partial commit, or a
commit on a verdict group below $2f{+}1$. Two honest verifiers
that recompute the typed-decision step from the published
commit object agree on the same typed outcome and the same
digest (Theorem~\ref{thm:agreement}). The
\textsc{semantic\_commit} branch is strictly stronger than the
\textsc{verdict\_commit} branch: a \textsc{semantic\_commit} carries
both an embedding-backed aggregate and a verdict-bound digest,
while a \textsc{verdict\_commit} carries only a verdict-level
payload digest with the explicit
$\mathbf{no\_semantic\_aggregate} = \textsc{true}$ flag.
Downstream consumers that require an embedding-backed semantic
representation must check the $\mathbf{commit\_type}$ field on
the returned object before treating it as a semantic
representation.

\subsection{Why H-CSC is not majority vote}

\sloppy
H-CSC differs from \texttt{majority\_vote} and from
\texttt{confidence\_weighted\_vote} on three axes. First,
\textbf{certificate emission}: every H-CSC commit (semantic or
verdict) carries a $2f{+}1$ distinct-signer certificate; the
verdict-only baselines do not. Second, \textbf{typed
finality}: the H-CSC commit object distinguishes
\textsc{semantic\_commit} from \textsc{verdict\_commit}, providing
an embedding-backed digest on rounds where the within-verdict
semantic core is admissible. Third, \textbf{safety floor under
fragile margins}: H-CSC requires $|G_{v^\star}| \ge 2f{+}1$ AND
$\textit{verdict\_margin} \ge 1$ for the verdict path;
\texttt{majority\_vote} commits regardless. The fair
certificate-emitting baseline against H-CSC is therefore not
\texttt{majority\_vote} but B3
(\texttt{certificate\_wrapped\_majority}), which emits the same
certificate envelope. Section~\ref{sec:eval-setup} reports the
B3 comparison explicitly: on the headline coverage and safety
axes H-CSC is numerically close to B3 (point-estimate gaps
$\le 0.02$ in commit rate and $\le 0.002$ in valid-commit
coverage), and the contribution attributable to the embedding
axis is the typed \textsc{semantic\_commit} outcome on $74\%$ /
$72\%$ of rounds.

\section{Correctness Analysis}
\label{sec:correctness}
\label{sec:theory} 

H-CSC does not have \emph{a} correctness guarantee. It has four,
they are of different strengths, they rest on different
assumptions, and --- the point this section is organised
around --- \textbf{they are not nested}. A round can satisfy the
first three and fail the fourth, and one of our own MVR-50 rounds
does exactly that. We therefore state the guarantees as a
\emph{lattice} (\S\ref{sec:guarantee-lattice}), prove the two
cross-level implications that do hold
(\S\ref{sec:level4}), exhibit the realised attack that the
shipped configuration admits (\S\ref{sec:margin-gate}), record a
mismatch between our admissibility \emph{definition} and our
admissibility \emph{implementation}
(\S\ref{sec:ball-vs-component}), and close by discharging the
encoder-determinism assumption that Level~1 rests on
(\S\ref{sec:encoder-determinism}).

The four levels map onto the five theorem statements of v1
(agreement, semantic validity, verdict validity, safe abort,
conditional termination), which are retained unchanged in
substance; safe abort and termination are properties of the
control flow rather than of a commit, and are stated separately
in \S\ref{sec:abort-termination}. The decomposition mirrors the
agreement / validity / termination split standard in classical
Byzantine
agreement~\cite{lamport1982byzantine,castro1999practical,dolev1986reaching,yin2019hotstuff},
refined by the typed commit object of
Definition~\ref{def:csc-object}.

\subsection{The four-level guarantee lattice}
\label{sec:guarantee-lattice}

\begin{table*}[t]
\centering
\caption{\textbf{The four-level guarantee lattice.} L1--L4 rise
in what they claim about a published commit --- form, then
witness, then geometry, then content --- but they are
\textbf{not} a chain: no level implies the next
(Remark~\ref{rem:not-a-chain}). The only cross-level implication
we can prove runs from a \emph{strengthened} L3 core condition to
L4 (Theorem~\ref{thm:det-semantic-validity} and
Lemma~\ref{lem:containment}). Assumption numbers refer to
\S\ref{sec:correctness-assumptions}.}
\label{tab:guarantee-lattice}
\footnotesize
\setlength{\tabcolsep}{4pt}
\setlength{\tabcolsep}{3pt}
\begin{tabular}{@{}>{\raggedright\arraybackslash\hspace{0pt}}p{0.12\linewidth} >{\raggedright\arraybackslash\hspace{0pt}}p{0.22\linewidth} >{\raggedright\arraybackslash\hspace{0pt}}p{0.21\linewidth} >{\raggedright\arraybackslash\hspace{0pt}}p{0.13\linewidth} >{\raggedright\arraybackslash\hspace{0pt}}p{0.13\linewidth}@{}}
\toprule
\textbf{Level} & \textbf{What is guaranteed} & \textbf{Under which assumption} & \textbf{Deterministic or empirical} & \textbf{Commit types covered} \\
\midrule
\textbf{L1} \newline digest-level agreement
& Two honest verifiers that publish in round $r$ publish the same
  $\mathbf{commit\_\allowbreak type}$ and the same digest $\hstar$; two honest
  nodes that abort agree on the typed reason
  (Theorem~\ref{thm:hcsc-agreement},
  Remark~\ref{rem:agreement-abort-abort}).
& Assumption~\ref{assn:auth-channels} (authenticated channels,
  honest sign-once); Assumption~\ref{assn:deterministic-cE}
  (byte-deterministic $\cE$, canonicalisation, $Q_\eta$);
  successful round (\S\ref{subsec:net}).
& Deterministic, \emph{conditional on
  Assumption~\ref{assn:deterministic-cE}} --- which is the weakest
  link (\S\ref{sec:encoder-determinism}).
& \textsc{semantic\_\allowbreak commit}, \textsc{verdict\_\allowbreak commit},
  \textsc{abort} \\
\addlinespace
\textbf{L2} \newline certificate validity
& $\cert{\hstar}$ binds $2f{+}1$ distinct identities, hence at
  least $f{+}1$ honest signatures, to $\hstar$
  (Lemma~\ref{lem:cert-uniqueness}).
& Assumption~\ref{assn:auth-channels} only. No encoder
  dependence, no admissibility conditioning.
& Deterministic. The strongest and cheapest level.
& \textsc{semantic\_\allowbreak commit}, \textsc{verdict\_\allowbreak commit} \\
\addlinespace
\textbf{L3} \newline semantic proximity
& $\angle(\ystar,\mathbf{c}_{\Honest}) \le
  \mathrm{diam}_{\angle}(\core) \le 2\thalpha$ for \emph{every}
  honest $\mathbf{c}_{\Honest}\in\core$, and
  $|\core\cap\Honest|\ge f{+}1$
  (Theorem~\ref{thm:hcsc-semantic-validity}).
& Assumption~\ref{assn:admissible-round} (semantic variant):
  $\core\subseteq G_{v^\star}$ is a $\thalpha$-\emph{ball} of size
  $\ge 2f{+}1$ (Definition~\ref{def:admissible-round}); $\agg$
  returns a point of $\mathrm{conv}(\core)$.
& Deterministic \emph{as specified}. The shipped implementation
  tests $\thalpha$-\emph{connectivity} instead
  (\S\ref{sec:ball-vs-component}).
& \textsc{semantic\_\allowbreak commit} only \\
\addlinespace
\textbf{L4} \newline verdict correctness
& $v^\star = \mathrm{majority}_{\Honest}(V)$, the
  evidence-bounded honest-reference verdict
  (Definition~\ref{def:two-validities}). Two sufficient
  conditions, of different strengths: $\textit{verdict\_\allowbreak margin}
  > f$ gives the honest-reference \emph{plurality}
  (Theorem~\ref{thm:hcsc-verdict-validity});
  $|\core| > (n{+}f)/2$ gives the \emph{strict honest majority}
  of the delivered view
  (Theorem~\ref{thm:det-semantic-validity}).
& One of the two conditions above. \textbf{Neither is enforced at
  the shipped operating point}, where the verdict branch gates at
  $\textit{verdict\_\allowbreak margin}\ge 1 \le f$ and the semantic branch
  applies no margin condition at all
  (\S\ref{sec:margin-gate}).
& \textbf{Empirical} at the shipped operating point; deterministic
  only under the conditions at left, at the coverage cost
  measured in Table~\ref{tab:margin-gate-cost}.
& \textsc{semantic\_\allowbreak commit}, \textsc{verdict\_\allowbreak commit} \\
\bottomrule
\end{tabular}
\end{table*}

L1 is about the \emph{form} of the outcome: honest parties do not
disagree about what was decided. L2 is about the \emph{witness}:
whatever was decided carries an unforgeable, honest-majority-backed
receipt. L3 is about the \emph{geometry} of the payload: the
committed aggregate is not far, in the encoder's metric, from the
honest proposals that back it. L4 is about the \emph{content}: the
verdict is the one the honest agents would have produced. Only
L4 is a statement a downstream consumer of the verdict actually
cares about, and only L4 is not deterministic in the shipped
configuration. That asymmetry is the honest summary of this
paper's formal contribution.

\begin{remark}[L1--L4 is not a chain]
\label{rem:not-a-chain}
None of $\text{L1}\Rightarrow\text{L2}$,
$\text{L2}\Rightarrow\text{L3}$,
$\text{L3}\Rightarrow\text{L4}$ holds, and neither does any
reverse implication.
\begin{itemize}
\item L1 and L2 hold on \emph{every} published commit, including
one whose committed verdict is wrong; they say nothing about
content. They are satisfied by construction on the counterexample
below.
\item L3 fails on every \textsc{verdict\_commit} by definition
(no semantic aggregate exists to bound), while L1, L2 and ---
empirically --- L4 all hold on such rounds. So
$\text{L4}\not\Rightarrow\text{L3}$.
\item L3 can hold while L4 fails. This is not hypothetical:
Example~\ref{ex:tiebreak-capture} is a round of our own MVR-50
evaluation that satisfies L1, L2 and L3 and violates L4.
\end{itemize}
The single cross-level implication that \emph{does} hold is the
strengthened one: a core large enough to satisfy
$|\core| > (n{+}f)/2$ forces both a strict honest majority
(Theorem~\ref{thm:det-semantic-validity}) and a verdict margin
$> f$ (Lemma~\ref{lem:containment}). At the shipped operating
point that condition is not required, and consequently no
implication is available.
\end{remark}

\begin{example}[Tie-break capture: L1 $\wedge$ L2 $\wedge$ L3
$\wedge \neg$ L4]
\label{ex:tiebreak-capture}
Task \texttt{cfever\_01019} of MVR-50, static mode, $n=10$,
$f=2$. The eight honest agents split
$5$ \textsc{insufficient} / $3$ \textsc{support}; the two
Byzantine agents both submit \textsc{support}. The delivered view
is therefore $|G_{\textsc{support}}| = 5$,
$|G_{\textsc{insufficient}}| = 5$,
$\textit{verdict\_margin} = 0$ --- an exact tie. Stage~2's
deterministic tie-break
($\textsc{support} > \textsc{refute} > \textsc{insufficient}$)
selects $v^\star = \textsc{support}$, and
$|G_{v^\star}| = 5 = 2f{+}1$ clears the verdict-quorum gate.
Stage~3 finds all five members of $G_{\textsc{support}}$ mutually
within $\thalpha$, so $\core = G_{\textsc{support}}$ with
$|\core| = 5 \ge 2f{+}1$; Stage~4a emits a
\textsc{semantic\_commit} on \textsc{support} with a Byzantine
share of $2/5 = 0.40$ inside the certified core.

L1 holds (the decision is a deterministic function of $V(r)$).
L2 holds ($5$ distinct signers, hence at least $3$ honest).
L3 holds ($\core$ is a $\thalpha$-ball, so
$\angle(\ystar,\mathbf{c}_{\Honest})\le 2\thalpha$ for each of the
three honest core members). L4 fails: the honest-reference
plurality is \textsc{insufficient} ($5$ versus $3$), and so is the
Climate-FEVER gold label. This single round is the entire
$\texttt{invalid\_hmaj} = 0.02$ that H-CSC reports on static
MVR-50 (\S\ref{sec:eval-setup}).

Two adversaries suffice because the adversary is not attacking the
verdict count --- it is attacking the \emph{tie-break}. Both
deterministic conditions of L4 correctly reject this round:
$\textit{verdict\_margin} = 0 \not> f$, and
$|\core| = 5 \not> (n{+}f)/2 = 6$. The vulnerability is that the
shipped protocol tests neither.
\end{example}

\subsection{Levels 1 and 2: agreement and certificate validity}
\label{subsec:liveness} 
\label{sec:level12}

\begin{lemma}[Deterministic typed digest consistency]
\label{lem:digest-consistency}
Given Assumption~\ref{assn:deterministic-cE} and a fixed
parameter tuple $\pi$, two honest signers that select the same
candidate verdict group $G_{v^\star}$ and the same
within-verdict core $\core$ (when the semantic path triggers)
compute identical
$\widetilde{\mathbf{y}} = Q_\eta(\agg(\core))$ and identical
$\hstar_{\textsc{sem}}$. Two honest signers that select the same
$G_{v^\star}$ and the same verdict-fallback payload (when the
verdict path triggers) compute identical
$\hstar_{\textsc{ver}}$.
\end{lemma}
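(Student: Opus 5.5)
The plan is to show that every quantity entering either digest is a deterministic function of inputs both honest signers share, and then use the fact that $H$ is a function. Throughout I work under Assumption~\ref{assn:deterministic-cE}. It makes $\mathrm{CanonText}$, $\cE$, the renormalisation step and $Q_\eta$ byte-deterministic under the fixed parameter tuple $\pi$. I treat the aggregator $\agg$ as part of $\pi$, so it is covered by the same assumption.

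\emph{Semantic branch.} The signers share the delivered view through RBC Agreement, and they select the same core $\core$ by hypothesis. So they hold the same index set and the same structured proposals $\rho_i$ for $i\in\core$.

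\begin{enumerate}
\item Applying $\cE\circ\mathrm{CanonText}$ and then normalisation yields byte-identical multisets $\{\mathbf{e}_i : i\in\core\}$.
\item Passing this multiset to $\agg$ yields the same $\ystar$. The iterative geometric-median solver must be fixed in $\pi$ as well: iteration order, initialisation, tolerance and floating-point mode. This way its output is a fixed function of its input, independent of where it is run.
\item The degeneracy check is a predicate on $\ystar$, so it returns the same answer for both signers. If the check fails, both take the abort branch and the lemma is vacuous for this case. If it passes, both renormalise identically and compute the same $\widetilde{\mathbf{y}}=Q_\eta(\ystar)$.
\item The remaining digest inputs are shared: the domain separator is a constant, $h_\pi$ is fixed with $\pi$, the round id $r$ is common, and $v^\star$ is the same because the signers select the same $G_{v^\star}$. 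Hence $\hstar_{\textsc{sem}}$ is identical.
\end{enumerate}

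\emph{Verdict branch.} Here the payload fields are $(v^\star,|G_{v^\star}|,\textit{verdict\_margin},n,f,r)$. The first two are determined by the shared $G_{v^\star}$. The margin depends on the other groups' sizes, all computed from the shared view, and the hypothesis of the same payload covers it in any case. The parameters $n$ and $f$ are fixed in $\pi$. Canonical serialisation is deterministic, and the other inputs to $H$ are common, so $\hstar_{\textsc{ver}}$ agrees.

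I also plan to note that the two domain separators prevent a semantic digest from coinciding syntactically with a verdict digest. This is not needed for the lemma, but it anticipates its use in Theorem~\ref{thm:hcsc-agreement}.

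\textbf{Main obstacle.} The hard step is the aggregation and quantisation in the semantic branch. Geometric median is computed numerically, and a point near a quantisation cell boundary could round differently if the computations differ by one ulp. I will not try to prove robustness of $Q_\eta$ to such perturbations. Instead, I will state explicitly that Assumption~\ref{assn:deterministic-cE} is read as covering the whole pipeline $Q_\eta\circ\agg\circ\cE\circ\mathrm{CanonText}$ as one byte-deterministic map. That is, same bytes in gives same bytes out, with no approximation argument involved. With that reading the proof is a straightforward composition of functions, and the fragility is deferred to \S\ref{sec:encoder-determinism}, as the lattice table already indicates.
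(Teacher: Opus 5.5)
Your proposal is correct and follows essentially the same route as the paper. Both arguments chase the pipeline stage by stage and show that every input to $H$, on either branch, is a deterministic function of data the two signers share; both also note that the geometric-median step needs a hypothesis beyond the literal Assumption~\ref{assn:deterministic-cE}. You fold that hypothesis into a strengthened reading of the assumption, treating $Q_\eta\circ\agg\circ\cE\circ\mathrm{CanonText}$ as one byte-deterministic map; the paper instead states it as an explicit ``identical execution environment'' clause on $\agg$, and like you defers the resulting fragility to \S\ref{sec:encoder-determinism}.
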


\begin{proof}[Proof sketch]
For the semantic path: byte-determinism of $\cE$ and of the
canonical proposal text imply identical $\mathbf{e}_i$ for
$i \in \core$; the verdict grouping is a deterministic partition
of $V(r)$; the angular-component extraction with deterministic
tie-break yields identical $\core$; $\agg$ is a deterministic
function of an identical multiset under an identical execution
environment; $Q_\eta$ is a fixed-point round; $H$ is
deterministic on byte input; the round id $r$ and committed
verdict $v^\star$ are equal at both signers. For the verdict
path: the verdict payload is a canonical serialisation of
$(v^\star, |G_{v^\star}|, \textit{verdict\_margin}, n, f, r)$,
which is identical at both signers; the rest follows by
$h_\pi$ and $H$ determinism. Detail in
Appendix~\ref{app:proof-digest}. The clause ``under an identical
execution environment'' is not free, and is the subject of
\S\ref{sec:encoder-determinism}.
\end{proof}

\begin{lemma}[Quorum certificate uniqueness]
\label{lem:cert-uniqueness}
Under Assumption~\ref{assn:auth-channels}, no certificate
$\cert{\hstar}$ of size $\ge 2f{+}1$ can carry two valid
signatures from the same agent. In particular, $\cert{\hstar}$
witnesses at least $f{+}1$ honest signers.
\end{lemma}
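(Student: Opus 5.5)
The argument has two parts: first that no agent can contribute two valid signatures to $\cert{\hstar}$, then that $2f{+}1$ distinct signers leave at least $f{+}1$ honest ones. Both rest only on Assumption~\ref{assn:auth-channels} (authenticated channels, unforgeable signatures, honest sign-once) and on how Stage~5 of Algorithm~\ref{alg:certified-commitment-round} builds the certificate. No encoder or admissibility assumption is needed, which matches the L2 row of Table~\ref{tab:guarantee-lattice}.

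\emph{Step 1 (at most one signature per agent).} I will treat the certificate as a set indexed by signer identity. In Stage~5, $\Sigma = \{\sigma_j : j \in \Sigma_{\textit{src}},\ \sigma_j \text{ valid on } \hstar\}$ is indexed by $j$, so it holds at most one entry per identity. A verifier checking $\cert{\hstar}$ offline runs the same test: it counts \emph{distinct} identities $j$ whose signature verifies on $\hstar$ under $j$'s public key.

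Could agent $j$ appear twice? That would require two entries attributed to $j$ to be counted as separate signers. The identity-keyed count collapses them into one, so this cannot happen. Could a signature be attributed to $j$ that $j$ did not produce? Unforgeability rules this out, so every valid $\sigma_j$ was produced by $j$ itself. A Byzantine agent can sign $\hstar$ several times, but every copy is attributed to the same $j$ and counts once.

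The threshold $|\Sigma| \ge 2f{+}1$ therefore counts $2f{+}1$ pairwise distinct identities. This is the first claim.

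\emph{Step 2 (honest count).} By Step~1 the signer set $S$ of $\cert{\hstar}$ has $|S| \ge 2f{+}1$ distinct members of $\Nset$. Since $|\Byz| \le f$, we get
\[
|S \cap \Honest| \;\ge\; |S| - |S \cap \Byz| \;\ge\; 2f{+}1 - f \;=\; f{+}1 .
\]
Each of these honest signers signed $\hstar$ itself, again because signatures cannot be forged. So the certificate witnesses at least $f{+}1$ honest signers. A single pigeonhole count suffices; quorum intersection is not needed here.

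\emph{Main obstacle.} The only delicate point is stating precisely what ``valid signature from the same agent'' means, so that Step~1 is not circular. I will take the certificate-validity predicate to be identity-deduplicated by definition, which is exactly the ``distinct-signer'' threshold of \S\ref{subsec:phase4}, and make the logical signature simulator's per-signer uniqueness explicit. I will also note that the honest sign-once clause is not used for uniqueness. It matters later, in Theorem~\ref{thm:hcsc-agreement}, to rule out honest signatures on conflicting digests. Once the definition is fixed this way, the rest is the counting above.
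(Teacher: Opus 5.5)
Your proposal is correct and follows essentially the same route as the paper: uniqueness holds by construction, because $\Sigma$ is an identity-deduplicated set in which the runner and verifier count at most one valid signature per identity, so an equivocating Byzantine identity still counts once; the $f{+}1$ bound is the same pigeonhole count $(2f{+}1)-f$. You are also right that the honest sign-once clause does no work in this lemma. The paper cites it in passing, but it is only needed in Theorem~\ref{thm:hcsc-agreement}.
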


\begin{proof}[Proof sketch]
The certificate verifier counts at most one signature per
signer identity: per-signer uniqueness in $\Sigma$ is enforced
by the runner. A Byzantine identity may sign multiple distinct
digests with its own key (signature unforgeability rules out
\emph{others} forging that identity, not the identity itself
equivocating), but each such identity contributes at most one
counted signature to any single certificate. Honest identities
satisfy the sign-once rule of
Assumption~\ref{assn:auth-channels} and produce a signature on
at most one digest per round. With $|\Sigma| \ge 2f{+}1$
distinct identities and at most $f$ Byzantine identities, the
pigeonhole principle gives at least
$(2f{+}1) - f = f{+}1$ honest signatures on the bound digest.
This is identical for \textsc{semantic\_commit} and
\textsc{verdict\_commit} certificates.
\end{proof}

\begin{theorem}[L1: typed digest-level agreement]
\label{thm:hcsc-agreement}
\label{thm:agreement} 
Under Assumptions~\ref{assn:auth-channels} and
\ref{assn:deterministic-cE}, in a \emph{successful round} as
defined in \S\ref{subsec:net} (all honest nodes complete RBC
with the identical delivered view $V(r)$ over $I_r$), if the
protocol publishes typed commit objects $\mathcal{O}_i$ and
$\mathcal{O}_j$ at two honest verifiers in round $r$, then both
objects share the same $\mathbf{commit\_type}$ and the same
digest $\hstar$, and both certificates bind $2f{+}1$ distinct
signer identities to $\hstar$.
\end{theorem}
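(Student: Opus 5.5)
The plan is to show that every step of Algorithm~\ref{alg:certified-commitment-round} up to the digest is a deterministic function of the common delivered view $V(r)$ and the fixed parameter tuple $\pi$. Lemma~\ref{lem:digest-consistency} will then give digest equality, and Lemma~\ref{lem:cert-uniqueness} will give the certificate clause.

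First I fix the setting. In a successful round, RBC Agreement and Totality give every honest verifier the identical delivered set $\{(i,z_i): i\in I_r\}$. Assumption~\ref{assn:deterministic-cE} makes canonicalisation and $\cE$ byte-deterministic, so the honest verifiers hold the same $\rho_i$ and the same normalised $\mathbf{e}_i$. They therefore hold the same $V(r)$.

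Second, I walk the stages in order and show that each decision variable agrees.
\begin{itemize}
\item Stage~2 partitions $V(r)$ by verdict field. The $\arg\max$ uses a fixed tie-break order, so $v^\star$, $\textit{top\_count}$ and $\textit{verdict\_margin}$ coincide. The quorum-gate outcome therefore also coincides.
\item Stage~3 computes \textsc{LargestAngularComponent} with a deterministic tie-break, as in the proof of Lemma~\ref{lem:digest-consistency}. Hence $\core$, the flag $\textit{semantic\_path\_ok}$ and the $r^\star$ test coincide.
\item The branch taken (semantic, verdict fallback, or abort) is thus identical at both verifiers. Since both verifiers are assumed to \emph{publish} commit objects, neither took an abort branch, and they share the same $\mathbf{commit\_type}$.
\end{itemize}

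Third, I apply Lemma~\ref{lem:digest-consistency} separately on each branch.
\begin{itemize}
\item On the semantic branch, the same $\core$ gives the same $\widetilde{\mathbf{y}}$ and the same $\hstar_{\textsc{sem}}$. The degenerate-aggregate abort is also decided identically.
\item On the verdict branch, the same payload tuple gives the same $\hstar_{\textsc{ver}}$.
\item Domain separation rules out a cross-type coincidence. It is not actually needed here, because the types already agree.
\end{itemize}

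Finally, for the certificate clause, each published object passed the Stage~5 check $|\Sigma|\ge 2f{+}1$ over valid signatures on $\hstar$. Lemma~\ref{lem:cert-uniqueness}, under Assumption~\ref{assn:auth-channels}, says these are distinct identities, so both certificates bind at least $2f{+}1$ distinct signers to the common $\hstar$.

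The main obstacle is the second step, and specifically the claim that the core $\core$ is a function of $V(r)$ alone. Largest-component extraction needs a canonical tie-break among equal-size components. Angular comparisons against $\thalpha$ need bit-identical floating-point evaluation of $\angle(\mathbf{e}_i,\mathbf{e}_j)$. The geometric-median iteration must run in an identical execution environment. I would discharge all three by folding them explicitly into Assumption~\ref{assn:deterministic-cE}: a deterministic execution environment pinned by $h_\pi$, and a lexicographic tie-break by sorted agent ids. I would flag in the write-up that without this assumption two honest verifiers could split between the semantic and the verdict branch.

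A secondary subtlety is that the signatures themselves may differ between verifiers. The theorem only asks that the digest and the distinct-signer count agree, so I would state that signer sets need not coincide.
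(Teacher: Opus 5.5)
Your proof is correct, but it gets digest agreement by a different route from the paper.

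\textbf{Your route.} You argue by direct recomputation. Both honest verifiers run the same deterministic pipeline on the same $(V(r),\pi,r)$, so they take the same branch and, by Lemma~\ref{lem:digest-consistency}, compute the same $\hstar$. The certificate enters only at the end, for the distinct-signer clause.

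\textbf{The paper's route.} The paper defines the canonical typed digest $D^\star(r)$ of the round and argues through the certificates. Each of $\Sigma_i,\Sigma_j$ contains at least $f{+}1$ honest signers. Every honest signer signs only $D^\star(r)$, by the sign-once clause of Assumption~\ref{assn:auth-channels}. Unforgeability then forces $\hstar_i=\hstar_j=D^\star(r)$, and the domain separators give matching $\mathbf{commit\_type}$.

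\textbf{What each buys.} Your version is more elementary. For the digest claim it needs neither sign-once, nor unforgeability, nor domain separation. It is essentially the argument the paper itself uses in Remark~\ref{rem:agreement-abort-abort} for the abort case. The paper's version proves more: no $2f{+}1$ certificate on any digest other than $D^\star(r)$ can exist in the round. So agreement also covers a verifier that only checks signatures on an object it did not compute, such as one relayed by a Byzantine party or re-verified offline. That is the property the certificate exists to provide, whereas in your argument the certificate does no work for agreement.

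\textbf{The shared hypothesis.} Both routes need the whole decision pipeline, not just $\cE$, to be bit-deterministic across honest nodes. The paper hides this in the claim that $D^\star(r)$ is canonical and in the ``identical execution environment'' clause of Lemma~\ref{lem:digest-consistency}. Your explicit widening of Assumption~\ref{assn:deterministic-cE} covers the angle comparisons, the component tie-break and the geometric-median iteration. It makes visible a hypothesis the paper also relies on, rather than adding a new one.
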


\begin{proof}[Proof sketch]
The argument does not require quorum-intersection between
$\Sigma_i$ and $\Sigma_j$ (which can be empty for
$n > 3f{+}1$, including the experimental $n = 10$, $f = 2$).
Instead: RBC delivers an identical view $V(r)$ to every honest
node; H-CSC's typed decision on $V(r)$ and the frozen
parameter tuple $\pi$ is deterministic, hence produces a
canonical typed digest $D^\star(r)$ per round; an honest agent
in round $r$ signs at most one digest, namely $D^\star(r)$
(Stage~5 signing rule). Each of $\Sigma_i,\Sigma_j$ contains
at least $(2f{+}1) - f = f{+}1$ honest signers, and every
honest signer in either set signed $D^\star(r)$. Signature
unforgeability (Assumption~\ref{assn:auth-channels}) and the
typed-digest domain separators
(\S\ref{sec:protocol-aggregate-digest}) together force
$\hstar_i = \hstar_j = D^\star(r)$ and matching
$\mathbf{commit\_type}$. The remainder follows from
Lemmas~\ref{lem:digest-consistency} and
\ref{lem:cert-uniqueness}. Detailed proof in
Appendix~\ref{app:proof-agreement}.
\end{proof}

\begin{remark}
\label{rem:agreement-abort-abort}
By the same deterministic-decision argument, two honest nodes
in a successful round also agree on the abort outcome and the
typed reason code when both abort: the canonical typed digest
$D^\star(r)$ is $\bot$ for abort, and the typed reason code is
deterministic in $(V(r),\pi,r)$. The mixed case (one honest
commits, another aborts in the same successful round) cannot
arise: both honest nodes execute identical
Algorithm~\ref{alg:certified-commitment-round} on identical
$(V(r),\pi,r)$, so their commit-or-abort decisions are
identical, and a $2f{+}1$ certificate can only be assembled
when the local protocol returns a typed commit.
\end{remark}

\subsection{Level 3: semantic proximity}
\label{subsec:validity}  
\label{sec:level3}

\begin{theorem}[L3: semantic validity under a semantic-admissible round]
\label{thm:hcsc-semantic-validity}
\label{thm:validity} 
Under Assumption~\ref{assn:admissible-round} (semantic-admissible
variant), if the protocol publishes a \textsc{semantic\_commit}
object $\mathcal{O}^{\textsc{sem}}$ in round $r$ with selected
core $\core \subseteq G_{v^\star}$, then for any honest centre
$\mathbf{c}_{\Honest} \in \core \cap \Honest$,
\begin{equation*}
\angle(\ystar,\mathbf{c}_{\Honest}) \;\le\;
\mathrm{diam}_{\angle}(\core) \;\le\; 2\thalpha.
\end{equation*}
The committed verdict $\textsc{verdict}(\mathcal{O}^{\textsc{sem}}) = v^\star$
is the verdict held by every member of $\core$, and
$|\core \cap \Honest| \ge f{+}1$.
\end{theorem}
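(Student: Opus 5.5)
The plan is to prove the three claims of Theorem~\ref{thm:hcsc-semantic-validity} separately, in increasing order of difficulty: verdict uniformity of $\core$, the honest-count bound, and then the angular bound. The first two are bookkeeping.

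\emph{Verdict uniformity.} Stage~3 extracts $\core$ from $G_{v^\star}$, and $G_{v^\star}$ is by definition the set of delivered proposals whose verdict field equals $v^\star$. Hence every member of $\core$ holds $v^\star$. The semantic digest binds $v^\star$, so $\textsc{verdict}(\mathcal{O}^{\textsc{sem}}) = v^\star$.

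\emph{Honest count.} Assumption~\ref{assn:admissible-round} gives $|\core| \ge 2f{+}1$, and $|\core\cap\Byz| \le |\Byz| \le f$. By pigeonhole, exactly as in Lemma~\ref{lem:cert-uniqueness}, $|\core\cap\Honest| \ge f{+}1$. In particular $\core\cap\Honest$ is nonempty, so the quantifier over honest centres is not vacuous.

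\emph{Angular bound.} I would split this into two inequalities.
\begin{enumerate}
\item $\mathrm{diam}_{\angle}(\core) \le 2\thalpha$. Definition~\ref{def:admissible-round} supplies a centre $h\in\core$ with $\angle(\mathbf{e}_h,\mathbf{e}_j)\le\thalpha$ for all $j\in\core$. Since the angular distance on the unit sphere is a metric, the triangle inequality through $\mathbf{e}_h$ gives $\angle(\mathbf{e}_i,\mathbf{e}_j)\le 2\thalpha$ for any $i,j\in\core$.
\item $\angle(\ystar,\mathbf{c}_{\Honest}) \le \mathrm{diam}_{\angle}(\core)$. This is the main obstacle. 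The assumption only says the unnormalised aggregate lies in $\mathrm{conv}(\core)$; the committed $\ystar$ is its radial projection to the sphere.
\end{enumerate}

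For the second inequality I would first check that the projection is well defined. All vectors of $\core$ lie within angle $\thalpha=0.65<\pi/2$ of $\mathbf{e}_h$, so they lie in an open hemisphere centred at $\mathbf{e}_h$. Every convex combination then has strictly positive inner product with $\mathbf{e}_h$, hence is nonzero, and normalisation is legitimate. This is consistent with the degenerate-aggregate abort never firing on such rounds.

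Next I would show that the normalised point lies within angle $D=\mathrm{diam}_{\angle}(\core)$ of each fixed $\mathbf{c}=\mathbf{c}_{\Honest}\in\core$. First, every $\mathbf{e}_j\in\core$ satisfies $\angle(\mathbf{e}_j,\mathbf{c})\le D$, i.e.\ $\langle \mathbf{e}_j,\mathbf{c}\rangle \ge \cos D$. The set $\{\mathbf{x} : \langle \mathbf{x},\mathbf{c}\rangle \ge \cos D\}$ is convex, so the aggregate $\mathbf{y}$ also satisfies $\langle \mathbf{y},\mathbf{c}\rangle \ge \cos D$. Second, $\|\mathbf{y}\|\le 1$ by convexity of the unit ball. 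Dividing, $\langle \mathbf{y}/\|\mathbf{y}\|,\mathbf{c}\rangle \ge \cos D/\|\mathbf{y}\| \ge \cos D$, which is valid provided $\cos D\ge 0$, i.e.\ $D\le\pi/2$. This gives $\angle(\ystar,\mathbf{c})\le D$.

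Because $D\le 2\thalpha = 1.3<\pi/2$, the sign condition holds at the shipped parameters. I would state $\thalpha\le\pi/4$ explicitly as the side condition under which the bound is proved: for larger $\thalpha$ the projection step can fail and only the hemisphere argument survives.

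Two loose ends to flag. The numerical tolerance of the iterative geometric-median solver is absorbed into the assumption that $\agg$ returns a point of $\mathrm{conv}(\core)$. Quantisation is not covered: the theorem bounds $\ystar$, not $Q_\eta(\ystar)$, so no $\eta$ term appears.

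Finally, I would note that the proof uses the $\thalpha$-\emph{ball} form of the core, so it applies to the specification and not to the connectivity-based implementation discussed in \S\ref{sec:ball-vs-component}.
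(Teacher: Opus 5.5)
Your proof is correct. The verdict-uniformity step, the pigeonhole count, and the bound $\mathrm{diam}_{\angle}(\core)\le 2\thalpha$ via the triangle inequality through the ball centre $h$ all match the paper. The genuine difference is the middle inequality $\angle(\ystar,\mathbf{c}_{\Honest})\le\mathrm{diam}_{\angle}(\core)$.

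The paper argues in Euclidean terms. It applies its convex-hull lemma ($\|\mathbf{z}-\mathbf{p}\|_2\le\mathrm{diam}(P)$ for $\mathbf{z}\in\mathrm{conv}(P)$) to the unnormalised aggregate. It then asserts that the chord--angle identity turns this into the angular bound ``after re-normalisation''. That conversion is exactly the step you flagged as the obstacle, and the paper does not supply it. The chord identity relates unit vectors only. A Euclidean bound on a point strictly inside the unit ball does not control the direction of its radial projection, and that projection can move it away from $\mathbf{c}_{\Honest}$.

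Your cap argument handles the projection directly: the set $\{\mathbf{x}:\langle\mathbf{x},\mathbf{c}\rangle\ge\cos D\}$ is convex, and dividing by $\|\mathbf{y}\|\le 1$ cannot lower a non-negative inner product. It also surfaces the hypothesis $\mathrm{diam}_{\angle}(\core)\le\pi/2$, which the paper's statement leaves implicit.

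Your remark that the step can fail beyond $\thalpha=\pi/4$ is right, and the side condition is needed. Place $\mathbf{c}$ at a pole. Place $\mathbf{p},\mathbf{q}$ at colatitude $\pi/2+0.1$ and longitudes $\pm\pi/8$. Place $h$ at colatitude $\approx 0.84$ on the meridian between them. This is a $\thalpha$-ball with $\thalpha\approx 0.91<\pi/2$ and diameter $\approx 1.671$ rad. Yet the normalised midpoint of $\mathbf{p}$ and $\mathbf{q}$ lies $\approx 1.679$ rad from $\mathbf{c}$. That midpoint is a point of $\mathrm{conv}(\core)$, which is all the L3 hypothesis asks of the aggregator.

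For $\pi/4<\thalpha<\pi/2$ only the outer $2\thalpha$ bound survives. It follows from spherical convexity of the $\thalpha$-cap around $h$ plus the triangle inequality through $h$; this is what the paper's phrase ``inside the same ball after re-normalisation'' actually delivers. At the shipped $\thalpha=0.65<\pi/4$ both routes reach the stated conclusion, but yours is the one in which every step is justified.
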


\begin{proof}[Proof sketch]
The within-verdict admissibility predicate guarantees a centre
node $h \in \core$ with
$\angle(\mathbf{e}_h, \mathbf{e}_j) \le \thalpha$ for all
$j \in \core$, hence
$\mathrm{diam}_{\angle}(\core) \le 2\thalpha$. The Euclidean
geometric median of points inside an angular ball of radius
$\thalpha$ lies inside the convex hull of those points, hence
inside the same ball after re-normalisation.
$|\core| \ge 2f{+}1$ and $|\Byz| \le f$ give
$|\core \cap \Honest| \ge f{+}1 \ge 1$. Every member of $\core$
shares the verdict $v^\star$ by construction (the core is
extracted within $G_{v^\star}$). Detailed proof in
Appendix~\ref{app:proof-validity}. The proof consumes the
\emph{ball} form of Definition~\ref{def:admissible-round}, which
\S\ref{sec:ball-vs-component} shows the implementation does not
enforce.
\end{proof}

\begin{remark}
Theorem~\ref{thm:hcsc-semantic-validity} is a \emph{relative}
bound: it bounds the deviation of the semantic commit from
honest centres inside the selected core. It does \emph{not}
state that the commit matches an external ground truth, and ---
by Remark~\ref{rem:not-a-chain} --- it does not imply L4. The
$2\thalpha$ ceiling is a \emph{worst-case admissibility bound},
not a tight empirical error bound; at the H-CSC main-path
operating point $\thalpha = 0.65$~rad ($\approx 37.2^\circ$)
this yields a ceiling of $\approx 74.5^\circ$, while the
empirical SCE values reported in
Section~\ref{sec:eval-retained}--\ref{sec:eval-setup} are
substantially smaller.
\end{remark}

\subsubsection{A mismatch between the predicate and its implementation}
\label{sec:ball-vs-component}

Definition~\ref{def:admissible-round} specifies a \emph{ball}:
$\core$ is admissible iff some centre $h\in\core$ satisfies
$\angle(\mathbf{e}_h,\mathbf{e}_j)\le\thalpha$ for every
$j\in\core$. Algorithm~\ref{alg:certified-commitment-round}
Stage~3 as printed also tests this, via
$r^\star=\min_{h\in\core}\max_{j\in\core}\angle(\mathbf{e}_h,\mathbf{e}_j)$.
Our reference implementation does not: it returns
$\textsc{LargestAngularComponent}(G_{v^\star},\thalpha)$ --- the
largest \emph{connected component} of the $\thalpha$-threshold
graph --- and does not re-test the ball predicate on it. We
report this because it changes what
Theorem~\ref{thm:hcsc-semantic-validity} actually certifies for
the numbers in \S\ref{sec:eval-setup}.

\begin{proposition}[Connectivity is a strictly weaker predicate]
\label{prop:ball-component}
Let $\core$ be connected in the $\thalpha$-threshold graph with
$|\core| = k$. Then
$\mathrm{diam}_{\angle}(\core) \le \min\{(k-1)\thalpha,\ \pi\}$,
and this bound is tight. Conversely, every $\thalpha$-ball is
$\thalpha$-connected but not vice versa; for $k\ge 3$ the
containment is strict.
\end{proposition}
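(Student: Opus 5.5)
The plan is to work on the unit sphere with the angular (geodesic) distance $\angle(\cdot,\cdot)$, which is a metric bounded above by $\pi$, and to prove the three claims in turn: the diameter bound, its tightness, and the strict containment between ball admissibility and connectivity.

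\textbf{Diameter bound.} Take any $i,j\in\core$. Because $\core$ is connected in the $\thalpha$-threshold graph, there is a simple path $i=u_0,u_1,\dots,u_\ell=j$ inside $\core$. A simple path has $\ell\le k-1$ edges, and each edge satisfies $\angle(\mathbf{e}_{u_t},\mathbf{e}_{u_{t+1}})\le\thalpha$. Applying the triangle inequality for the geodesic metric along the path gives $\angle(\mathbf{e}_i,\mathbf{e}_j)\le(k-1)\thalpha$. The bound $\pi$ holds trivially, since no two unit vectors are further apart than antipodal.

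\textbf{Tightness.} I would place $k$ points on a single great circle at consecutive angular spacing exactly $\thalpha$. When $(k-1)\thalpha\le\pi$, the endpoints are at distance $(k-1)\thalpha$ along the arc, and the geodesic distance equals the arc length. Only consecutive pairs need to be edges; extra chords within $\thalpha$ cannot occur because non-adjacent points are at distance at least $2\thalpha>\thalpha$. When $(k-1)\thalpha>\pi$, I would stretch or fold the chain so that the first and last points are antipodal while consecutive gaps stay $\le\thalpha$. This attains $\pi$, since $\lceil\pi/\thalpha\rceil\le k-1$ steps suffice. The main subtlety I expect is exactly this boundary regime: confirming that the geodesic distance between the endpoints is the arc length and not the shorter way round. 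I would handle it by restricting the equal-spacing construction to total arc $\le\pi$.

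\textbf{Ball implies connected.} If $h$ is a centre with every $j$ within $\thalpha$ of $h$, then every $j$ is adjacent to $h$. The threshold graph therefore contains a star rooted at $h$ and is connected.

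\textbf{Strictness for $k\ge3$.} Use the chain above with $k$ points spaced $\thalpha$ apart on a great circle, taking $\thalpha<\pi/(k-1)$ so that the arc does not wrap. Any candidate centre $h$ among the points has some chain endpoint at arc distance at least $\lceil (k-1)/2\rceil\thalpha\ge\thalpha\cdot 1$. To force strictness I would space the points slightly more than $\thalpha/2$ but at most $\thalpha$ apart, and the cleanest choice is exactly $\thalpha$. Then for $k\ge3$ every point has some other point at distance $\ge2\thalpha>\thalpha$, so no centre exists. The set is connected but not a ball.

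For $k\le2$, connectivity means the two points are within $\thalpha$, and either point serves as a centre, so the two notions coincide. This explains why strictness is claimed only from $k=3$.
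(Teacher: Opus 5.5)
The diameter bound, the star argument that every $\thalpha$-ball is $\thalpha$-connected, and the great-circle tightness construction all match the paper's proof. Your tightness argument is more complete than the paper's. The paper only exhibits the extremal chain when $(k-1)\thalpha\le\pi$. Your chain from $\mathbf{u}$ to $-\mathbf{u}$ (concretely, $k-1$ equal gaps of $\pi/(k-1)<\thalpha$) also attains $\pi$ in the other regime.

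\textbf{The gap is the strictness step at $k=3$, and it cannot be repaired because the claim is false there.} Take your chain with three points at angles $0,\thalpha,2\thalpha$. The middle point is at distance exactly $\thalpha$ from both endpoints, so it is a centre and the set \emph{is} a $\thalpha$-ball. Your assertion that for $k\ge 3$ ``every point has some other point at distance $\ge 2\thalpha$'' therefore fails for the middle point, and changing the spacing does not help.

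The structural reason is that $\core$ is a $\thalpha$-ball exactly when its induced threshold graph has a vertex adjacent to all others. Every connected graph on three vertices ($P_3$ or $K_3$) has such a vertex. The argument in your last paragraph for $k\le 2$ therefore goes through verbatim at $k=3$, and there the two predicates coincide.

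The paper's own proof uses the same three-point chain and makes the same mistake. Its claim that each of the three points is at distance $2\thalpha$ from another is false for the middle one. So the statement should read $k\ge 4$.

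Your own first estimate gives the correct threshold. In the equally spaced chain, every point lies at distance at least $\lceil (k-1)/2\rceil\thalpha$ from some endpoint. This exceeds $\thalpha$ exactly when $k\ge 4$, so the path $P_k$ is connected with no centre for all $k\ge4$.

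One smaller point: $\thalpha$ is a fixed protocol parameter, so you cannot ``take $\thalpha<\pi/(k-1)$''. For large $k$ you need another construction. For example, use a four-point chain at spacing $\thalpha$ (valid when $3\thalpha\le\pi$, as at $\thalpha=0.65$) and place the remaining $k-4$ points at one endpoint. This set stays connected and still has no centre.
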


\begin{proof}
A connected graph on $k$ vertices has a spanning tree, and any
two vertices are joined by a path of at most $k-1$ edges. Each
edge contributes at most $\thalpha$ to the angular distance, and
$\angle$ satisfies the triangle inequality on $\mathbb{S}^{d-1}$,
so summing along the path gives
$\mathrm{diam}_{\angle}(\core)\le(k-1)\thalpha$; the metric is
bounded by $\pi$. Tightness: place $k$ unit vectors on a common
great circle at consecutive angular spacing exactly $\thalpha$
(possible whenever $(k-1)\thalpha\le\pi$); the path graph is the
threshold graph and the endpoints realise $(k-1)\thalpha$. For
strictness of the containment, the same configuration with
$k = 3$ has no centre within $\thalpha$ of both endpoints once
$2\thalpha > \thalpha$, i.e.\ always. Proof detail and the
resulting restatement of
Theorem~\ref{thm:hcsc-semantic-validity} under the weakened
predicate are in Appendix~\ref{app:ball-vs-component}.
\end{proof}

\paragraph{What this costs, measured.}
Across the $73$ \textsc{semantic\_commit} rounds of our MVR-50
runs, $5$ selected cores ($6.8\%$) are $\thalpha$-connected but
are \emph{not} $\thalpha$-balls, so
Theorem~\ref{thm:hcsc-semantic-validity}'s $2\thalpha$ conclusion
does not formally apply to them. Empirically the bound is not
violated: the largest observed core diameter over all $73$
commits is $0.964$~rad, against $2\thalpha = 1.30$~rad at
$\thalpha=0.65$. Re-running the benchmark with the ball predicate
enforced leaves the headline commit and safety numbers
essentially unchanged.

\paragraph{Recommendation: tighten the implementation, not the
definition.}
The two repairs available are (a) enforce the ball predicate in
code --- exactly the $r^\star \le \thalpha$ test already printed
in Algorithm~\ref{alg:certified-commitment-round} --- or (b) weaken
Definition~\ref{def:admissible-round} to $\thalpha$-connectivity
and replace $2\thalpha$ by $(k-1)\thalpha$ throughout. We
recommend (a), for three reasons. First, it costs almost nothing:
$5/73$ rounds are affected and the headline numbers do not move.
Second, (b) yields a bound that is \emph{vacuous} at the operating
point: with $\thalpha = 0.65$ any core of size $k \ge 6$ gives
$(k-1)\thalpha \ge 3.25 > \pi$, and even the minimum core
$k = 2f{+}1 = 5$ gives $2.6$~rad $\approx 149^\circ$. Third, and
decisively, (b) has the wrong monotonicity: the guarantee would
\emph{weaken} as the core grows, so a round with more corroborating
proposals would carry a \emph{worse} certified bound. A
specification with that property is not worth keeping. Until (a)
ships, the correct reading of the reported semantic commits is:
$68/73$ carry the $2\thalpha$ guarantee, $5/73$ carry only the
measured diameter.

\subsection{Level 4: verdict correctness}
\label{subsec:agreement} 
\label{sec:level4}

This is the level a consumer of the committed verdict cares about,
and the only one that is not deterministic in the shipped
configuration. We give the two conditions under which it
\emph{is} deterministic, prove that one contains the other, and
draw the consequence: at matched deterministic guarantees, H-CSC
cannot cover more rounds than certificate-wrapped majority. That
negative result is the main formal contribution of this section,
because it tells the reader where a difference between the two
protocols \emph{cannot} exist, and therefore where to look
instead.

Throughout, $\mathrm{majority}_{\Honest}(V)$ is evaluated over
the honest agents \emph{in the delivered view},
$\Honest_V := \Honest \cap V(r)$, consistent with
Definition~\ref{def:two-validities} and with the offline
evaluation of \S\ref{sec:eval-setup}. We write
$h_v := |G_v \cap \Honest|$ and $b_v := |G_v \cap \Byz|$, so
$|G_v| = h_v + b_v$ and $\sum_v b_v = |\Byz \cap V(r)| \le f$.

\begin{theorem}[L4 on the verdict branch: honest-reference
plurality under margin $>f$]
\label{thm:hcsc-verdict-validity}
Fix a margin parameter $m := \textit{verdict\_margin}^{\min}$ and
suppose the protocol publishes a \textsc{verdict\_commit} object
$\mathcal{O}^{\textsc{ver}}$ in round $r$ with candidate group
$G_{v^\star}$ and $\textit{verdict\_margin} \ge m$.
\begin{enumerate}
\item \textbf{(Sufficiency.)} If $m > f$, then $v^\star$ is the
\emph{strict} honest-reference plurality: $h_{v^\star} > h_v$ for
every $v \ne v^\star$. Consequently
$v^\star = \mathrm{majority}_{\Honest}(V)$, and the
$\textsc{largest\_verdict}$ tie-break is never invoked in
computing $\mathrm{majority}_{\Honest}(V)$.
\item \textbf{(Tightness.)} For every $m \le f$ the conclusion
fails: there is a delivered view with $|G_{v^\star}|\ge 2f{+}1$,
$\textit{verdict\_margin} = m$, and
$\mathrm{majority}_{\Honest}(V) \ne v^\star$.
\end{enumerate}
For the shipped operating point $m = 1$ with $f = 2$, part (2)
applies and honest-reference validity is an empirical rate, not a
guarantee (\S\ref{sec:eval-setup}). The same statement holds
verbatim for the semantic branch once that branch is gated at
margin $m$ (\S\ref{sec:margin-gate}); the shipped semantic branch
is gated at no margin at all, so not even $m = 1$ applies to it.
\end{theorem}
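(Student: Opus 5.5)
The sufficiency part is a counting argument in the notation $h_v, b_v$ introduced just above the statement. Fix any $v \ne v^\star$. The margin hypothesis gives $|G_{v^\star}| - |G_v| \ge \textit{verdict\_margin} \ge m > f$, since the margin is measured against the largest competing group and so bounds the gap to every competitor. Substituting $|G_v| = h_v + b_v$ yields $h_{v^\star} - h_v \ge m - b_{v^\star} + b_v$. The crude bound $b_{v^\star} \le \sum_u b_u \le f$ and $b_v \ge 0$ then give $h_{v^\star} - h_v \ge m - f \ge 1$. So $h_{v^\star} > h_v$ strictly, for every $v \ne v^\star$. Because the inequality is strict, no other verdict ties $v^\star$ among the honest agents of $\Honest_V$. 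The honest-reference plurality of Definition~\ref{def:two-validities} is therefore uniquely $v^\star$, and the \textsc{largest\_verdict} tie-break is never consulted in computing $\mathrm{majority}_{\Honest}(V)$. I would note that only $b_{v^\star}\le f$ is used, not the joint budget, so the bound is slack in general. That slack is the content of the tightness part.

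For tightness, fix $m \le f$ with $m \ge 0$; negative margins cannot arise because $v^\star$ is the argmax. I will build an explicit view at $n = 3f{+}1$ with every delivered node present, using two verdict groups, and put all $f$ Byzantine nodes in $G_{v^\star}$. I want $|G_{v^\star}| = h_{v^\star} + f$ and $|G_{v'}| = h_{v'}$ with $|G_{v^\star}| - |G_{v'}| = m$, which gives $h_{v'} - h_{v^\star} = f - m \ge 0$. When $m < f$ this already makes $v'$ the strict honest plurality. When $m = f$ the honest counts tie, so I order the verdicts so that the protocol's tie-break (\textsc{support} $>$ \textsc{refute} $>$ \textsc{insufficient}) selects $v'$, for instance $v' = \textsc{support}$. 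It remains to satisfy $h_{v^\star} + h_{v'} = 2f{+}1$ (so that $n = 3f{+}1$) and $|G_{v^\star}| \ge 2f{+}1$. Solving gives $h_{v^\star} = (f{+}1{+}m)/2$, which must be a nonnegative integer.

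Parity is the step I expect to need care. When $f{+}1{+}m$ is odd, I will instead drop one honest node from delivery, which is allowed since $|I_r| \ge n-f$, or add a third verdict group holding a single honest node. Either adjustment shifts the parity without disturbing the margin or the plurality. Another way around it is to scale $n$ up, since the model only requires $n \ge 3f{+}1$. I would then check the quorum constraint: $|G_{v^\star}| = h_{v^\star} + f \ge 2f{+}1$ requires $h_{v^\star} \ge f{+}1$. This fails for small $m$ at $n = 3f{+}1$, so I would instead take $n$ larger, with $h_{v^\star} = f{+}1$ and $h_{v'} = 2f{+}1-m$. That view has $|G_{v^\star}| = 2f{+}1$, $|G_{v'}| = 2f{+}1-m$ and margin exactly $m$, using $n = 4f{+}2-m \ge 3f{+}1$ for $m \le f{+}1$.

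I would present this general family, with $2f{+}1-m$ honest nodes on $v'$, $f{+}1$ honest nodes on $v^\star$, and all $f$ Byzantine nodes on $v^\star$, as the single witness, since it avoids parity issues entirely. For a sanity check, at $f=2$ and $m=0$ it reproduces the shape of Example~\ref{ex:tiebreak-capture}. The closing remark about the semantic branch then follows by rerunning the same sufficiency computation with the margin gate of \S\ref{sec:margin-gate} added.
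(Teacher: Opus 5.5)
Your proposal is correct and follows the paper's proof. Sufficiency is the same chain $h_{v^\star}-h_v \ge m-b_{v^\star}+b_v \ge m-f \ge 1$, where the paper explicitly notes that the last step needs the margin to be an integer. Your final witness --- $f{+}1$ honest agents plus all $f$ Byzantine agents on $v^\star$, and $2f{+}1-m$ honest agents on the runner-up --- is exactly the paper's construction.

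The only difference is in how $n$ is handled. The paper keeps $n$ fixed, places surplus honest agents on further verdicts with groups smaller than $2f{+}1-m$, and records that the witness needs $n \ge 4f{+}2-m$. You instead set $n = 4f{+}2-m$ directly. That feasibility condition is also why your $n = 3f{+}1$ attempt fails for every $m \le f$, not just for small $m$: a top group of size at least $2f{+}1$ leaves at most $f$ agents for the runner-up, which forces the margin to be at least $f{+}1$.
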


\begin{proof}
Deferred to Appendix~\ref{app:proof-verdict-validity}.
\end{proof}

\begin{remark}[Correction to the previous version]
v1 stated this theorem for $m > f$ but proved only part~(2):
the ``worst-case allocation'' argument given there establishes
that $m > f$ is \emph{necessary}, not that it is
\emph{sufficient}, and the appendix of v1 stated that the
sufficiency argument was fully developed in the main text, which
it was not. Part~(1) above is the missing direction; the
corrected cross-reference is
Appendix~\ref{app:proof-verdict-validity}.
\end{remark}

\begin{theorem}[L4 on the semantic branch: strict honest majority
under a large core]
\label{thm:det-semantic-validity}
Suppose the protocol publishes a \textsc{semantic\_commit} object
in round $r$ with selected core $\core \subseteq G_{v^\star}$
satisfying
\begin{equation}
\label{eq:det-semantic-condition}
|\core| \;>\; \frac{n+f}{2}.
\end{equation}
Then $v^\star$ is the \emph{strict honest majority} of the
delivered view: strictly more than half of the agents in
$\Honest_V = \Honest \cap V(r)$ hold verdict $v^\star$. In
particular $v^\star = \mathrm{majority}_{\Honest}(V)$, and the
conclusion is strictly stronger than
Theorem~\ref{thm:hcsc-verdict-validity}'s plurality conclusion.
\end{theorem}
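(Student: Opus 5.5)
The argument is a counting argument over the delivered view, using only that every member of $\core$ holds verdict $v^\star$ and that at most $f$ delivered agents are Byzantine.

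\emph{Step 1: honest support for $v^\star$.} Since $\core \subseteq G_{v^\star}$ and at most $f$ members of $\core$ can be Byzantine,
\[
h_{v^\star} \;\ge\; |\core \cap \Honest| \;\ge\; |\core| - |\Byz \cap V(r)| .
\]

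\emph{Step 2: size of the honest delivered population.} We have $|\Honest_V| \le n - |\Byz \cap V(r)|$. Write $b := |\Byz \cap V(r)| \le f$ and substitute both bounds into the target inequality $2h_{v^\star} > |\Honest_V|$. It then suffices to show
\[
2(|\core| - b) \;>\; n - b, \qquad\text{that is,}\qquad |\core| \;>\; \frac{n+b}{2}.
\]
Because $b \le f$, this is implied by \eqref{eq:det-semantic-condition}. The argument never uses $|V(r)| \ge n-f$; it only needs $|\Honest_V| \le n - b$. This holds because undelivered agents can only shrink $\Honest_V$, and shrinking the denominator helps. So strictly more than half of $\Honest_V$ holds $v^\star$. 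A strict majority is automatically the unique plurality, so $v^\star = \mathrm{majority}_{\Honest}(V)$ with no tie-break.

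\emph{Step 3: comparison with Theorem~\ref{thm:hcsc-verdict-validity}.} A strict majority implies a strict plurality, but not conversely. For example, honest votes split $4/3/1$ give a plurality without a majority. This shows the semantic-branch conclusion is strictly stronger.

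The only delicate point is Step~2: keeping $b$ rather than $f$ in both bounds so the inequality goes the right way. Bounding $h_{v^\star} \ge |\core| - f$ and $|\Honest_V| \le n$ separately would demand $|\core| > (n+2f)/2$, which is too weak. Using the same $b$ in both places is what makes the $(n+f)/2$ threshold suffice.
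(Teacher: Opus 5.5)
Your counting argument is correct and is essentially the paper's proof. The paper also uses $|\core\cap\Honest|\ge|\core|-b$, but instead of comparing $2h_{v^\star}$ with $|\Honest_V|\le n-b$, it bounds the honest agents not holding $v^\star$ by $n-|\core|$ and shows that the two counts lie on opposite sides of the common threshold $(n-f)/2$. That decomposition avoids the $b$-coupling you call the delicate point, while yours yields the marginally sharper sufficient condition $|\core|>(n+b)/2$.
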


\begin{proof}
Deferred to Appendix~\ref{app:proof-det-semantic}.
\end{proof}

\paragraph{Coverage cost.} Condition~\eqref{eq:det-semantic-condition}
requires $|\core| \ge 7$ at $n = 10$, $f = 2$. Gating the semantic
branch on it reduces the \textsc{semantic\_commit} rate from
$0.74 / 0.72$ (static / rushing) to $\mathbf{0.34 / 0.28}$, at
$\texttt{invalid\_hmaj} = 0.00$ in both modes on MVR-50
($N = 50$ tasks). This is the price of turning L4 on the semantic
branch from empirical into deterministic: slightly more than half
of the semantic commits are given up.

\begin{lemma}[Containment]
\label{lem:containment}
If the selected core satisfies $|\core| > (n+f)/2$, then the
delivered view satisfies $\textit{verdict\_margin} > f$.
\end{lemma}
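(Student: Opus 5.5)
The plan is a direct counting argument on the delivered view $V(r)$. It uses only two facts: the core is contained in the candidate verdict group, and the view has at most $n$ members. No honest/Byzantine split (the $h_v$, $b_v$ notation) and no earlier theorem are needed.

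\emph{Step 1: lower-bound the candidate group.} Stage~3 of Algorithm~\ref{alg:certified-commitment-round} extracts $\core$ inside $G_{v^\star}$, so $\core\subseteq G_{v^\star}$. Hence
\[
|G_{v^\star}| \ge |\core| > \tfrac{n+f}{2} \ge \tfrac{n}{2},
\]
using $f\ge 0$.

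\emph{Step 2: upper-bound every competing group.} The verdict groups $\{G_v\}_v$ partition $V(r)$, and $|V(r)|=|I_r|\le n$. So for every $v\ne v^\star$,
\[
|G_v| \;\le\; \sum_{u\ne v^\star}|G_u| \;=\; |V(r)|-|G_{v^\star}| \;\le\; n-|G_{v^\star}|.
\]

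\emph{Step 3: combine.} The margin subtracts the second-largest count, so
\[
\textit{verdict\_margin} = |G_{v^\star}|-\max_{v\ne v^\star}|G_v| \;\ge\; 2|G_{v^\star}|-n \;>\; (n+f)-n \;=\; f.
\]

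The only point needing care, and the one I expect to be the main obstacle in writing it cleanly, is confirming that the $v^\star$ in the margin formula really is the group containing $\core$. Otherwise the margin could be measured from a different group chosen by tie-break, which is exactly the failure mode in Example~\ref{ex:tiebreak-capture}.

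I would discharge this with Step~1 itself. Since $|G_{v^\star}|>n/2$, Step~2 gives every other group size $<n/2<|G_{v^\star}|$. So $G_{v^\star}$ is the \emph{strict} unique maximum, and Stage~2's $\arg\max$ selects it without invoking the \textsc{support} $>$ \textsc{refute} $>$ \textsc{insufficient} tie-break.

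The margin is then the well-defined quantity bounded in Step~3, and the conclusion $\textit{verdict\_margin}>f$ follows. I would also remark that no integrality of $(n+f)/2$ is needed. The bound in Step~3 is strict directly from the strict hypothesis.
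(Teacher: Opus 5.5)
Your proof is correct and is essentially the paper's own argument: both use $\core\subseteq G_{v^\star}$ together with the partition of $V(r)$ and $|V(r)|\le n$ to get $\textit{verdict\_margin}\ge 2|G_{v^\star}|-|V(r)|\ge 2|\core|-n>f$. Your extra check that $G_{v^\star}$ is the strict unique maximum is harmless but not needed. Stage~3 extracts $\core$ inside whichever $G_{v^\star}$ Stage~2 selected, so the margin is measured from the group containing the core by definition.
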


\begin{proof}
Since $\core \subseteq G_{v^\star}$ we have
$|G_{v^\star}| \ge |\core| > (n+f)/2$. The verdict groups
partition the delivered view, so the runner-up group satisfies
\[
\max_{v \ne v^\star} |G_v| \;\le\; |V(r)| - |G_{v^\star}|
\;\le\; n - |G_{v^\star}| \;<\; n - \frac{n+f}{2} \;=\; \frac{n-f}{2}.
\]
Therefore
\[
\textit{verdict\_margin}
= |G_{v^\star}| - \max_{v \ne v^\star} |G_v|
\;>\; \frac{n+f}{2} - \frac{n-f}{2} \;=\; f. \qedhere
\]
\end{proof}

\begin{corollary}[No coverage separation at matched deterministic
guarantees]
\label{cor:coverage-identity}
Define the two \emph{deterministic operating points}:
\begin{itemize}
\item $\textsc{H-CSC}^{\det}$: the H-CSC control flow of
Algorithm~\ref{alg:certified-commitment-round} with the semantic
branch gated on $|\core| > (n+f)/2$ and the verdict branch gated
on $\textit{verdict\_margin} > f$;
\item $\textsc{M3}^{\det}$: certificate-wrapped majority
(baseline B3) gated on $\textit{verdict\_margin} > f$.
\end{itemize}
Then on \emph{every} delivered view, $\textsc{H-CSC}^{\det}$
commits if and only if $\textsc{M3}^{\det}$ commits, and both
commit the same verdict $v^\star$. Hence their commit rates, their
honest-reference invalid-commit rates, and their per-task commit
sets are \emph{identical}, not merely statistically
indistinguishable.
\end{corollary}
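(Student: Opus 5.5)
The approach is to reduce both operating points to a single predicate on the delivered view, namely $P(V) :\Leftrightarrow |G_{v^\star}|\ge 2f{+}1 \wedge \textit{verdict\_margin} > f$, where $v^\star$ is selected by the same \textsc{largest\_verdict} rule in both protocols. I will then show that $\textsc{H-CSC}^{\det}$ commits iff $P(V)$, and that $\textsc{M3}^{\det}$ commits iff $P(V)$. Since both protocols commit to the same $v^\star$ computed from the same Stage~2 partition, the rate, invalid-rate and per-task-set identities follow pointwise: a pointwise equivalence of commit predicates and committed verdicts on every $V$ implies equality of any functional of the commit decision. Throughout, I will treat signature collection (Stage~5) as identical for the two protocols and as succeeding whenever the signer source contains at least $2f{+}1$ honest agents. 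Any abort there occurs for both or for neither, because both draw signers from $G_{v^\star}$ or a subset of it. I will state this explicitly as the matched-certificate convention defining B3, and not leave it implicit.

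For the $\textsc{M3}^{\det}$ side, I will simply unfold the definition of B3. It computes $v^\star$ by the same argmax and tie-break, requires the $2f{+}1$ certificate envelope, and in the gated form commits iff $\textit{verdict\_margin} > f$. The quorum requirement $|G_{v^\star}|\ge 2f{+}1$ is part of the certificate envelope. So B3 commits exactly when $P(V)$ holds.

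For $\textsc{H-CSC}^{\det}$, I split on the typed path. (i) If it emits a \textsc{verdict\_commit}, then by construction Stage~2 passed the quorum gate and the fallback gate $\textit{verdict\_margin} > f$, so $P(V)$ holds. (ii) If it emits a \textsc{semantic\_commit}, then the gate $|\core| > (n+f)/2$ holds, and Lemma~\ref{lem:containment} gives $\textit{verdict\_margin} > f$; quorum also holds since $|G_{v^\star}|\ge|\core|\ge 2f{+}1$. So again $P(V)$ holds. Conversely, if $P(V)$ holds, the control flow either takes the semantic branch, which commits, or fails it and reaches Stage~4b, where the margin gate $>f$ passes and it commits. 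It never reaches the both-paths-failed abort.

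The step I expect to be the main obstacle is the converse direction. I must rule out the aggregation-failure abort in Stage~4a, which could make H-CSC abort on a view where M3 commits. I would handle this by treating a degenerate aggregate as falling through to the verdict branch. Alternatively, I would note that a geometric median of unit vectors within a $\thalpha<\pi/2$ ball has positive norm, so degeneracy does not arise under the ball predicate. I would also have to confirm that Stage~5 succeeds equally on both sides. This holds because $P(V)$ with margin $>f$ gives $h_{v^\star}\ge |G_{v^\star}|-f$, and I would invoke the termination theorem's condition for liveness identically for both. In both cases the committed verdict is the common $v^\star$, which completes the proof.
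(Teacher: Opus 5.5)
Your core argument matches the paper's proof. You reduce both operating points to the predicate $P(V)$, push the semantic branch into $P(V)$ via Lemma~\ref{lem:containment}, and get the converse because the verdict gate is literally M3's predicate. Your remark on Stage~4a is also sound, and it covers something the paper leaves implicit. Under the ball predicate with $\thalpha<\pi/2$, every point of $\mathrm{conv}(\core)$ has inner product at least $\cos\thalpha>0$ with the centre $\mathbf{e}_h$. Hence the geometric median cannot be degenerate.

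The step that fails is your Stage~5 claim. The phrase ``both draw signers from $G_{v^\star}$ or a subset of it'' is precisely why the two protocols can differ. On the semantic branch $\Sigma_{\textit{src}}=\core$, and the liveness condition of Theorem~\ref{thm:hcsc-termination} needs $2f{+}1$ honest members of $\core$. The hypothesis $|\core|>(n+f)/2$ does not supply that in general, and $h_{v^\star}\ge|G_{v^\star}|-f$ gives no lower bound on honest members of the core.

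A concrete counterexample:
\begin{itemize}
\item Take $f=2$, $n=7$, with all seven agents on $v^\star$.
\item Let the selected core be a $\thalpha$-ball of three honest and two Byzantine agents, so $5>4.5$.
\item Place the other two honest agents in a separate component.
\item If the Byzantine pair withholds, $\textsc{H-CSC}^{\det}$ collects $3<2f{+}1$ signatures and aborts with \texttt{insufficient\_signers}; there is no fallback after Stage~5.
\item $\textsc{M3}^{\det}$ certifies from the five honest members of $G_{v^\star}$.
\end{itemize}
At the experimental $n=10$, $f=2$ this cannot happen, since $|\core|\ge 7$ forces five honest core members. The corollary, however, is stated for general $n\ge 3f{+}1$.

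So the equivalence cannot be proved for commits that include signature collection, and appealing to the termination theorem will not rescue it. Do not try to confirm Stage~5; adopt the paper's reading instead. ``On every delivered view'' only makes sense for a function of $V(r)$. The paper's appendix proof accordingly compares the gate predicates of Stages~2--4 (``equal as predicates on delivered views'') and treats signer behaviour as outside the statement. Under that reading, and with your Stage~5 claims deleted, your argument is complete.
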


\begin{proof}
Both protocols compute the same $v^\star$ from the same delivered
view by the same $\textsc{largest\_verdict}$ rule, so it suffices
to match the commit predicates.
($\Leftarrow$) If $\textsc{M3}^{\det}$ commits then
$\textit{verdict\_margin} > f$ and $|G_{v^\star}| \ge 2f{+}1$, so
$\textsc{H-CSC}^{\det}$'s verdict branch is enabled; whichever
branch H-CSC takes, it commits $v^\star$.
($\Rightarrow$) If $\textsc{H-CSC}^{\det}$ commits, it did so
either on the verdict branch, in which case
$\textit{verdict\_margin} > f$ directly, or on the semantic
branch, in which case $|\core| > (n+f)/2$ and
Lemma~\ref{lem:containment} again gives
$\textit{verdict\_margin} > f$. Either way $\textsc{M3}^{\det}$
commits. Thus the semantic-commit predicate is \emph{strictly
stronger} than the verdict-commit predicate at these operating
points, the semantic branch is redundant as a coverage
mechanism, and the union of the two branches equals the verdict
branch alone.
\end{proof}

\paragraph{Empirical confirmation.} We ran both deterministic
operating points on MVR-50 ($N = 50$ tasks, $n=10$, $f=2$, paired
static and rushing). The per-task commit sets are identical: the
McNemar off-diagonal cells are $b = c = 0$, i.e.\ \textbf{zero
discordant tasks} in either mode. Both protocols commit
$\mathbf{0.72}$ (static) $/\ \mathbf{0.64}$ (rushing) at
$\texttt{invalid\_hmaj} = 0.00$. The identity is exact, as
Corollary~\ref{cor:coverage-identity} requires.

\begin{remark}[Why this is the navigational result]
\label{rem:navigational}
Reviewers of the previous version observed that H-CSC and
certificate-wrapped majority were statistically indistinguishable
on MVR-50 and asked for a larger or better-powered experiment.
Corollary~\ref{cor:coverage-identity} shows that no experiment
will separate them \emph{on coverage at matched deterministic
guarantees}, because the separation is impossible by
construction, not merely undetectable at $N = 50$. This changes
what the empirical programme should be. It rules out the
coverage axis and the deterministic-safety axis, and leaves
exactly two places where a semantic layer could earn its keep:
(i) the \emph{payload} --- what the digest commits to, which the
verdict-only protocol cannot express at all; and (ii) the
\emph{admissibility instrument} --- what predicate decides which
proposals may back a commit, which is a question about the
instrument and not about the protocol wrapping it. Direction (i)
is addressed by the typed commit object of
Definition~\ref{def:csc-object} and evaluated in
\S\ref{sec:eval-retained}; direction (ii) is the subject of
\S\ref{sec:eval-r2}, and is where we find that the
learned encoder is not the right instrument.

At the \emph{shipped} operating point ($m = 1$ on the verdict
branch, no margin condition on the semantic branch)
Corollary~\ref{cor:coverage-identity} does not apply, and a
difference is possible in principle. Measured, it is one task:
the H-CSC and B3 decision sets differ on $1/50$ tasks in static
mode and $0/50$ in rushing mode (\S\ref{sec:eval-setup}). The one
task is Example~\ref{ex:tiebreak-capture}, and it is a difference
in H-CSC's disfavour.
\end{remark}

\subsection{The margin gate: a vulnerability in our own protocol}
\label{sec:margin-gate}

\paragraph{The defect.} The verdict-fallback branch of
Algorithm~\ref{alg:certified-commitment-round} is gated at
$\textit{verdict\_margin}\ge 1$, and \S\ref{sec:protocol} presents
that gate as part of H-CSC's ``safety floor under fragile
margins''. The semantic branch is not gated at all. Stage~2
aborts only when $|G_{v^\star}| < 2f{+}1$; Stage~3 tests only core
size and the admissibility radius; Stage~4a constructs the
semantic digest without ever consulting
$\textit{verdict\_margin}$. Because the semantic branch is
attempted \emph{first}, the protocol's principal path --- the one
that fires on $74\%/72\%$ of rounds --- effectively runs at
$m_{\min} = 0$. The margin gate that the paper advertises is
applied only to the fallback the paper describes as weaker.

\paragraph{The realised attack.}
Example~\ref{ex:tiebreak-capture} is the exploitation of exactly
this gap, found in our own evaluation rather than constructed for
it. Two Byzantine agents do not attempt to win the verdict count;
they engineer an exact tie ($\textit{verdict\_margin} = 0$) and
let the protocol's own deterministic tie-break hand them the
candidate group. Because the tie-break also determines which
group the semantic core is extracted \emph{within}, the Byzantine
pair is inside the candidate group by construction, and a core of
exactly $2f{+}1$ forms around them with a $0.40$ Byzantine share.
The commit is a \textsc{semantic\_commit}, not a
\textsc{verdict\_commit}; any description of this round as a
verdict-commit failure is incorrect. Note also that the attack is
\emph{cheaper} than a verdict flip: flipping the honest plurality
from $5$--$3$ to a Byzantine win would need three corrupted
agents, but capturing the tie-break needs two.

\paragraph{The fix.} Four lines in Stage~3, applying the margin
condition to both branches:

\begin{algorithmic}
\State \Comment{\emph{Stage 3 (fix): gate the semantic branch on the margin too.}}
\If{$\textit{verdict\_margin} < \textit{verdict\_margin}^{\min}$}
    \State $\textit{semantic\_path\_ok} \gets \textsc{false}$
    \State $\textit{semantic\_fail\_reason} \gets \texttt{margin\_below\_threshold}$
\EndIf
\end{algorithmic}

\noindent
placed immediately after the core-size and radius checks. The
closed abort-reason set of Definition~\ref{def:csc-object} gains
one element, \texttt{margin\_below\_threshold}; no other part of
the protocol changes, and Theorems~\ref{thm:hcsc-agreement},
\ref{thm:hcsc-semantic-validity}, \ref{thm:hcsc-safe-abort} and
\ref{thm:hcsc-termination} are unaffected, since the new gate is a
deterministic function of $\Psi(r)$
(Definition~\ref{def:finality-signal}) evaluated before any digest
is constructed. With the gate in place,
Theorem~\ref{thm:hcsc-verdict-validity} applies to \emph{both}
branches at the same $m$.

\paragraph{The cost, measured.}
Table~\ref{tab:margin-gate-cost} sweeps the shared margin
threshold on MVR-50 ($N = 50$ tasks, paired modes).

\begin{table}[t]
\centering
\caption{Cost of gating \emph{both} branches at
$\textit{verdict\_margin}\ge m$. MVR-50, $N=50$ tasks, $n=10$,
$f=2$, $\thalpha=0.65$. Row $m=0$ is the shipped configuration
(semantic branch ungated; verdict branch at $\ge 1$); rows
$m\ge 1$ gate both branches at $\ge m$. CR $=$ commit rate;
$\texttt{inv\_hmaj}$ is the honest-reference invalid-commit rate
(lower better). $m = f{+}1 = 3$ is the deterministic operating
point of Theorem~\ref{thm:hcsc-verdict-validity} and of
Corollary~\ref{cor:coverage-identity}.}
\label{tab:margin-gate-cost}
\footnotesize
\setlength{\tabcolsep}{5pt}
\begin{tabular}{@{}l c c c c@{}}
\toprule
& \multicolumn{2}{c}{\textbf{static}} & \multicolumn{2}{c}{\textbf{rushing}} \\
\cmidrule(lr){2-3}\cmidrule(lr){4-5}
$m$ & CR & inv\_hmaj & CR & inv\_hmaj \\
\midrule
$0$ (as shipped)      & $0.90$ & $0.02$ & $0.92$ & $0.00$ \\
$\mathbf{1}$ (\textbf{recommended}) & $\mathbf{0.88}$ & $\mathbf{0.00}$ & $\mathbf{0.92}$ & $\mathbf{0.00}$ \\
$2$                   & $0.86$ & $0.00$ & $0.84$ & $0.00$ \\
$3\ (=f{+}1)$         & $0.72$ & $0.00$ & $0.64$ & $0.00$ \\
\bottomrule
\end{tabular}
\end{table}

We adopt $m = 1$. It removes the only honest-reference-invalid
commit either mode produces, at a cost of $0.02$ commit rate in
static mode and nothing in rushing mode. It does \emph{not} buy a
deterministic L4 guarantee --- $m = 1 \le f$, so
Theorem~\ref{thm:hcsc-verdict-validity}(2) still applies --- and we
do not claim otherwise; it closes one specific, realised attack.
The deterministic point $m = 3$ is available and costs $0.18$
(static) and $0.28$ (rushing) of commit rate, at which point
Corollary~\ref{cor:coverage-identity} says the protocol has become
certificate-wrapped majority.

\paragraph{Honest accounting.} We report this as a defect found in
our own evaluation, not as a feature. Three things follow. First,
the $\texttt{invalid\_hmaj} = 0.02$ that v1 reported for static
H-CSC is attributable to an unenforced gate, not to encoder error
or to the fallback path. Second, the claim in v1 that the
verdict-fallback margin condition is what defends against
same-verdict Byzantine proposals
(Remark~\ref{rem:rep-boundary-defence}) was wrong in the direction
that mattered, because the branch that actually fires does not
apply it. Third, a rate of $0/50$ in rushing mode should not be
read as evidence that rushing is safe: the Clopper--Pearson $95\%$
upper bound on a $0/50$ observation is $0.058$, which is wider
than the entire effect being discussed.

\section{Evaluation}
\label{sec:evaluation}

Version~1 of this work asked whether H-CSC \emph{matches} the
strongest verdict-only certificate on safety and coverage. It
does --- and the containment lemma
(Lemma~\ref{lem:containment}) now explains why it must: at
matched deterministic guarantees, no coverage separation from
verdict-only certification is available to \emph{any} protocol
of this shape. That closes the question v1 was built around and
opens the one that actually matters. Typed finality needs an
\emph{admissibility instrument}: some computable predicate that
decides whether a set of proposals is close enough together to
be committed as one semantic object. H-CSC instantiates that
instrument with a learned encoder. This section asks whether
that was the right choice, and reports that on the evidence we
have it was not.

\paragraph{A standing note on sample size.}
The results of
\S\S\ref{sec:eval-r1}--\ref{sec:eval-r6} come from two pilot
studies run at small scale ($N=10$ tasks / $40$ attacks, and
$N=8$ tasks / $3$ output schemas respectively). We state $n$ in
every table caption and in the text of every claim. They are
reported because their \emph{direction} is consistent, large,
and mechanistically explained, not because they are
tight. Where an effect is smaller than what the design can
resolve we say so (\S\ref{sec:eval-negative}, H5--H6). This is
a preprint revision; we prefer a stated-$n$ preliminary finding
to a confident one we cannot support.

\subsection{Discharging the encoder-determinism assumption}
\label{sec:encoder-determinism}

Level~1 requires two honest nodes to derive byte-identical digests, so
Assumption~\ref{assn:deterministic-cE} has to hold exactly, not approximately. It
does not. Re-encoding the MVR-50 canonical texts with the same CRSE checkpoint in a
second execution environment reproduces the cached embeddings to
$3.6\times10^{-7}$, which is reproducibility, not the bit-exactness that a hash
input demands; coarsening the quantiser $\eta$ narrows the disagreement probability
without eliminating it.

A lexical admissibility instrument removes the problem rather than mitigating it.
Token-set membership is exact, so two honest nodes computing the same sketch over
the same signed proposals agree bit-for-bit by construction, and
Assumption~\ref{assn:deterministic-cE} is discharged instead of assumed. This is
the theoretical payoff of the empirical result in \S\ref{sec:eval-r2}: the
instrument that wins on detection is also the one that repairs the foundation the
digest guarantees stand on. The full argument, including the collision arithmetic
for the floating-point instrument and the per-round disagreement bound under
quantiser coarsening, is in Appendix~\ref{app:moved}.

\subsection{Benchmarks, Operating Point, and Reproducibility}
\label{sec:eval-setup}

\textbf{We now read this number differently than v1 did.}
$\mathrm{AUC}=0.9946$ is a \emph{cross-verdict} separability
measurement: BCS-v1's poisons target the anchor's verdict.
\S\ref{sec:eval-r2} measures the same encoder's
\emph{within-verdict} separability and obtains $0.62$--$0.82$.
The gap between those two numbers is the subject of this
section.

\begin{table}[t]
\centering
\caption{BCS-v1 commitment results per Byzantine ratio
(20 trials/bucket; CRSE encoder; geometric-median aggregator).
\texttt{strict-semantic CSC} is the angular-component ablation
of H-CSC (\S\ref{sec:hcsc-strict-csc}); \texttt{all\_nodes} is
the no-filter baseline. SCE: angular distance from committed
aggregate to honest reference centre (deg). Beyond-BFT buckets
($n=30 < 3f{+}1$) are diagnostic-only.}
\label{tab:bcs}
\small
\begin{tabular}{c c c c c c}
\toprule
& & \multicolumn{2}{c}{\texttt{strict-semantic CSC}} & \multicolumn{2}{c}{\texttt{all\_nodes}} \\
\cmidrule(lr){3-4}\cmidrule(lr){5-6}
byz\_ratio & $n$ & commit & SCE$^\circ$ & commit & SCE$^\circ$ \\
\midrule
0.0 & 50 & 1.00 & 2.04 & 1.00 & 0.00 \\
0.1 & 50 & 1.00 & 0.81 & 1.00 & 0.93 \\
0.2 & 50 & 1.00 & 0.56 & 1.00 & 1.85 \\
0.3 & 50 & 0.80 & 0.31 & 0.80 & 3.60 \\
0.4 & 30 & 0.00 & --   & 0.00 & --   \\
0.5 & 30 & 0.00 & --   & 0.00 & --   \\
\bottomrule
\end{tabular}
\end{table}

\paragraph{MVR-50 and MVR-100 (real LLM agents).}
MVR-50 consists of 50 Climate-FEVER claim-verification tasks
(the schema and loaders are dataset-agnostic; SciFact
instantiation remains future work), each run with $n=10$ agents
and $f=2$. Honest agents propose through 5 prompt profiles
(\textit{evidence-focused}, \textit{skeptical},
\textit{concise}, \textit{uncertainty-aware},
\textit{domain-specialist}); Byzantine agents propose through
the 4 attack types of \S\ref{sec:eval-attack-families},
Family~I, filling $100$ Byzantine slots. Every slot is paired
across \emph{static} (Byzantine does not see honest broadcasts)
and \emph{rushing} (it does) modes. All proposals are generated
with \texttt{gpt-4o-mini}~\cite{openai2024gpt4o} under
task-aware schema validation.
MVR-100 is the same pipeline at 100 tasks (800 honest $+$ 200
static $+$ 200 rushing Byzantine proposals per agent LLM), used
for the cross-model check in \S\ref{sec:eval-retained}.
The H-CSC main path uses $\thalpha = 0.65$~rad,
$\textit{candidate\_rule}=\textsc{largest\_verdict}$,
$\textit{fallback}=\textsc{margin\_1}$, and
$\textit{topology\_gate}=\bot$. As
\S\ref{sec:margin-gate} records, the shipped semantic
branch does \emph{not} apply the margin condition, so the
principal configuration runs the semantic path at
$\marginmin = 0$; \S\ref{sec:eval-retained} reports the cost of
closing that gap.

\paragraph{Metrics.}
$\CR$: commit rate. $\SEM$: \textsc{semantic\_commit} coverage, the
fraction of \emph{all} rounds committed with an
embedding-backed digest (verdict-only methods give $\SEM = 0$ by
construction; verdict-commit coverage is $\CR - \SEM$, abort is
$1-\CR$). $\INVgold$ and $\INVhmaj$: the two validity-mismatch
rates of Definition~\ref{def:two-validities}, lower better;
$\INVhmaj$ (against the evidence-bounded honest-reference
plurality) is the protocol-side safety metric, $\INVgold$ a
dataset-quality diagnostic. The two references diverge on
$14/50$ tasks ($28\%$). $\byzcore$: mean Byzantine fraction in
the selected core. $\VCC$: valid-commit coverage. For the new
instrument study we add AUROC (attack proposals scored against
honest proposals) and TPR at a $5\%$ honest false-positive rate
--- the fraction of attacks caught by a threshold set at the
$95$th percentile of the honest score distribution. TPR@5\% is
the operationally meaningful number: it is what a deployment
that refuses to discard more than $5\%$ of honest work can
actually buy.

\paragraph{Reproducibility, determinism, and contamination.}
Four checks, all of which a reader is entitled to demand of a
paper whose theorems assume a deterministic encoder
(Assumption~\ref{assn:deterministic-cE}).
\emph{(i) Encoder determinism.} Re-encoding the MVR-50 canonical
texts from the frozen CRSE checkpoint reproduces the cached
embeddings to a minimum cosine similarity of $0.99999990$ and a
maximum element-wise deviation of $3.6\times10^{-7}$. This is
reproducibility on \emph{one} machine and library stack; it is
not bit-exactness, and Assumption~\ref{assn:deterministic-cE}
remains an assumption for the encoder-based instrument (see
\S\ref{sec:eval-r2}).
\emph{(ii) Protocol-layer cost.} The full protocol-layer sweep
--- $6\,000$ (method $\times$ parameter $\times$ mode $\times$
task) configurations --- executes in $1.3$~s on CPU, so every
sweep in this section is cheap to reproduce.
\emph{(iii) Signing behaviour.} Byzantine agents in our
experiments \emph{do} sign candidate digests
(\texttt{signer\_policy = selected\_core}); the
\texttt{insufficient\_signers} abort fires $0$ times across all
$6\,000$ per-task records.
\emph{(iv) Encoder contamination.} All $488$ CRSE training
anchors are Wikipedia \texttt{science\_tech} passages; exactly
$1/488$ titles contains a climate term; the maximum
title-to-MVR-claim Jaccard overlap is $0.176$ (mean $0.059$)
and the title intersection with MVR is empty. The 20-task
calibration pilot used to select $\thalpha$ and the fallback
rule is disjoint from MVR-50 and MVR-100 (overlap $0$). We find
no evidence of train/test contamination.

\subsection{What v1's Attack Suite Could Not Reach}
\label{sec:eval-attack-audit}

Before adding experiments we must withdraw a claim. Auditing the
generator that produced v1's Byzantine proposals establishes
that the suite was structurally incapable of exercising the
component the paper is named after.

\paragraph{The semantic layer was unreachable by construction.}
The attack selector forces every Byzantine agent to target a
verdict \emph{different} from the honest plurality. The H-CSC
semantic core, however, is extracted \emph{inside} the candidate
verdict group (Algorithm~\ref{alg:hcsc-round}, Stage~3). A
Byzantine agent that flips the verdict is therefore assigned to
a different group and is \emph{structurally ineligible} for the
core, regardless of what its embedding looks like. Whatever the
angular predicate does or does not do, it was never the thing
keeping those agents out. Measured directly: under rushing,
$0/100$ Byzantine proposals preserve the honest-plurality
verdict; under static, $22/100$ do, and those are incidental
rather than designed.

\paragraph{Consequence.}
The v1 result ``$\byzcore = 0.000$ under rushing, and rushing
never produces an invalid commit'' is a property of the
\emph{attack policy}, not of the defence. We withdraw the
safety reading v1 attached to it. The corresponding cells of
Table~\ref{tab:mvr50} are retained because they are correct
measurements of what was run; they are no longer offered as
evidence that the semantic predicate filters anything.

\paragraph{Two further corrections to v1's setup description.}
\emph{(a) Attack balance.} v1 stated that the four attack types
were balanced $25/25/25/25$ across the $100$ Byzantine slots and
paired across modes. That is true of static only. The rushing
slots are $29/9/41/21$ (Table~\ref{tab:attack-composition}).
Consequently \textbf{every static-versus-rushing difference in
Table~\ref{tab:mvr50} confounds attack \emph{timing} with attack
\emph{composition}} and cannot be read as a rushing effect. We
have not re-run the generation to fix the imbalance; we flag it
instead.
\emph{(b) Unanimity.} On $40\%$ of MVR-50 tasks the honest
agents are unanimous. On those rounds every method under
comparison commits the same verdict for the same reason, and
the protocol's discriminating machinery is not exercised. The
effective task base for any method comparison is therefore
closer to $30$ than to $50$, which
\S\ref{sec:eval-negative} (H5) turns into a power statement.

\begin{table}[t]
\centering
\caption{Correction to the v1 setup description: realised
Byzantine attack-type composition over the $100$ Byzantine
slots of MVR-50. v1 reported $25/25/25/25$ for both modes.
Static is as reported; rushing is not.}
\label{tab:attack-composition}
\small
\setlength{\tabcolsep}{4pt}
\begin{tabular}{@{}l cccc@{}}
\toprule
mode & \textsc{B1} polarity & \textsc{B2} omission & \textsc{B3} causality & \textsc{B4} halluc. \\
\midrule
static  & 25 & 25 & 25 & 25 \\
rushing & 29 & \phantom{0}9 & 41 & 21 \\
\bottomrule
\end{tabular}
\end{table}

\subsection{Two Attack Families}
\label{sec:eval-attack-families}

\paragraph{Family I --- verdict-flipping (\textsc{B1}--\textsc{B4}).}
v1's suite, drawn from the textual adversarial-attack
literature~\cite{jin2020textfooler,alzantot2018,li2020bertattack,morris2020textattack}:
\textsc{B1} polarity flip, \textsc{B2} evidence omission,
\textsc{B3} false causality, \textsc{B4} on-topic
hallucination. Each targets a verdict other than the honest
plurality. By \S\ref{sec:eval-attack-audit}, this family
exercises the verdict layer and the certificate layer, and
cannot exercise the semantic layer.

\paragraph{Family II --- verdict-preserving evidential poisoning
(\textsc{B6}--\textsc{B7}).}
The family the architecture is supposed to defend against, and
the one Definition~\ref{def:semantic-poisoning} calls Mode~B: an
adversary that \emph{agrees with the honest plurality on the
verdict} and corrupts only the reasoning that supports it. Such
an agent is admitted to the candidate verdict group by
construction, so the semantic predicate is the only thing
standing between it and the committed payload. We instantiate
two generators.\footnote{The numbering continues
\textsc{B1}--\textsc{B4}; \textsc{B5} is left unused so that the
two families are separated by a gap in the index.}
\begin{description}[leftmargin=1.1em,itemsep=1pt,topsep=2pt]
\item[\textsc{B6} scope shift.] Same verdict, same \emph{real}
evidence ids, but the conclusion's scope is quietly widened or
narrowed --- time span, quantifier, or causal strength.
\item[\textsc{B7} reasoning substitution.] Same verdict, same
class of evidence, but the reasoning chain is replaced by one
the cited evidence does not license.
\end{description}
Each generator is run at two \emph{evasion levels}:
\emph{naive} (poison only) and \emph{evasive}, which
additionally instructs the adversary to mimic the honest
agents' vocabulary, sentence structure, length, and confidence
band. The evasive level is the $\thalpha$-aware adaptive
adversary: it is trying to stay inside the admissibility ball.

\paragraph{Pilot scale and compliance --- read this before the
numbers.}
Family~II is a \textbf{preliminary study}: $N=10$ Climate-FEVER
tasks (stratified by gold label: 4 \textsc{insufficient},
3 \textsc{refute}, 3 \textsc{support}), $4$ attacks per task,
$40$ attacks and $80$ honest proposals in total. Attacks were
generated with \texttt{gpt-4o-mini} at a cost of \$0.014, and
all $40/40$ are schema-compliant in the sense that matters
here: each emits the honest-plurality verdict and cites only
evidence ids that actually exist in the task. Generated attacks
are substituted into the two Byzantine slots of the real MVR-50
views, re-encoded with the frozen CRSE checkpoint, and run
through the real protocol implementation --- these are
generated attacks evaluated by executed code, not
counterfactual relabellings. Ten tasks is not enough to
estimate a rate to better than roughly $\pm 0.15$; it is enough
to establish the presence of an effect whose point estimates
run to $100\%$.

\subsection{Result 1: The Semantic Predicate Does Not Filter
Verdict-Preserving Attacks}
\label{sec:eval-r1}

Table~\ref{tab:b67-penetration} reports, for each of the four
Family~II configurations, the fraction of attacks that (a) land
inside the $\thalpha$ ball around the honest geometric median,
(b) are admitted to the H-CSC selected core, and (c) are
admitted to M3's certified verdict group.

\begin{table}[t]
\centering
\caption{Result 1: Family~II penetration.
\textbf{Preliminary; $n=10$ attacks per row, $40$ total, $N=10$
tasks.} Real CRSE encodings, real protocol code. ``in
$\thalpha$ ball'': within $0.65$~rad of the honest geometric
median. ``core'': admitted to the H-CSC selected core $\core$.
``M3 group'': admitted to the certified verdict group of the
verdict-only baseline.}
\label{tab:b67-penetration}
\small
\setlength{\tabcolsep}{5pt}
\begin{tabular}{@{}l l c c c@{}}
\toprule
evasion & attack & in $\thalpha$ ball & \textbf{into $\core$} & into M3 group \\
\midrule
naive   & \textsc{B6} scope shift & 60\% & \textbf{100\%} & 100\% \\
naive   & \textsc{B7} reasoning subst. & 30\% & \textbf{70\%} & 100\% \\
evasive & \textsc{B6} scope shift & 90\% & \textbf{100\%} & 100\% \\
evasive & \textsc{B7} reasoning subst. & 40\% & \textbf{80\%} & 100\% \\
\bottomrule
\end{tabular}
\end{table}

\paragraph{Round level.}
Under both evasion levels H-CSC commits on $10/10$ rounds and
\textbf{$100\%$ of committed payloads are poisoned --- exactly
the same rate as M3}. $\INVhmaj$ is $1$ for both methods and
carries no information about this attack, because the attack
does not touch the verdict. On the very attack family the semantic layer exists to stop, it buys nothing
over the verdict-only baseline.

\paragraph{The gap between ball and core.}
\textsc{B7} lands inside the $\thalpha$ ball on only $30\%$ /
$40\%$ of attempts but enters the core on $70\%$ / $80\%$. The
excess is the implementation's use of a largest \emph{connected
component} where Definition~\ref{def:admissible-round}
specifies a \emph{ball}: transitive chaining through honest
proposals pulls in points the ball criterion would have
excluded (\S\ref{sec:eval-retained}, ``core-extractor
fidelity''). Tightening the extractor to a strict ball would
recover part of \textsc{B7} but nothing of \textsc{B6}, which
lands inside the ball on $60\%$ / $90\%$ of attempts.

\paragraph{Evasion works in the intended direction.}
Asking the adversary to imitate honest surface form raises
\textsc{B6}'s in-ball rate from $60\%$ to $90\%$ and
\textsc{B7}'s from $30\%$ to $40\%$. The admissibility ball is
a surface-form-sensitive object, and an adversary told about it
in prose --- not even given gradient access --- moves inside it.

\subsection{Result 2: Which Instrument Should Decide
Admissibility?}
\label{sec:eval-r2}

Result~1 says the learned encoder fails. It does not say the
architecture fails, because the encoder is one choice of
instrument for one predicate in it. This subsection replaces
that choice with four alternatives and measures them on the same attacks: $200$ attacks and $400$ honest proposals over
$50$ tasks for the CRSE, lexical and evidence-id instruments, and the earlier
$10$-task subset ($40$ attacks, $80$ honest) for the LLM judge and NLI. To our knowledge
this is the first systematic comparison of admissibility
instruments for a semantic-agreement protocol.

\paragraph{The five instruments.}
All five are scored so that \emph{higher means more suspicious},
and all five are computable by an honest node from the
delivered view alone (no gold labels, no role labels).
\begin{description}[leftmargin=1.1em,itemsep=1pt,topsep=2pt]
\item[CRSE angular distance (the shipped instrument).] Angular
distance from the proposal's CRSE embedding to the geometric
median of the honest embeddings, leave-one-out for honest
proposals. Requires the $442$~MB fine-tuned checkpoint and a
forward pass per proposal.
\item[Lexical conformity.] Let the round's \emph{consensus
vocabulary} be the set of lower-cased whitespace tokens
appearing in more than half of the other proposals' rationale
and proposal text. The score is
$1 - \mathrm{Jaccard}(\text{tokens}(\rho_i),\ \text{consensus
vocabulary})$. No model, no parameters, no training, no
floating point.
\item[Evidence-id overlap.] The same construction over cited
evidence ids instead of tokens: $1 - \mathrm{Jaccard}$ against
the set of ids cited by more than half of the other proposals.
This is the cheap provenance anchor a verdict certificate could
bind for free.
\item[LLM judge.] \texttt{gpt-4o}, role-blind, shown the claim,
the cited evidence, the verdict and the rationale, asked to
score how well every substantive assertion is supported
($0$--$10$); the score is $10$ minus that value. One API call
per proposal.
\item[Off-the-shelf NLI.] A \texttt{deberta-v3} cross-encoder
NLI model; premise $=$ concatenated cited evidence, hypothesis
$=$ each rationale sentence; the score is the maximum
$P(\text{contradiction})$ over sentences.
\end{description}

\begin{table*}[t]
\centering
\caption{\textbf{Result 2 (centrepiece): the admissibility
instrument decides everything.} Cells are AUROC / TPR at a
$5\%$ honest false-positive rate, attacks versus honest
proposals. \textbf{$50$ tasks, $50$ attacks per cell, $200$ attacks total, against
$400$ honest proposals, for the CRSE, lexical and evidence-id rows.
The LLM-judge and NLI rows are the earlier $10$-task subset
($n=10$ per cell, $40$ attacks, $80$ honest; $40$ honest for the judge)
and are marked accordingly.}
The NLI row was run per attack type pooled over both evasion
levels ($n=20$ each), and is reported spanning the two columns
it covers. ``Det.'': deterministic and reproducible across
nodes without a shared model artefact.
Best AUROC per column in \textbf{bold}.}
\label{tab:instrument}
\small
\setlength{\tabcolsep}{5pt}
\begin{tabularx}{\linewidth}{@{}>{\raggedright\arraybackslash}X >{\raggedright\arraybackslash}X c cc cc@{}}
\toprule
& & & \multicolumn{2}{c}{\textsc{B6} scope shift} & \multicolumn{2}{c}{\textsc{B7} reasoning substitution} \\
\cmidrule(lr){4-5}\cmidrule(lr){6-7}
instrument & signal & Det. & naive & evasive & naive & evasive \\
\midrule
CRSE angular distance \emph{(shipped)} & learned embedding geometry & assumed$^\dagger$
  & 0.631 / 0\% & 0.621 / 2\% & 0.744 / 6\% & 0.724 / 8\% \\
\textbf{Lexical conformity} & token-set Jaccard & \textbf{yes}
  & \textbf{0.900} / \textbf{38\%} & \textbf{0.865} / \textbf{40\%} & \textbf{0.982} / \textbf{80\%} & \textbf{0.963} / \textbf{68\%} \\
Evidence-id overlap & cited-id Jaccard & yes
  & 0.565 / 0\% & 0.569 / 0\% & 0.614 / 0\% & 0.552 / 0\% \\
LLM judge (\texttt{gpt-4o}) & role-blind groundedness & no
  & 0.807 / 30\% & 0.775 / 10\% & 0.877 / 40\% & 0.859 / 50\% \\
Off-the-shelf NLI (\texttt{deberta-v3}) & max $P(\text{contradiction})$ & assumed$^\dagger$
  & \multicolumn{2}{c}{0.654 / 25\%} & \multicolumn{2}{c}{0.651 / 5\%} \\
\bottomrule
\end{tabularx}
\smallskip\par
\noindent\footnotesize $^\dagger$Deterministic only under
Assumption~\ref{assn:deterministic-cE}: identical results
require every honest node to run the same checkpoint, the same
library versions, and compatible floating-point hardware. Our
re-encoding check (\S\ref{sec:eval-setup}) reproduces cached
embeddings to $3.6\times10^{-7}$ on one stack, which is
reproducibility, not the bit-exactness the digest binding
assumes.
\end{table*}

\paragraph{Finding.}
\textbf{Lexical conformity dominates the learned encoder in
every one of the four cells}, by $0.25$--$0.27$ AUROC on
\textsc{B6} and $0.14$--$0.17$ on \textsc{B7}, and by $30$
percentage points of TPR@5\% on \textsc{B6} (where the encoder
catches nothing at all) and $50$ points on \textsc{B7}. It also
\textbf{beats the \texttt{gpt-4o} judge on both \textsc{B7}
cells} ($0.982$/$0.963$ versus $0.877$/$0.859$; $68$--$80\%$ TPR
versus $40\%$/$50\%$) and on both \textsc{B6} cells. The
ordering is stable across attack type and evasion level. The
evasive level --- an adversary explicitly told to imitate
honest vocabulary --- costs lexical conformity only $0.018$ and
$0.026$ AUROC.

\paragraph{Why this is surprising, and why it should not be.}
CRSE is a Byzantine-robustness-trained sentence encoder that
scores AUC $0.9946$ on BCS-v1's controlled poisons
(\S\ref{sec:eval-setup}). It loses to whitespace tokenisation
and a set intersection. The reconciliation is that the two
benchmarks ask different questions. Sentence embeddings are
trained to be \emph{invariant to surface form} and sensitive to
meaning at the level of topic and stance. Verdict-preserving
evidential poisoning is precisely a \emph{surface-preserving,
content-altering} edit: it keeps topic, stance, register and
citation set, and moves a quantifier, a time span, or a causal
connective. The encoder's invariance is not a defect being
exposed --- it is the trained-for behaviour, applied to the one
attack class where it is exactly wrong. The lexical predicate,
by contrast, keeps every token, including the ones the poison
turns on.

\paragraph{Three implications for the design of
semantic-agreement protocols.}
\emph{First, the admissibility predicate should be a conformity
predicate, not a comprehension predicate.} An agreement
protocol needs to certify that a set of agents said the same
thing; it does not need to understand what they said. Lexical
conformity measures exactly the former, which is why it
performs well here --- and it is honest about its limits: it
inherits whatever the crowd agrees on, including the crowd's
errors (\S\ref{sec:eval-r4}).
\emph{Second, a deterministic instrument discharges an
assumption the correctness proofs currently make.}
Lemma~\ref{lem:digest-consistency} and every theorem downstream
of it rest on Assumption~\ref{assn:deterministic-cE}: two
honest nodes computing the predicate must agree bit-for-bit. A
$442$~MB floating-point checkpoint satisfies this by
stipulation and by careful deployment; a token-set Jaccard
satisfies it by construction, over a canonical serialisation
the protocol already fixes. Replacing the instrument turns an
assumption into a fact, removes a $442$~MB artefact from the
trusted computing base, and makes the admissibility decision
re-checkable by any verifier holding the delivered view.
\emph{Third, the cheap provenance anchor does not work.}
Evidence-id overlap scores between $0.552$ and $0.614$ ---
\textbf{chance, in all four cells, and below chance in two}.
This is worth stating plainly because evidence ids are the one
semantic-adjacent field a verdict certificate could bind at
zero cost, and the natural conjecture is that binding them
recovers most of the value of a semantic commit. It does not:
Family~II cites only real, visible evidence ids by construction,
so provenance-by-identifier is structurally blind to
provenance-preserving \emph{misreading}. Binding the citation
set is worth doing for audit, and buys no detection.

\paragraph{What this result does not license.}
Lexical conformity is trivially gameable in principle by an
adversary that copies honest vocabulary wholesale; our evasive
level asks for that in prose but does not optimise against a
\emph{known} lexical predicate, and an adaptive adversary with
white-box access to the instrument is a strictly stronger threat
model we did not run. AUROC differences of $0.14$ on $n=10$
attacks per cell are directional, not tight. And a conformity
predicate provides no grounding guarantee whatsoever: it can
certify that ten agents made the same unsupported claim. We
claim an ordering between instruments on this attack family at
this scale, not a defence.

\subsection{Result 3: Honest Dispersion Exceeds Attack
Displacement}
\label{sec:eval-r3}

Results~1 and~2 have the same mechanical cause, and it is not
specific to CRSE. Table~\ref{tab:dispersion} and
Figure~\ref{fig:dispersion} give it.

\begin{table}[t]
\centering
\caption{Result 3: the signal-to-noise inequality.
\textbf{Preliminary; $80$ honest proposals and $40$ attacks over
$N=10$ tasks.} Angular distances are leave-one-out for honest
proposals. Attack columns give the range of the four
(type, evasion) cell medians.}
\label{tab:dispersion}
\footnotesize
\setlength{\tabcolsep}{3.5pt}
\begin{tabular}{@{}l ccccc@{}}
\toprule
& \multicolumn{4}{c}{honest} & attack \\
\cmidrule(lr){2-5}\cmidrule(lr){6-6}
signal & p50 & p90 & p95 & max & p50 range \\
\midrule
angular distance to honest GM (rad) & 0.528 & \textbf{0.992} & 1.126 & 1.281 & 0.58--0.85 \\
judge ungroundedness ($0$--$10$)    & 2.00  & --    & \textbf{5.00}  & --    & 3.0--5.5 \\
\bottomrule
\end{tabular}
\end{table}

\paragraph{There is no threshold.}
The admissibility radius is $\thalpha = 0.65$~rad. The honest
$90$th percentile is $0.992$~rad --- $53\%$ beyond the radius.
A majority of honest proposals sit in the band
$0.5$--$0.65$~rad, immediately inside the boundary, so the ball
at $0.65$ barely contains the honest population itself and has
no room to be tightened. Meanwhile the largest attack-cell
median displacement is $0.85$~rad. \textbf{Any threshold that
retains $90\%$ of honest proposals must sit at $\ge 0.992$~rad
and therefore admits every attack cell; any threshold that
excludes the attacks discards more than a tenth of honest
work.} The two distributions are not separated at their medians
and the honest one has the longer tail.

\paragraph{The same inequality holds for a judge.}
This is not an embedding artefact. Under a role-blind
\texttt{gpt-4o} judge scoring ungroundedness on $0$--$10$, the
honest median is $2.00$ and the honest $95$th percentile is
$5.00$, while the attack-cell medians span $3.0$--$5.5$. Honest
LLM rationales are themselves ungrounded often enough that the
honest upper tail reaches the attack median. Changing the
instrument from geometry to entailment to a frontier LLM
changes the numbers and not the inequality.

\paragraph{The generalisable claim.}
\emph{On an agent population whose own honest members are this
dispersed, no semantic admissibility filter can be built ---
geometric, entailment-based, or judge-based --- because the
signal-to-noise ratio does not exist.} This is a statement
about the population, not the encoder. \S\ref{sec:eval-r6}
shows it is also a statement about the output schema, and can
be changed by changing that.

\begin{figure*}[t]
\centering
\includegraphics[width=\textwidth]{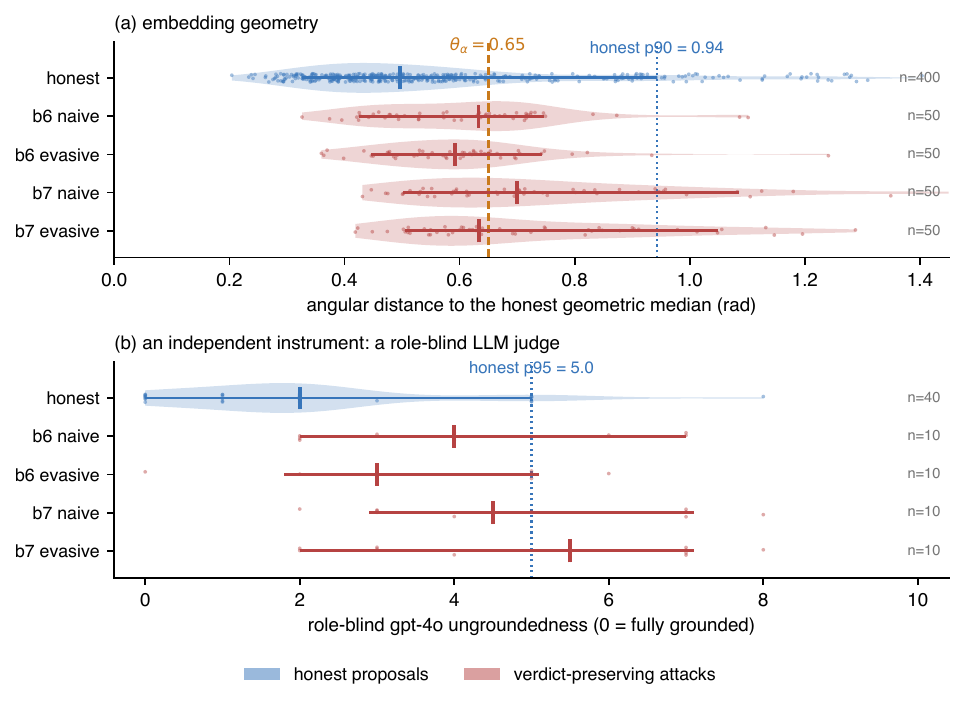}
\caption{\textbf{Honest dispersion is wider than the displacement attacks cause, on two
independent instruments.} Each row shows the full distribution: dots are individual
proposals, the bar spans the $10$th--$90$th percentile, and the tick marks the median;
violins are drawn only where $n\ge 40$, since a density estimate over ten points would
be misleading. \emph{(a)} Angular distance to the honest geometric median, honest
proposals leave-one-out, over $50$ tasks. The admissibility radius
$\thalpha=0.65$~rad already cuts through the middle of the honest population, and
retaining $90\%$ of honest proposals would require $\thalpha\ge0.94$, which admits every
attack cell. \emph{(b)} A role-blind \texttt{gpt-4o} judge reproduces the same
inequality on the $10$-task subset it was run on: the honest $95$th percentile of
ungroundedness sits at or above three of the four attack-cell medians. The overlap, not
the gap between medians, is what determines whether any threshold can separate the two
populations --- and on both instruments the honest distribution contains the attacks.}
\label{fig:dispersion}
\end{figure*}

\subsection{Result 4: The Attacks Are Genuinely Harmful}
\label{sec:eval-r4}

A defence that admits harmless attacks has not failed. We
therefore scored every honest and attack proposal with a
role-blind \texttt{gpt-4o} judge before looking at penetration,
on a $0$--$10$ groundedness scale
(Table~\ref{tab:harm}). Two directions must be separated.
\emph{Attack strength cannot explain the null result}: attack
strength and defence performance move in opposite directions,
and weak attacks already penetrate at $85$--$100\%$, so a
stronger adversary can only penetrate more. \emph{Attack
harmlessness could}, and is what we test here.

\begin{table}[t]
\centering
\caption{Result 4: role-blind \texttt{gpt-4o} groundedness
scoring, $0$--$10$ (higher is better grounded).
\textbf{Preliminary; $40$ attacks and their honest comparison
set over $N=10$ tasks; single judge, no human-validated
subsample.}}
\label{tab:harm}
\small
\setlength{\tabcolsep}{6pt}
\begin{tabular}{@{}l c c@{}}
\toprule
& honest & attack \\
\midrule
grounded $\ge 8$ (essentially clean)        & \textbf{82\%} & \textbf{20\%} \\
grounded $\le 4$ (clearly problematic)      & \textbf{2\%}  & \textbf{32\%} \\
unsupported assertions (median / mean)      & 1.0 / 0.93    & 2.0 / 2.20 \\
\bottomrule
\end{tabular}
\end{table}

\paragraph{The attacks are harmful.}
$32\%$ of attacks score $\le 4$ against $2\%$ of honest
proposals, and only $20\%$ of attacks are essentially clean
against $82\%$ of honest proposals; the median attack carries
twice the unsupported assertions of the median honest proposal.
Manual inspection confirms the extreme: the worst admitted
attack keeps the verdict \textsc{refute} while writing a
rationale asserting that the claim \emph{is supported by the
evidence} and hinting at manipulation --- a payload whose
reasoning directly contradicts its own verdict, sufficient to
mislead any downstream consumer that reads the rationale. It
entered the core. Result~1 is a real failure, not an artefact
of a weak adversary.

\paragraph{The uncomfortable half of this table.}
About $18\%$ of \emph{honest} proposals score $\le 7$, and one
scores $2.0$: a bare assertion citing no evidence at all.
Honest LLM agents that are hurried or lazy produce output
indistinguishable, on every instrument we tried, from a
temperate adversary. \textbf{The two populations are not
separable because they overlap in fact, not because the
detectors are bad.} Any protocol claiming semantic safety over
a population like this is claiming to separate distributions
that have not been shown to be separate --- and this is a
property of the agent population and its output contract, both
of which a protocol designer can change (\S\ref{sec:eval-r6}),
and neither of which a better encoder can fix.

\subsection{Result 6: The Output Schema Decides Feasibility}
\label{sec:eval-r6}

Result~3 identifies the binding constraint as an inequality
between honest dispersion and attack displacement. That
inequality is a property of the \emph{output contract} the
agents write under, and an output contract is a design choice.
We test it directly.

\paragraph{Design.}
Everything is held fixed --- the same $8$ Climate-FEVER tasks,
the same evidence, the same \texttt{gpt-4o-mini}, the same
frozen CRSE encoder, the same population of $8$ honest agents
plus $2$ verdict-preserving attackers --- and only the output
schema varies across three arms: \texttt{free\_text} (verdict
plus a $\le 120$-word prose rationale, reproducing the main
setup), \texttt{structured} (verdict plus fixed slots ---
decisive evidence id, relation, key quantity, direction, scope
--- with prose forbidden), and \texttt{shortqa} (an extractive
question generated from the evidence, answered in $\le 6$
words).

\begin{table}[t]
\centering
\caption{Result 6: honest dispersion collapses $20\times$ from
free text to short answers while attack displacement does not
move. $\textit{sep} = $ attack median $-$ honest p90; positive
means the typical attack lands outside the honest bulk.
\textbf{Preliminary; $N=8$ tasks $\times$ 3 arms $\times$
($8$ honest $+$ $2$ attacks); freshly generated honest
population; attacks in this study are cruder than
\textsc{B6}/\textsc{B7} (see caveats).} Arms are comparable to
each other and \emph{not} directly comparable to
\S\S\ref{sec:eval-r1}--\ref{sec:eval-r5}.}
\label{tab:schema}
\footnotesize
\setlength{\tabcolsep}{3.5pt}
\begin{tabular}{@{}l ccc c cc@{}}
\toprule
& \multicolumn{2}{c}{honest} & attack & & \multicolumn{2}{c}{AUROC / TPR@5\%} \\
\cmidrule(lr){2-3}\cmidrule(lr){4-4}\cmidrule(lr){6-7}
arm & p50 & p90 & p50 & \textit{sep} & CRSE & lexical \\
\midrule
\texttt{free\_text}  & 0.430 & 0.679 & 0.671 & $-0.008$ & 0.917 / 38\% & \textbf{0.963 / 81\%} \\
\texttt{structured}  & 0.329 & 0.626 & 0.674 & $+0.049$ & 0.906 / 50\% & \textbf{0.953 / 75\%} \\
\texttt{shortqa}     & \textbf{0.020} & 0.435 & 0.689 & $\mathbf{+0.253}$ & \textbf{0.942} / 50\% & 0.922 / 50\% \\
\bottomrule
\end{tabular}
\end{table}

\paragraph{Finding.}
The honest median distance collapses by a factor of $20$, from
$0.430$ under free text to $0.020$ under short answers, while
the attack median barely moves ($0.671 \to 0.689$). The
separation $\textit{sep}$ flips sign, from $-0.008$ to
$+0.253$. \textbf{Whether a semantic-agreement protocol is
feasible at all is decided by the output schema, not by the
protocol design or the encoder.} This is, in our view, the most
transferable finding in the paper: it says that the correct
first move for anyone building agreement over LLM semantics is
to constrain what agents are allowed to emit, and only then to
choose an instrument.

\paragraph{Two honest qualifications.}
\emph{First, it does not rescue the encoder.} Lexical
conformity again matches or beats CRSE --- decisively on
\texttt{free\_text} ($81\%$ versus $38\%$ TPR@5\%) and
\texttt{structured} ($75\%$ versus $50\%$), and within
$0.02$ AUROC on \texttt{shortqa}, where the two tie on TPR. The
architecture's premise is that a \emph{learned} encoder is
needed for the admissibility judgement; in the regime where the
inequality flips, a token-overlap heuristic does the work.
\emph{Second, the \texttt{shortqa} regime degenerates.} An
honest median distance of $0.020$~rad means the honest
canonical texts are near-identical strings. At that point
``semantic commitment'' commits an answer string, and string
equality --- or a hash --- is a sufficient admissibility
predicate. The embedding is not merely beaten; it is
unnecessary.
\emph{Third, the caveats.} Attacks in this arm study are
cruder than \textsc{B6}/\textsc{B7} (the prompt asks only to
``corrupt the reasoning'', with no scope-shift /
reasoning-substitution split and no evasion levels), so every
AUROC in Table~\ref{tab:schema} is optimistic --- CRSE scores
only $0.62$--$0.82$ on the real Family~II
(Table~\ref{tab:instrument}). The honest population is $8$
freshly generated profiles, not the MVR-50 honest agents, which
is why \texttt{free\_text} honest p90 here is $0.679$ against
$0.992$ in \S\ref{sec:eval-r3}. Each arm has $16$ attacks over
$8$ tasks. The arms are internally comparable; the absolute
numbers are not comparable to
\S\S\ref{sec:eval-r1}--\ref{sec:eval-r5}.

\subsection{Negative Results We Are Obliged to Report}
\label{sec:eval-negative}

Six further results run against the paper's original thesis. We
report them because a reader deciding whether to build on this
architecture needs them.

\paragraph{H1. Semantic tightness does not predict correctness.}
If the geometry carried information about whether a round's
committed verdict is right, tighter cores should be more often
correct. They are not. As predictors of honest-reference
validity, the witness radius $r^\star(\core)$ scores AUROC
$0.422 / 0.395$ (static / rushing) and the core diameter
$0.326 / 0.391$ --- all four \emph{below} chance. The
$\Psi^{\textsc{verdict}}$ component that needs no encoder at
all, $\textit{verdict\_margin}$, scores $0.730 / 0.630$
(Table~\ref{tab:tightness}). Every predictive signal in the
finality-signal vector on this benchmark comes from counting
verdicts.

\begin{table}[t]
\centering
\caption{H1: predicting honest-reference validity from the
finality-signal vector, MVR-50, AUROC. Below $0.5$ is worse
than chance. Only the encoder-free component predicts.}
\label{tab:tightness}
\small
\setlength{\tabcolsep}{6pt}
\begin{tabular}{@{}l c c@{}}
\toprule
signal & static & rushing \\
\midrule
witness radius $r^\star(\core)$ (semantic) & 0.422 & 0.395 \\
core diameter $\mathrm{diam}(\core)$ (semantic) & 0.326 & 0.391 \\
$\textit{verdict\_margin}$ (\textbf{no encoder}) & \textbf{0.730} & \textbf{0.630} \\
\bottomrule
\end{tabular}
\end{table}

\paragraph{H2. Semantic commits are not the safer branch.}
Under static attacks the gold-invalid rate of
\textsc{semantic\_commit} rounds is $0.270$ against $0.125$ for
\textsc{verdict\_commit} rounds. The typed outcome that carries
the embedding-backed digest is the one more often wrong against
the dataset reference. (Part of this is selection: the
verdict-commit branch fires on rounds the semantic branch
declined, which are not an equal-difficulty comparison set.
We report the raw split.)

\paragraph{H3. The certified aggregate is farther from the
honest centre than a steelmanned baseline's would be.}
The H-CSC core aggregate sits $5.32^\circ / 5.48^\circ$ from
the honest centre. Taking the same geometric median over the
\emph{entire} verdict group --- what a steelmanned M3 would
publish if asked to publish an aggregate at all --- gives
$4.27^\circ / 4.32^\circ$. \textbf{H-CSC's filtered aggregate
is worse}, with CIs excluding zero in both modes. Filtering to
the $2f{+}1$ core discards honest mass, and on this benchmark
the discarded honest mass costs more than the excluded
Byzantine mass saves.

\paragraph{H4. The honest-retention advantage does not survive
a steelmanned baseline.}
Against the paper's M3 (truncated at $2f{+}1$ signers) H-CSC
retains $0.6475$ of honest proposals against $0.4950$, with
disjoint CIs --- the v1 framing. Against a steelmanned M3 that
certifies the whole verdict group, H-CSC retains $0.6475$
against $0.7325$: H-CSC is \emph{lower} by $0.085$, CI
$[-0.133, -0.043]$. \textbf{Honest retention is not usable as a
selling point}, and we withdraw it.

\paragraph{H5. The study is underpowered for the effects it was
asked to resolve.}
A paired McNemar design at $80\%$ power needs roughly
$780$--$2\,000$ tasks to detect $\Delta\CR = 0.02$. At $N=50$
the minimum detectable effect is $\approx 0.04$. Every
H-CSC-versus-M3 gap in Table~\ref{tab:mvr50} is smaller than
that. Reviewer-style demands for statistical separation on
these axes cannot be met by more careful analysis of this
benchmark; they require a benchmark an order of magnitude
larger --- and, per Lemma~\ref{lem:containment}, would then
confirm the identity rather than a separation.

\paragraph{H6. The one apparent safety gap is a single task of
noise.}
M3's $\INVhmaj = 0/50$ has a Clopper--Pearson $95\%$ upper
bound of $0.058$, which contains H-CSC's $0.02$. The v1
statement that ``M3 is uniformly safer in the bootstrap sense''
should be read as one task, and after the margin-gate fix of
Table~\ref{tab:margin-gate} that task disappears.

\paragraph{Summary of what the evaluation supports.}
\emph{Supported:} the containment lemma's coverage identity, to
the task; coverage recovery of $+0.36 / +0.44$ over strict CSC
with disjoint CIs, which justifies the typed hierarchy; the
tie-break capture vulnerability and the price of its fix;
cross-model stability of the typed-decision logic; and, at
pilot scale, an ordering over admissibility instruments in
which a deterministic lexical predicate dominates the learned
encoder on the attack family that reaches the semantic layer.
\emph{Not supported:} any filtering, safety, or damage-bound
property attributable to the learned semantic predicate; any
coverage or safety separation from verdict-only certification;
any downstream value for the embedding-backed digest that we
were able to measure. \emph{Undetermined:} everything below the
$0.04$ effect size at $N=50$, and everything about Family~II
below roughly $\pm 0.15$ at $N=10$.
\section{What Is the Committed Sketch Worth Downstream?}
\label{sec:audit}

Sections~\ref{sec:eval-r1}--\ref{sec:eval-r6} concern the \emph{admissibility}
question: can a semantic predicate keep an adversary out of the certified set?
The answer was no, for every instrument we tried. This section asks the separate
\emph{consumption} question: given that a round did commit, is the committed
payload worth anything to a party downstream? This is the question a
finality-control primitive ultimately has to answer, and it is the one our
earlier version asserted rather than measured.

\subsection{The recomputation constraint, and the only setting that escapes it}
\label{sec:audit-threat}

It is worth stating plainly why most versions of this experiment are vacuous.
Suppose a deployment in which (S)~each proposal is individually signed,
(E)~the encoder and quantiser are deterministic and published in the parameter
digest, and (D)~the consumer obtains the $2f{+}1$ signed proposals named by the
certificate. Then a consumer holding \emph{only} a verdict-only certificate can
recompute $\core$, $\ystar$, $\quant{\ystar}$ and $\hstar$ exactly: the semantic
payload is a deterministic function of information it already holds, and conveys
zero additional information. Worse, the semantic check itself requires running
$\cE$ on the candidate, so a consumer able to \emph{use} the digest is by
construction able to \emph{reproduce} it; encoder unavailability cannot supply
the asymmetry because it disables both arms.

Every measurable advantage of a committed semantic payload must therefore come
from breaking (S), (E) or (D). We break (D), and only (D), because it is the one
that real deployments break for reasons that have nothing to do with this
paper. Agent rationales routinely contain personal data subject to
data-minimisation and storage-limitation obligations and to erasure requests,
while the commitment record --- a fixed-size, content-free-looking digest --- is
retained; an auditor arriving twelve months later has the record and not the
texts. Cross-organisational deployments produce the same asymmetry for
commercial reasons: the committee runs inside one organisation and only the
decision crosses the boundary. We do not treat non-disclosure as a hypothetical
threat model; we treat it as the ordinary consequence of retention policy.

\paragraph{The consumer.} An untrusted \emph{relay} sits between the committee
and an \emph{auditor}. The auditor holds the commitment record --- the
certificate and whatever fixed-size sketch the protocol chose to bind --- and no
agent texts. The relay later presents a \emph{derived statement}: its account of
what the committee concluded and why. The auditor must accept or reject. The
$2f{+}1$ certificate is what prevents the relay from substituting the sketch
itself; what we measure here is whether, given an unsubstitutable sketch, the
auditor can tell a faithful derived statement from a corrupted one.

\subsection{Design}
\label{sec:audit-design}

For each of the $50$ MVR-50 tasks we form the committed core from the honest
agents holding the plurality verdict and compute five candidate sketches over
it: the geometric-median embedding $\quant{\ystar}$; MinHash sketches at
$64$/$128$/$256$ permutations over the core's majority token set; a $128$-bit
SimHash; the SHA-256 set of the canonical core texts; and the verdict alone.
Derived statements come in two faithful and two substituted classes, all
generated by the same model (\texttt{gpt-4o-mini}) so that generator identity is
not a confound:

\begin{description}[leftmargin=1.4em,itemsep=0.15em]
\item[\textsc{f1} paraphrase ($n{=}100$).] A core rationale reworded to share as
little surface form as possible while preserving every claim, the verdict and
the cited evidence ids.
\item[\textsc{f2} synthesis ($n{=}50$).] One statement synthesised from the whole
core --- the legitimate derived answer a reporting agent would actually produce.
\item[\textsc{b6} scope shift ($n{=}100$).] Same verdict, same evidence ids;
quantifier, time range or causal strength silently widened.
\item[\textsc{b7} reasoning substitution ($n{=}100$).] Same verdict, same
evidence ids; an entirely different and unsupported chain of reasoning.
\end{description}

All $200$ substituted statements preserve the honest plurality verdict exactly,
so a verdict-only digest is blind to them by construction --- which is the point
being tested, not an artefact of the setup. Faithful classes are deliberately
hard: \textsc{f1} is lexically distant from the core by design, so an instrument
that merely rewards surface overlap will misclassify it.

\begin{table*}[t]
\centering
\caption{\textbf{Non-disclosure audit.} The auditor holds only the committed
sketch. Separating substituted from faithful derived statements, $n{=}350$
($150$ faithful, $200$ substituted) over $50$ tasks. TPR is measured at a $5\%$
honest false-positive rate. The two degenerate rows score exactly $0.500$: a
cryptographic hash cannot compare a paraphrase to anything, and a verdict-only
digest cannot see a verdict-preserving substitution.}
\label{tab:audit}
\small
\begin{tabularx}{\linewidth}{@{}l >{\raggedright\arraybackslash}X c c c@{}}
\toprule
committed payload & property & AUROC & 95\% CI & TPR@5\%FPR \\
\midrule
embedding sketch $\quant{\ystar}$      & similarity-preserving, non-deterministic & \textbf{0.741} & $[0.69, 0.79]$ & 21\% \\
embedding, $z$ vs.\ core dispersion    & similarity-preserving                    & 0.693 & $[0.64, 0.75]$ & 20\% \\
MinHash-256                            & similarity-preserving, deterministic     & 0.642 & $[0.59, 0.70]$ & 19\% \\
MinHash-128                            & similarity-preserving, deterministic     & 0.620 & $[0.55, 0.68]$ & \phantom{0}4\% \\
evidence-id overlap                    & structural                               & 0.499 & $[0.44, 0.56]$ & \phantom{0}0\% \\
\midrule
plain SHA-256 of core texts            & \emph{not} similarity-preserving         & 0.500 & ---            & \phantom{0}0\% \\
verdict-only digest                    & label only                               & 0.500 & ---            & \phantom{0}0\% \\
\midrule
trained combiner (task-level 5-fold CV)& requires labelled examples               & 0.753 & $[0.70, 0.80]$ & 23\% \\
\bottomrule
\end{tabularx}
\end{table*}

\subsection{Result: the capability is real, modest, and sharply bounded}
\label{sec:audit-result}

Table~\ref{tab:audit} gives the outcome. A similarity-preserving committed
sketch supports the audit at AUROC $0.741$ $[0.69, 0.79]$; a cryptographic hash
of the same content, bound by the same certificate, scores \emph{exactly}
$0.500$, as does a verdict-only digest. The operative property is therefore not
``the payload is content-level'' and not ``the payload is certified'' --- the
hash is both --- but that the payload preserves similarity. That is the
non-trivial part of the comparison, and the degenerate rows are what make it
non-trivial.

Three qualifications belong in the same breath as the headline.

\paragraph{The effect is modest.} A $21\%$ true-positive rate at $5\%$ false
positives is a detectable signal, not a usable detector. We do not present this
as a demonstration that the machinery pays for itself; we present it as a
measurement of a capability whose size was previously unknown.

\paragraph{Normalising by the committee's own dispersion does not help.} On a
$10$-task subset the $z$-scored variant appeared to raise TPR from $20\%$ to
$32\%$, which we reported internally as a principled improvement. At full scale
it is worse than the raw distance ($0.693$ against $0.741$). The apparent gain
was noise, and we record it because the same reasoning would have produced a
plausible and wrong protocol change.

\paragraph{A trained combiner barely helps, and only with labels.} Under
task-level $5$-fold cross-validation a logistic combination of all features
reaches $0.753$, marginally above the best single threshold. On the $10$-task
subset the same procedure scored \emph{below} the single threshold ($0.655$),
which is what proper cross-validation is for. A deployed auditor would also need
labelled faithful/substituted examples to fit it, which the single-threshold
arms do not.

\begin{table}[t]
\centering
\caption{\textbf{Where the capability lives.} Same experiment, decomposed by
attack class against all $150$ faithful statements, using the embedding sketch.
Detection tracks how damaging the attack is: \textsc{b7} scores a median
ungroundedness of $4.5$--$5.5$ on a role-blind \texttt{gpt-4o} scale where
\textsc{b6} scores $3.0$--$4.0$.}
\label{tab:audit-split}
\small
\begin{tabular}{@{}lccc@{}}
\toprule
attack class & $n$ & AUROC & TPR@5\%FPR \\
\midrule
\textsc{b7} reasoning substitution, naive   & 50 & 0.774 & 32\% \\
\textsc{b7} reasoning substitution, evasive & 50 & 0.765 & 26\% \\
\textsc{b6} scope shift, naive              & 50 & 0.614 & 14\% \\
\textsc{b6} scope shift, evasive            & 50 & 0.620 & \phantom{0}6\% \\
\midrule
\textsc{b7} pooled & 100 & \textbf{0.770} & 29\% \\
\textsc{b6} pooled & 100 & 0.617 & 10\% \\
\bottomrule
\end{tabular}
\end{table}

\subsection{The boundary is the same one the representation could not encode}
\label{sec:audit-boundary}

Table~\ref{tab:audit-split} decomposes the result, and the decomposition is more
informative than the pooled number. Reasoning substitution is detected at
$0.770$; scope shift is not detected at $0.617$. The gap is not a tuning
artefact and it does not close with more data --- it held from $n{=}20$ to
$n{=}100$ per class.

The reason is structural. A faithful paraphrase and a scope shift are, in form,
the same object: the same verdict, the same cited evidence, different wording.
Separating them requires resolving the semantics of quantifiers, time ranges and
modality --- exactly what neither an embedding nor a token set encodes. This is a
sharper statement of the representation boundary than we were able to give
before: not the abstract observation that identical embeddings are
indistinguishable, but a measured account of which semantic distinction survives
the representation and which does not.

One consequence is worth stating because it is the property we looked for at the
admissibility layer and did not find. Detection here is \emph{graded by damage}:
the class that a role-blind judge scores as substantially less grounded is also
the class the auditor catches. At the admissibility layer this gradient existed
in the raw signal (angular distance correlated with judged damage at $+0.429$,
$95\%$ CI $[+0.130,+0.653]$) but was erased by the typed fallback, which admits
the whole verdict group whenever the semantic path fails. The audit path has no
fallback, so the gradient survives to the consumer.

\subsection{Threats to validity for this section}
\label{sec:audit-threats}

The faithful classes are generated, not naturally occurring; \textsc{f2} in
particular is one model's idea of a faithful synthesis, and a human reporter
might produce statements that are harder or easier. Both attack families come
from a single generator at one capability level; a stronger adversary given the
committed sketch and the auditor's threshold could optimise against them
directly, which we have not attempted. The auditor uses a single global
threshold; a deployment would calibrate per task family. And the non-disclosure
setting, while we argue it is the ordinary consequence of retention policy
rather than a constructed scenario, is load-bearing for the entire section: with
full disclosure the semantic sketch has no advantage at all, and we measured
that directly --- attack acceptance is identical between the two arms when every
core text is disclosed.

\section{Limitations and Threats to Validity}
\label{sec:limitations}

This version makes three claims: a containment result about
matched deterministic guarantees, a vulnerability report against
our own protocol, and a comparison of admissibility instruments.
We list the limitations of each in the order a sceptical reader
will reach them, and we do not repeat material already argued in
\S\ref{sec:intro} or \S\ref{sec:evaluation}.

\paragraph{1. The instrument study is preliminary and small.}
The instrument comparison in \S\ref{sec:eval-r2} rests on
$10$ Climate-FEVER tasks, $40$ generated attacks (two attack
families $\times$ two evasion tiers $\times$ $10$ tasks), and
$80$ honest proposals. Every AUROC and TPR cell in that study is
therefore estimated from $10$ positives against $80$ negatives,
and a TPR at a $5\%$ honest false-positive rate has a resolution
of $10$ percentage points by construction. We report these
numbers as directional, not definitive. The direction is large:
lexical conformity reaches AUROC $0.865$--$0.982$ against the
CRSE angular distance's $0.621$--$0.744$, and the corresponding
TPR gap is $30\%$--$60\%$ against $0\%$--$10\%$. A study at
$\ge 50$ tasks is the obvious next step and is cheap; we did not
run it before this revision, and until it exists the ranking
between the two weakest instruments (evidence-ID overlap and
off-the-shelf NLI) should be treated as unresolved.

\paragraph{2. A single attacker model.}
All $40$ attacks were generated by \texttt{gpt-4o-mini}
(all $40$ satisfied the generation constraints: verdict equal to
the honest plurality, and only real evidence identifiers cited).
A stronger attacker model was not tested. The logic here is
asymmetric and we state it explicitly rather than leaving it to
the reader: attacker strength and defence performance move in
opposite directions, so a stronger attacker cannot rescue the
learned predicate --- it already admits $85\%$--$100\%$ of these
attacks. What a stronger attacker \emph{could} change is the
ranking between instruments, since an adversary that is better
at imitating honest phrasing hurts the lexical predicate more
than it hurts the geometric one. That experiment is the natural
companion to limitation~3.

\paragraph{3. The lexical predicate's weakness is known and is
the next attack tier.}
The predicate that wins in \S\ref{sec:eval-r2} scores a
proposal by its bag-of-words overlap with the honest consensus
vocabulary. An adversary that copies the honest vocabulary
token-for-token while inverting or corrupting the relation
between those tokens defeats it by construction, and we did not
test one. Our evasive tier asked the generator to imitate honest
vocabulary, syntax, length, and confidence, which is weaker than
enforcing token-level copying with meaning inversion. We
therefore do not claim that lexical conformity is a secure
admissibility predicate; we claim, and only claim, that on the
two verdict-preserving attack families we measured it dominates
the learned encoder in every cell, and that it does so while
being deterministic, dependency-free, and auditable. Two
qualifications cut in its favour and should be checked rather
than assumed. An adversary constrained to reproduce the honest
lexicon exactly is constrained to a much smaller output set,
which is itself a security-relevant narrowing of the attack
surface. And the determinism advantage
(limitation~9) is independent of detection performance: it
survives even if a vocabulary-matched attacker erases the
detection gap entirely.

\paragraph{5. The LLM judge is not validated against human
labels.}
Harm is measured by a single judge (\texttt{gpt-4o}) run blind
to the honest/Byzantine role, with no human-annotated
calibration subset. The judge separates the populations clearly
($82\%$ of honest proposals score groundedness $\ge 8$ versus
$20\%$ of attacks; $2\%$ versus $32\%$ score $\le 4$), which is
weak evidence of validity, and it also shows that roughly $18\%$
of honest proposals score $\le 7$ --- one honest proposal scored
$2.0$, an unsupported assertion citing no evidence at all. Two
consequences follow. First, judge error enters both axes of
\S\ref{sec:eval-r2}, because the same judge is used as a
harm measure and as one of the four candidate instruments; its
row in the instrument table should be read with that
circularity in mind. Second, the honest-noise finding depends on
the judge being right that some honest output is genuinely
poorly grounded. Calibrating the judge against roughly $100$
human-labelled proposals is a prerequisite for treating any of
the harm numbers as measurements rather than as a consistent
ordering.

\paragraph{7. The coverage-identity result is about matched
guarantees, and the empirical near-identity is
underpowered.}
Lemma~\ref{lem:containment} is a proof and carries no sample
size. The empirical statement that accompanies it does: H-CSC
and the certificate-wrapped majority baseline differ on $0$ of
$50$ tasks under rushing and $1$ of $50$ under static. Both
observations are consistent with a real but small difference.
A paired McNemar analysis at $80\%$ power indicates that
detecting a commit-rate difference of $0.02$ would require
roughly $780$--$2000$ tasks, and that the minimum detectable
effect at $N = 50$ is about $0.04$; the Clopper--Pearson $95\%$
upper bound on a $0/50$ event is $0.058$. The correct reading is
that the two protocols are indistinguishable \emph{at this
scale}, and that v1's statement that the baseline is
``uniformly safer'' rests on one task and should not have been
phrased as a property of the methods.

\paragraph{9. Encoder determinism is assumed, and only
weakly verified.}
Lemma~\ref{lem:digest-consistency} and every agreement result
built on it require that two honest signers compute the same
embedding for the same canonical text. We verified this by
re-encoding the MVR-50 canonical texts and comparing against the
frozen cache: minimum cosine similarity $0.99999990$, maximum
element-wise difference $3.6\times 10^{-7}$. That is a
single-machine, single-build, single-run check, and it is not
bitwise identity; it does not establish cross-platform or
cross-version determinism for a transformer forward pass, which
is the property the protocol actually needs. The quantiser
absorbs differences below its step, so the check is consistent
with correctness in practice, but the assumption remains an
assumption. A deterministic lexical predicate does not need it
at all, which is a second and independent reason to prefer one
(\S\ref{sec:eval-r2}).

\section{Conclusion}
\label{sec:conclusion}

We set out to build a Byzantine-resilient finality primitive for
LLM-agent rounds in which the admissibility of a commit is
decided by a learned semantic predicate. The architecture
survives this revision; the instrument does not.

What holds up is the typed-finality structure and its
guarantees. A round should be able to end in an
embedding-backed \textsc{semantic\_commit}, a verdict-level
\textsc{verdict\_commit}, or a typed \Abort{}, all under the same
$2f{+}1$ distinct-signer envelope, with the commit type
recorded in the certified object so a downstream verifier can
tell which guarantee it is holding. We now give the complete
guarantee characterisation for that lattice, including the
sufficiency argument for verdict-commit finality that v1 only
sketched, and a strictly stronger honest-majority result under
the $|\core| > (n{+}f)/2$ rule (semantic coverage $0.34 / 0.28$
with \texttt{invalid\_hmaj} $= 0.00$).
The characterisation also settles the question v1 left open. The
containment lemma (Lemma~\ref{lem:containment}) shows that
$|\core| > (n{+}f)/2$ implies a verdict margin exceeding $f$, so
at matched deterministic guarantees the semantic admissibility
condition is strictly stronger than the verdict condition, and
the two protocols commit on exactly the same rounds ---
$0.72 / 0.64$ for both, task-by-task identical. \emph{No
coverage separation from verdict-only certification is
achievable at matched guarantees.} This is a proof, not a
measurement, and it supersedes v1's framing of the
semantic-commit share as a contribution.

We also report a vulnerability in our own protocol. The
semantic branch never checked the verdict margin: Stage~4a
tested only the admissibility predicate, and the candidate-group
rule tested only whether the top verdict reached quorum, so the
principal path ran at margin $0$. An adversary can exploit this
by forcing an exact tie and letting the deterministic tie-break
elect the verdict it wants; the single honest-reference-invalid
commit in v1's MVR-50 results is an instance, and v1 mislabelled
it as a \textsc{verdict\_commit} when it was a \textsc{semantic\_commit} with $40\%$
Byzantine occupancy in the certified core. The fix is to apply
the margin gate on both paths. At margin $1$ it costs $0.02$
static commit rate ($0.90 \to 0.88$, rushing unchanged at
$0.92$) and drives \texttt{invalid\_hmaj} to $0.00$ in both
modes. We recommend it as the default.

The instrument study is where the substantive negative result
lies. Against verdict-preserving evidence poisoning --- an
adversary that agrees with the honest verdict, cites only real
evidence, and corrupts the reasoning
($N = 10$ tasks, $40$ attacks, $80$ honest proposals) --- the
learned encoder admits $85\%$--$100\%$ of attacks into the
certified core and contaminates the payload of all $10$
committed rounds, identically to the verdict-only baseline. A
deterministic lexical-conformity predicate dominates it in every
cell (AUROC $0.865$--$0.982$ against $0.621$--$0.744$; TPR at a
$5\%$ honest false-positive rate $30\%$--$60\%$ against
$0\%$--$10\%$) and beats a \texttt{gpt-4o} judge on the harder
family. The mechanism generalises beyond our system, and we
regard it as the most useful thing in this paper: honest LLM
agents writing free-form rationales are dispersed enough
($90$th-percentile angular distance $0.992$~rad against a
$0.65$~rad threshold) that no admissibility filter --- geometric,
entailment-based, or judge-based --- has room to operate. It is
not that the encoder is undertrained. The signal-to-noise ratio
does not exist. Two corollaries follow. Tightening the threshold
does not help, because a hierarchical protocol reroutes the
round onto an unfiltered fallback, leaving the maximum harm of
an admitted attack flat at $8.0$ while admission rises to
$100\%$. And the output schema, not the protocol, decides
whether semantic consensus is feasible at all: moving from
free-form rationales to short extractive answers collapses
honest dispersion twenty-fold ($0.430 \to 0.020$~rad) while
adversarial displacement barely moves, flipping the separation
from $-0.008$ to $+0.253$ --- though in that regime the honest
canonical texts are nearly identical and string equality would
serve, so the learned encoder does not earn its place there
either.

The deterministic predicate also discharges an assumption. Every
agreement result in \S\ref{sec:correctness} requires that two
honest signers compute the same embedding for the same canonical
text; a lexical predicate makes that requirement disappear
rather than merely satisfying it approximately.

Two things should happen next, and we name them concretely
rather than as directions. The next \emph{attack} tier is the
vocabulary-matched adversary: one that reproduces the honest
consensus lexicon token-for-token while inverting the relation
among those tokens. It is the attack that defeats the predicate
we are recommending, and until someone runs it the instrument
ranking reported here is provisional. The next
\emph{measurement} is a harm-bounded coverage curve, in the
style of certified-robustness reporting, computed per instrument
at $\ge 50$ tasks with a judge calibrated against human labels
--- the quantity a downstream consumer actually needs is the
maximum harm admitted at a given coverage, and on our current
data that quantity is flat, which is exactly the finding a
larger study must confirm or overturn.




\newpage
\clearpage

\newpage
\clearpage

\bibliographystyle{ACM-Reference-Format}
\bibliography{refs}

@inproceedings{castro1999,
  title={Practical {Byzantine} Fault Tolerance},
  author={Castro, Miguel and Liskov, Barbara},
  booktitle={Proceedings of the 3rd Symposium on Operating Systems Design and Implementation (OSDI)},
  pages={173--186},
  year={1999}
}

@inproceedings{vaidya2013,
  title={Iterative Approximate {Byzantine} Consensus in Arbitrary Directed Graphs},
  author={Vaidya, Nitin H and Garg, Vijay K},
  booktitle={Proceedings of the 2013 ACM Symposium on Principles of Distributed Computing (PODC)},
  pages={365--374},
  year={2013}
}

@inproceedings{yin2019,
  title={{HotStuff}: {BFT} Consensus in the Lens of Blockchain},
  author={Yin, Maofan and Malkhi, Dahlia and Reiter, Michael K and Gueta, Guy Golan and Abraham, Ittai},
  booktitle={Proceedings of the 2019 ACM Symposium on Principles of Distributed Computing (PODC)},
  pages={31--32},
  year={2019}
}

@inproceedings{blanchard2017,
  title={Machine Learning with Adversaries: {Byzantine} Tolerant Gradient Descent},
  author={Blanchard, Peva and El Mhamdi, El Mahdi and Guerraoui, Rachid and Stainer, Julien},
  booktitle={Advances in Neural Information Processing Systems (NeurIPS)},
  volume={30},
  pages={119--129},
  year={2017}
}

@inproceedings{chen2017,
  title={Distributed Statistical Machine Learning in Adversarial Settings: {Byzantine} Gradient Descent},
  author={Chen, Yudong and Su, Lili and Xu, Jiaming},
  booktitle={Proceedings of the ACM on Measurement and Analysis of Computing Systems (POMACS)},
  volume={1},
  number={2},
  pages={1--25},
  year={2017}
}

@inproceedings{yin2018median,
  title={{Byzantine}-Robust Distributed Learning: Towards Optimal Statistical Rates},
  author={Yin, Dong and Chen, Yudong and Kannan, Ramchandran and Bartlett, Peter},
  booktitle={International Conference on Machine Learning (ICML)},
  pages={5650--5659},
  year={2018}
}

@inproceedings{mhamdi2018,
  title={The Hidden Vulnerability of Distributed Learning in {Byzantine} Settings},
  author={El Mhamdi, El Mahdi and Guerraoui, Rachid and Rouault, S{\'e}bastien},
  booktitle={International Conference on Machine Learning (ICML)},
  pages={3521--3530},
  year={2018}
}

@article{diakonikolas2019,
  title={Robust Estimators in High-Dimensions Without the Computational Intractability},
  author={Diakonikolas, Ilias and Kamath, Gautam and Kane, Daniel and Li, Jerry and Moitra, Ankur and Stewart, Alistair},
  journal={SIAM Journal on Computing},
  volume={48},
  number={2},
  pages={742--864},
  year={2019},
  publisher={SIAM}
}

@article{pillutla2022,
  title={Robust Aggregation for Federated Learning},
  author={Pillutla, Krishna and Kakade, Sham M. and Harchaoui, Zaid},
  journal={IEEE Transactions on Signal Processing},
  volume={70},
  pages={1142--1154},
  year={2022},
  publisher={IEEE}
}

@inproceedings{alzantot2018,
  title={Generating Natural Language Adversarial Examples},
  author={Alzantot, Moustafa and Sharma, Yash and Elgohary, Ahmed and Ho, Bo-Jhang and Srivastava, Mani and Chang, Kai-Wei},
  booktitle={Proceedings of the 2018 Conference on Empirical Methods in Natural Language Processing (EMNLP)},
  pages={2890--2896},
  year={2018}
}

@inproceedings{devlin2019,
  title={{BERT}: Pre-training of Deep Bidirectional Transformers for Language Understanding},
  author={Devlin, Jacob and Chang, Ming-Wei and Lee, Kenton and Toutanova, Kristina},
  booktitle={Proceedings of the 2019 Conference of the North American Chapter of the Association for Computational Linguistics (NAACL)},
  pages={4171--4186},
  year={2019}
}

@inproceedings{reimers2019,
  title={{Sentence-BERT}: Sentence Embeddings using Siamese {BERT}-Networks},
  author={Reimers, Nils and Gurevych, Iryna},
  booktitle={Proceedings of the 2019 Conference on Empirical Methods in Natural Language Processing (EMNLP)},
  pages={3982--3992},
  year={2019}
}

@inproceedings{cohen2019smoothing,
  title={Certified Adversarial Robustness via Randomized Smoothing},
  author={Cohen, Jeremy and Rosenfeld, Elan and Kolter, Zico},
  booktitle={International Conference on Machine Learning (ICML)},
  pages={1310--1320},
  year={2019}
}

@inproceedings{jin2020textfooler,
  title={Is {BERT} Really Robust? A Strong Baseline for Natural Language Attack on Text Classification and Entailment},
  author={Jin, Di and Jin, Zhijing and Zhou, Joey Tianyi and Szolovits, Peter},
  booktitle={Proceedings of the AAAI Conference on Artificial Intelligence (AAAI)},
  volume={34},
  number={05},
  pages={8018--8025},
  year={2020}
}

@inproceedings{li2020bertattack,
  title={{BERT-ATTACK}: Adversarial Attack Against {BERT} Using {BERT}},
  author={Li, Linyang and Ma, Ruotian and Guo, Qipeng and Xue, Xiangyang and Qiu, Xipeng},
  booktitle={Proceedings of the 2020 Conference on Empirical Methods in Natural Language Processing (EMNLP)},
  pages={6193--6202},
  year={2020}
}

@misc{openai2024gpt4o,
  title={{GPT-4o} System Card},
  author={OpenAI},
  year={2024},
  howpublished={\url{https://openai.com/index/gpt-4o-system-card/}}
}

@inproceedings{park2023generative,
  title={Generative Agents: Interactive Simulacra of Human Behavior},
  author={Park, Joon Sung and O'Keefe, Joseph C and O'Brien, Cai and Baker, Micheal and Tanaka, Maneesh and Liang, Percy},
  booktitle={Proceedings of the 36th Annual ACM Symposium on User Interface Software and Technology (UIST)},
  pages={1--22},
  year={2023}
}

@inproceedings{lewis2020rag,
  title={Retrieval-Augmented Generation for Knowledge-Intensive NLP Tasks},
  author={Lewis, Patrick and Perez, Ethan and Piktus, Aleksandra and Petroni, Fabio and Karpukhin, Vladimir and Goyal, Naman and K{\"u}ttler, Heinrich and Lewis, Mike and Yih, Wen-tau and Rockt{\"a}schel, Tim and others},
  booktitle={Advances in Neural Information Processing Systems (NeurIPS)},
  volume={33},
  pages={9459--9474},
  year={2020}
}

@inproceedings{du2023improving,
  title={Improving Factuality and Reasoning in Language Models through Multiagent Debate},
  author={Du, Yilun and Li, Shuang and Torralba, Antonio and Tenenbaum, Joshua B and Mordatch, Igor},
  booktitle={International Conference on Machine Learning (ICML)},
  pages={8559--8573},
  year={2023},
  organization={PMLR}
}

@inproceedings{li2024improving,
  title={Improving Multi-Agent Debate with Sparse Communication Topology},
  author={Li, Yunxuan and Du, Yibing and Ie, Eugene and others},
  booktitle={Findings of the Association for Computational Linguistics: EMNLP 2024},
  pages={1--18},
  year={2024}
}

@article{liang2024encouraging,
  title={Encouraging Divergent Thinking in Large Language Models through Multi-Agent Debate},
  author={Liang, Tian and He, Zhiwei and Jiao, Wenxiang and Bai, Xing and Wang, Rui and Wang, Zhaopeng and Shi, Shuming},
  journal={arXiv preprint arXiv:2305.19118},
  year={2023}
}

@inproceedings{chen2024reconcile,
  title={Reconcile: Round-Table Conference for Consensus Generation},
  author={Chen, Justin Chih-Yao and Swaminathan, Swarnadeep and Singh, Maria and Mohanty, Vikram and Peridis, Alex and Olaleye, Oishi and Ruan, Jason and Zhang, Kyra},
  booktitle={Proceedings of the 62nd Annual Meeting of the Association for Computational Linguistics (ACL)},
  year={2024}
}

@inproceedings{castro1999practical,
  title={Practical Byzantine Fault Tolerance},
  author={Castro, Miguel and Liskov, Barbara},
  booktitle={Proceedings of the 3rd Symposium on Operating Systems Design and Implementation (OSDI)},
  pages={173--186},
  year={1999}
}

@inproceedings{yin2019hotstuff,
  title={HotStuff: BFT Consensus in the Lens of Blockchain},
  author={Yin, Maofan and Malkhi, Dahlia and Reiter, Michael K and Gueta, Guy Golan and Abraham, Ittai},
  booktitle={Proceedings of the 2019 ACM Symposium on Principles of Distributed Computing (PODC)},
  pages={31--40},
  year={2019}
}

@article{dolev1986reaching,
  title={Reaching Approximate Agreement in the Presence of Faults},
  author={Dolev, Danny and Lynch, Nancy A and Pinter, Shlomit S and Stark, Eugene W and Weihl, William E},
  journal={Journal of the ACM (JACM)},
  volume={33},
  number={3},
  pages={499--516},
  year={1986}
}

@inproceedings{mendes2013multidimensional,
  title={Multidimensional Approximate Agreement in Byzantine Asynchronous Systems},
  author={Mendes, Hammurabi and Herlihy, Maurice},
  booktitle={Proceedings of the 45th Annual ACM Symposium on Theory of Computing (STOC)},
  pages={391--400},
  year={2013}
}

@inproceedings{baruch2019little,
  title={A Little Is Enough: Circumventing Defenses for Distributed Learning},
  author={Baruch, Gilad and Baruch, George and Goldberg, Yoav},
  booktitle={Advances in Neural Information Processing Systems (NeurIPS)},
  volume={32},
  year={2019}
}

@inproceedings{xie2020fall,
  title={Fall of Empires: Breaking Byzantine-tolerant SGD by Inner Product Manipulation},
  author={Xie, Cong and Koyejo, Oluwasanmi and Gupta, Indranil},
  booktitle={Uncertainty in Artificial Intelligence (UAI)},
  pages={261--270},
  year={2020},
  organization={PMLR}
}

@article{bracha1987asynchronous,
  title={Asynchronous Byzantine agreement protocols},
  author={Bracha, Gabriel},
  journal={Information and Computation},
  volume={75},
  number={2},
  pages={130--143},
  year={1987},
  publisher={Elsevier}
}

@article{dwork1988partial,
  title   = {Consensus in the Presence of Partial Synchrony},
  author  = {Dwork, Cynthia and Lynch, Nancy and Stockmeyer, Larry},
  journal = {Journal of the ACM},
  volume  = {35},
  number  = {2},
  pages   = {288--323},
  year    = {1988}
}

@inproceedings{morris2020textattack,
  title     = {TextAttack: A Framework for Adversarial Attacks, Data Augmentation, and Adversarial Training in {NLP}},
  author    = {Morris, John X. and Lifland, Eli and Yoo, Jin Yong and Grigsby, Jake and Jin, Di and Qi, Yanjun},
  booktitle = {Proceedings of EMNLP 2020: System Demonstrations},
  year      = {2020}
}

@article{lee2024promptinfection,
  title        = {Prompt Infection: LLM-to-LLM Prompt Injection within Multi-Agent Systems},
  author       = {Lee, Donghyun and Tiwari, Mo},
  journal      = {arXiv preprint arXiv:2410.07283},
  year         = {2024}
}

@article{ferrag2025protocol,
  title        = {From Prompt Injections to Protocol Exploits: Threats in LLM-Powered AI Agents Workflows},
  author       = {Ferrag, Mohamed Amine and Tihanyi, Norbert and Hamouda, Djallel and Maglaras, Leandros and Lakas, Abderrahmane and Debbah, Merouane},
  journal      = {ICT Express},
  volume       = {12},
  number       = {2},
  pages        = {353--383},
  year         = {2026}
}

@inproceedings{zhao2025lmc,
  title        = {Language Model Council: Democratically Benchmarking Foundation Models on Highly Subjective Tasks},
  author       = {Zhao, Justin and Plaza-del-Arco, Flor Miriam and Cercas Curry, Amanda},
  booktitle    = {Proceedings of NAACL 2025 (Long Papers)},
  year         = {2025}
}

@article{luo2025wbft,
  title        = {A Weighted Byzantine Fault Tolerance Consensus Driven Trusted Multiple Large Language Models Network},
  author       = {Luo, Haoxiang and Sun, Gang and Liu, Yinqiu and Zhao, Dongcheng},
  journal      = {arXiv preprint arXiv:2505.05103},
  year         = {2025}
}

@article{mao2024ibgp,
  title        = {IBGP: Imperfect Byzantine Generals Problem for Zero-Shot Robustness in Communicative Multi-Agent Systems},
  author       = {Mao, Yihuan and others},
  journal      = {arXiv preprint arXiv:2410.16237},
  year         = {2024}
}

@article{milentijevic2025approxagree,
  title        = {Approximate Agreement Algorithms for Byzantine Collaborative Learning},
  author       = {Milentijevi{\'c}, Tijana and Cambus, M{\'e}lanie and Melnyk, Darya and Schmid, Stefan},
  journal      = {arXiv preprint arXiv:2504.01504},
  year         = {2025}
}

@article{ferrag2025threats,
  title={Agentic AI Security: Threats, Defenses, Evaluation, and Open Challenges},
  author={Datta, Shrestha and Nahin, Shahriar Kabir and Chhabra, Anshuman and Mohapatra, Prasant},
  journal={arXiv preprint arXiv:2510.23883},
  year={2025}
}

@article{ai2025beyond,
  title={Beyond Majority Voting: LLM Aggregation by Leveraging Higher-Order Information},
  author={Ai, Rui and Pan, Yuqi and Simchi-Levi, David and Tambe, Milind and Xu, Haifeng},
  journal={arXiv preprint arXiv:2510.01499},
  year={2025}
}

@article{lamport1982byzantine,
  title={The Byzantine generals problem},
  author={Lamport, Leslie and Shostak, Robert and Pease, Marshall},
  journal={ACM Transactions on Programming Languages and Systems (TOPLAS)},
  volume={4},
  number={3},
  pages={382--401},
  year={1982}
}

@inproceedings{zheng2026cpwbft,
  title     = {Rethinking the Reliability of Multi-agent System: A Perspective from Byzantine Fault Tolerance},
  author    = {Zheng, Lifan and Chen, Jiawei and Yin, Qinghong and Zhang, Jingyuan and Zeng, Xinyi and Tian, Yu},
  booktitle = {Proceedings of the AAAI Conference on Artificial Intelligence (AAAI-26)},
  volume    = {40},
  number    = {41},
  year      = {2026},
  note      = {AAAI Technical Track on Natural Language Processing VI; also arXiv:2511.10400}
}

@misc{he2026sqa,
  title         = {Semantic Quorum Assurance: Collective Certification for Non-Deterministic {AI} Infrastructure},
  author        = {He, Jun and Yu, Deying},
  year          = {2026},
  eprint        = {2606.08021},
  archivePrefix = {arXiv},
  primaryClass  = {cs.DC}
}

@misc{lee2026sac,
  title         = {Robust Multi-Agent {LLM}s under {B}yzantine Faults},
  author        = {Lee, Haejoon and Yun, Vincent-Daniel and Panagou, Dimitra and Karimireddy, Sai Praneeth},
  year          = {2026},
  eprint        = {2605.09076},
  archivePrefix = {arXiv},
  primaryClass  = {cs.LG}
}

@misc{agentprov2026survey,
  title         = {From Agent Traces to Trust: A Survey of Evidence Tracing and Execution Provenance in {LLM} Agents},
  author        = {Wang, Yiqi and Zhang, Jiaqi and Cai, Taotao and Liu, Zirui and Sun, Qingqiang and Sun, Zequn
                   and Wu, Zhangkai and Dong, Manqing and Zheng, Mingkai and Yin, Xuefei and Zhu, Yanming},
  year          = {2026},
  eprint        = {2606.04990},
  archivePrefix = {arXiv},
  primaryClass  = {cs.AI}
}

@article{anand2026resilient,
  title   = {Resilient Consensus in Agentic {AI}},
  author  = {Anand, Sribalaji C. and Pappas, George J.},
  journal = {arXiv preprint arXiv:2606.15024},
  year    = {2026}
}

@inproceedings{torresarias2019intoto,
  title     = {in-toto: Providing Farm-to-Table Guarantees for Bits and Bytes},
  author    = {Torres-Arias, Santiago and Afzali, Hammad and Kuppusamy, Trishank Karthik and Curtmola, Reza and Cappos, Justin},
  booktitle = {USENIX Security Symposium},
  year      = {2019}
}

@inproceedings{haeberlen2007peerreview,
  title     = {{PeerReview}: Practical Accountability for Distributed Systems},
  author    = {Haeberlen, Andreas and Kouznetsov, Petr and Druschel, Peter},
  booktitle = {ACM Symposium on Operating Systems Principles (SOSP)},
  pages     = {175--188},
  year      = {2007}
}

\appendix
\section{Notation}
\label{appendix:notation}

Table~\ref{tab:notation} lists the notation used throughout
the main text and the appendix. Quantities relevant to the
protocol path are listed first; quantities relevant to
evaluation follow.

\begin{table*}[t]
\centering
\caption{Notation used throughout. Quantities relevant to the
protocol path are listed first; quantities relevant to evaluation
follow.}
\label{tab:notation}
\footnotesize
\setlength{\tabcolsep}{4pt}
\begin{tabular}{@{}>{\raggedright\arraybackslash}p{0.26\linewidth} >{\raggedright\arraybackslash}p{0.66\linewidth}@{}}
\toprule
\textbf{Symbol} & \textbf{Meaning} \\
\midrule
$\mathcal{N}=\{1,\ldots,n\}$ & set of agents in the round; $n$ is total node count \\
$f$            & declared Byzantine fault bound; at most $f$ agents may be Byzantine; required quorum is $2f{+}1$ \\
$\mathcal{H}\subseteq \mathcal{N}$ & honest agents (unknown to the protocol) \\
$\mathcal{B}\subseteq \mathcal{N}$ & Byzantine agents (unknown); $\mathcal{H}\sqcup \mathcal{B}=\mathcal{N}$ \\
$x_i$          & natural-language proposal of agent $i$ \\
$z_i$          & structured proposal object built from $x_i$ (verdict, confidence, evidence ids, rationale, claim) \\
$\mathcal{E}$  & deterministic semantic encoder \\
$\mathbf{e}_i = \mathcal{E}(\mathrm{CanonText}(\rho_i))$ & unit-norm semantic embedding of the canonical proposal text \\
$I_r$          & delivered index set at round $r$; $I_r\subseteq \Nset$, $|I_r|\ge n-f$ (successful-round RBC predicate) \\
$V(r)$         & delivered semantic view at round $r$: $\{(i,\rho_i,\mathbf{e}_i) : i\in I_r\}$ \\
$z_i.v$        & verdict component of $z_i$; $v\in\mathcal{V}$ (typed verdict vocabulary) \\
$G_v(r)$       & verdict group for $v$ at round $r$: $\{i\in V(r) : z_i.v = v\}$ \\
$v^\star$      & H-CSC candidate verdict: $\arg\max_v |G_v(r)|$ with deterministic tie-break (\textsc{largest\_verdict}) \\
$\textit{verdict\_margin}$ & $|G_{v^\star}| - \max_{v\ne v^\star}|G_v(r)|$ \\
$\widehat{C}$  & within-verdict semantic core on the semantic branch; $\widehat{C}\subseteq G_{v^\star}$, $|\widehat{C}|\ge 2f{+}1$ \\
$\theta_\alpha$ & angular radius threshold of the semantic admissibility predicate (H-CSC main path: $0.65$~rad) \\
$\bar{\varepsilon}_\alpha$ & angular diameter sister-predicate (reported only) \\
$\Psi(r)$      & finality-signal vector (Definition~\ref{def:finality-signal}): pre-certificate components $(|G_{v^\star}|, \textit{verdict\_margin}, |\widehat{C}|, r^\star)$ plus the post-Stage-5 certificate-size component $|\Sigma|$ \\
$\mathcal{A}$  & robust aggregator (Euclidean geometric median by default) \\
$\mathbf{y}^\star$ & aggregate semantic object on the semantic branch: $\mathbf{y}^\star = \mathcal{A}(\widehat{C})$ \\
$\eta,\ Q_\eta(\cdot)$ & quantization step and lattice quantizer \\
$\pi$          & frozen protocol parameters: $(\theta_\alpha,\bar{\varepsilon}_\alpha,\mathcal{A},Q_\eta,$ $\textit{candidate\_rule},\textit{fallback},$ $\textit{verdict\_margin}^{\min},\textit{topology\_gate}=\bot)$ \\
$h_\pi$        & parameter digest binding $\pi$ to a commit \\
$h^\star_{\textsc{sem}}$ & semantic-commit digest: $H(\texttt{"semantic\_commit"}\,\Vert\,Q_\eta(\mathbf{y}^\star)$ $\Vert\, h_\pi\,\Vert\, r\,\Vert\, v^\star)$, with leading byte-string domain separator binding verdict + quantised aggregate \\
$h^\star_{\textsc{ver}}$ & verdict-commit digest: $H(\texttt{"verdict\_commit"}$ $\Vert\,\textsc{VerdictPayload}(v^\star)\,\Vert\, h_\pi\,\Vert\, r)$, with matching domain separator; no embedding aggregate \\
$\mathrm{Cert}(h^\star)$ & quorum certificate over $h^\star$, requiring $2f{+}1$ distinct signers \\
$\mathbf{commit\_type}$ & typed per-round outcome: $\textsc{semantic\_commit}$, $\textsc{verdict\_commit}$, or $\textsc{abort}$ \\
\midrule
$\mathrm{majority}_\mathcal{H}(V)$ & evidence-bounded honest-reference verdict (honest plurality with $\textsc{largest\_verdict}$ tie-break; see Definition~\ref{def:two-validities}); computed offline for evaluation only \\
$\mathrm{gold}(V)$ & dataset gold label, when available \\
$\mathrm{SCE}(\mathbf{y}^\star,\mathcal{H})$ & angular distance from $\mathbf{y}^\star$ to the honest-only reference centre (degrees) \\
$\textsc{InvalidCommit}_{\mathrm{gold}}$, $\textsc{InvalidCommit}_{\mathrm{hmaj}}$ & two validity-mismatch metrics; reported separately \\
\bottomrule
\end{tabular}
\end{table*}

\section{Additional Proof Details}
\label{appendix:proofs}

This appendix gives proof details for the five-axis correctness
results stated in Section~\ref{sec:correctness} (Agreement,
Semantic validity, Verdict validity, Safe abort, Conditional
commit-or-abort termination). The conventions match the main
text: $n$ agents indexed by $\Nset=\{1,\ldots,n\}$ with
$|\Byz|\le f$ and $n\ge 3f{+}1$; embeddings
$\mathbf{e}_i = \cE(\mathrm{CanonText}(\rho_i))$ on
$\mathbb{S}^{d-1}$;
admissibility radius $\thalpha$; lattice quantiser $Q_\eta$.
Lemmas \ref{lem:digest-consistency},
\ref{lem:cert-uniqueness}, and \ref{lem:rep-boundary} are
restated for completeness, but the proof bodies given in
Section~\ref{sec:correctness} are not duplicated; we instead
provide auxiliary lemmas and detailed arguments for theorems
whose \S5 statements were proof sketches.

The proof details below match the H-CSC control flow of
Algorithm~\ref{alg:certified-commitment-round}: Stage~1 builds
the delivered view, Stage~2 partitions $V(r)$ into verdict
groups, Stage~3 either (a) extracts a within-verdict semantic
core for a candidate verdict $v^\star$ and runs the
admissibility predicate at radius $\thalpha$, or (b) falls
back to the verdict-only path when the semantic predicate
fails but the verdict-admissibility predicate holds. Stages
4--5 aggregate, quantise, bind the typed digest, and collect
a $2f{+}1$ distinct-signer certificate. Lemma~\ref{lem:digest-consistency}
applies uniformly to both semantic and verdict digests;
Lemma~\ref{lem:cert-uniqueness} applies uniformly to either
typed certificate object; the semantic-validity argument below
applies to the \textsc{semantic\_commit} branch only. The
verdict-validity argument is developed in full in
Appendix~\ref{app:proof-verdict-validity} below: the previous
version of this appendix stated that it was ``fully developed in
Section~\ref{sec:correctness} and not repeated here'', which was
incorrect --- the main text carried only the necessity direction.
Appendices~\ref{app:proof-det-semantic} and
\ref{app:proof-containment} give the deterministic
semantic-commit rule and the containment result that relates the
two; Appendix~\ref{app:ball-vs-component} treats the gap between
the admissibility \emph{definition} and its implementation, and
Appendix~\ref{app:proof-quantiser} treats the quantiser.

\subsection{Auxiliary Geometric Lemmas}
\label{app:auxlemmas}

\begin{lemma}[Sphere identity and threshold conversion]
\label{lem:sphere_identity}
For any $\mathbf{u},\mathbf{v}\in\mathbb{S}^{d-1}$,
\[
\|\mathbf{u}-\mathbf{v}\|_2 \;=\; 2\sin\!\Big(\tfrac{1}{2}\,d_{\angle}(\mathbf{u},\mathbf{v})\Big),
\]
where $d_{\angle}(\mathbf{u},\mathbf{v})=\arccos(\langle \mathbf{u},\mathbf{v}\rangle)\in[0,\pi]$.
Hence $d_{\angle}(\mathbf{u},\mathbf{v})\le \theta$ iff
$\|\mathbf{u}-\mathbf{v}\|_2 \le 2\sin(\theta/2)$.
\end{lemma}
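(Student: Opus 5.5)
The plan is to expand the squared Euclidean distance using the unit-norm constraint and then apply a half-angle identity. For $\mathbf{u},\mathbf{v}\in\mathbb{S}^{d-1}$ we have $\|\mathbf{u}\|_2=\|\mathbf{v}\|_2=1$. Writing $\|\mathbf{u}-\mathbf{v}\|_2^2 = \|\mathbf{u}\|_2^2 + \|\mathbf{v}\|_2^2 - 2\langle\mathbf{u},\mathbf{v}\rangle$ therefore gives $2-2\langle\mathbf{u},\mathbf{v}\rangle$.

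Let $\phi := d_{\angle}(\mathbf{u},\mathbf{v}) = \arccos\langle\mathbf{u},\mathbf{v}\rangle$. This is well defined because Cauchy--Schwarz places $\langle\mathbf{u},\mathbf{v}\rangle\in[-1,1]$, so $\phi\in[0,\pi]$ and $\cos\phi = \langle\mathbf{u},\mathbf{v}\rangle$. The half-angle identity $1-\cos\phi = 2\sin^2(\phi/2)$ then turns the expansion into
\[
\|\mathbf{u}-\mathbf{v}\|_2^2 = 4\sin^2(\phi/2).
\]

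To take square roots, note that $\phi/2\in[0,\pi/2]$, so $\sin(\phi/2)\ge 0$. Both sides are nonnegative, and taking the nonnegative root yields the stated identity. This sign check is the only place where care is needed. The main (mild) obstacle is making sure the root chosen is $+2\sin(\phi/2)$ and not its absolute value taken over a wider range.

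For the threshold conversion, I would use monotonicity. The map $\phi\mapsto 2\sin(\phi/2)$ is continuous and strictly increasing on $[0,\pi]$, since its derivative $\cos(\phi/2)$ is positive on $[0,\pi)$. Consequently, for any $\theta\in[0,\pi]$ we have $\phi\le\theta$ iff $2\sin(\phi/2)\le 2\sin(\theta/2)$. Combined with the identity, this is exactly the claimed equivalence.

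I would add one remark about thresholds $\theta>\pi$. There the left condition $\phi\le\theta$ holds trivially, but $2\sin(\theta/2)$ decreases once $\theta$ passes $\pi$, so the right condition can fail. The equivalence should therefore be read with $\theta\in[0,\pi]$, which covers the operating point $\thalpha=0.65$.
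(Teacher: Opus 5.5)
Your proof is correct. The paper states this lemma without proof and treats it as standard, and your argument (expand $\|\mathbf{u}-\mathbf{v}\|_2^2=2-2\cos\phi=4\sin^2(\phi/2)$, take the nonnegative root since $\phi/2\in[0,\pi/2]$, then use strict monotonicity of $\phi\mapsto 2\sin(\phi/2)$ on $[0,\pi]$) is exactly the derivation it implicitly relies on.

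Keep your restriction $\theta\in[0,\pi]$. The printed statement omits it, and it is genuinely needed: for $\theta\in(\pi,2\pi]$ the angular condition holds trivially while the Euclidean one can fail. Every place the paper invokes the lemma ($\theta=\thalpha=0.65$ and $\theta=2\thalpha=1.30$) lies in $[0,\pi]$, so the omission is harmless there.
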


\begin{lemma}[Convex-hull range bound via diameter]
\label{lem:conv_diam_ball}
Let $P\subset\mathbb{R}^d$ be finite and let
$\mathrm{diam}(P):=\max_{\mathbf{x},\mathbf{y}\in P}\|\mathbf{x}-\mathbf{y}\|_2$.
Then for any fixed $\mathbf{p}\in P$ and any $\mathbf{z}\in \mathrm{conv}(P)$,
\[
\|\mathbf{z}-\mathbf{p}\|_2 \;\le\; \mathrm{diam}(P).
\]
\end{lemma}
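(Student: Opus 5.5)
The plan is to write $\mathbf{z}$ as a convex combination of the points of $P$ and use convexity of the map $\mathbf{x}\mapsto\|\mathbf{x}-\mathbf{p}\|_2$. Since $P$ is finite, $\mathrm{conv}(P)$ consists exactly of the points $\mathbf{z}=\sum_{\mathbf{x}\in P}\lambda_{\mathbf{x}}\mathbf{x}$ with $\lambda_{\mathbf{x}}\ge 0$ and $\sum_{\mathbf{x}}\lambda_{\mathbf{x}}=1$. No compactness or extreme-point argument is needed.

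The key step is to subtract $\mathbf{p}=\sum_{\mathbf{x}}\lambda_{\mathbf{x}}\mathbf{p}$, which is allowed because the weights sum to one. This gives
\[
\mathbf{z}-\mathbf{p}=\sum_{\mathbf{x}\in P}\lambda_{\mathbf{x}}(\mathbf{x}-\mathbf{p}).
\]
The triangle inequality and absolute homogeneity of the norm then give
\[
\|\mathbf{z}-\mathbf{p}\|_2\le\sum_{\mathbf{x}}\lambda_{\mathbf{x}}\|\mathbf{x}-\mathbf{p}\|_2 .
\]
Since both $\mathbf{x}$ and $\mathbf{p}$ lie in $P$, each term satisfies $\|\mathbf{x}-\mathbf{p}\|_2\le\mathrm{diam}(P)$. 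Summing with weights that total one yields the bound.

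There is no real obstacle here. The only point needing care is that $\mathbf{p}$ must be a member of $P$ and not an arbitrary point of the hull. That membership is exactly what lets each summand be bounded by the diameter rather than by twice it.

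Anticipating the use in Theorem~\ref{thm:hcsc-semantic-validity}, I will note how this lemma combines with Lemma~\ref{lem:sphere_identity}. Take $P=\{\mathbf{e}_i:i\in\core\}$ and $\mathbf{p}=\mathbf{c}_{\Honest}$. Then the Euclidean distance from any hull point, including the geometric median, to an honest core member is at most the chordal diameter. Converting that chordal diameter to an angle uses the monotone map $\theta\mapsto 2\sin(\theta/2)$ on $[0,\pi]$. Bounding the angle after renormalisation requires a further argument beyond this lemma, and I will leave that step to the appendix proof of the theorem.
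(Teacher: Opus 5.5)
Your proof is correct and is the same argument the paper gives: write $\mathbf{z}$ as a convex combination, use $\sum\lambda=1$ to pull $\mathbf{p}$ inside the sum, apply the triangle inequality, and bound each $\|\mathbf{x}-\mathbf{p}\|_2$ by $\mathrm{diam}(P)$. Your closing caveat is also well placed: the angular bound after renormalisation is a separate step that this lemma does not supply.
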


\begin{proof}
Write $\mathbf{z}=\sum_t \lambda_t \mathbf{x}_t$ with
$\mathbf{x}_t\in P$, $\lambda_t\ge 0$, $\sum_t \lambda_t=1$.
Triangle inequality and the definition of diameter give
$\|\mathbf{z}-\mathbf{p}\|_2 \le \sum_t \lambda_t \|\mathbf{x}_t-\mathbf{p}\|_2
\le \sum_t \lambda_t\,\mathrm{diam}(P)=\mathrm{diam}(P)$.
\end{proof}

\subsection{Proof Detail for Lemma~\ref{lem:digest-consistency}}
\label{app:proof-digest}

\begin{proof}[Proof detail]
Let $i$ and $j$ be two honest signers that select the same
core $\core\subseteq V(r)$ and run the four-stage pipeline of
Algorithm~\ref{alg:certified-commitment-round} on the same
input. We chase the four stages and show that each produces
byte-identical output at $i$ and $j$, from which $\hstar$
equality follows.

Stage~1 (encode). Determinism of $\cE$ together with the
byte-deterministic canonicalisation of $\rho_k$
(Assumption~\ref{assn:deterministic-cE}) implies that the
embedding of every agent in the delivered view is identical at
the two signers; in particular $\mathbf{e}_k$ is the same byte
string at $i$ and $j$ for every $k\in\core$.

Stage~3 (aggregate). The robust aggregator $\agg$ is a
deterministic function of its multiset input \emph{given a fixed
execution environment}: ties are broken canonically and the
iteration is run to a fixed stopping rule with a fixed iteration
count, so two honest signers that feed the byte-identical
multiset $\{\mathbf{e}_k:k\in\core\}$ to the same build of $\agg$
obtain the byte-identical $\ystar$. We flag explicitly that the
formulation used in the previous version --- ``$\agg$ converges to
within a documented numerical tolerance'' --- does \emph{not}
suffice for this step, because the output of this stage is
hashed and a hash has no tolerance. The same caveat applies to
Stage~1: $\mathbf{e}_k$ must be bit-identical, not merely close.
Appendix~\ref{app:proof-quantiser} quantifies the residual gap
for our artefact and shows that the lattice quantiser does not
close it; \S\ref{sec:encoder-determinism} of the main text gives
the resolution.

Stage~4 (quantise + digest). The lattice quantiser $Q_\eta$ is a
fixed-point round with deterministic tie-breaking, so the same
$\ystar$ produces the same $\quant{\ystar}$ at both signers.
The parameter digest $h_\pi$ depends only on the frozen
parameter tuple $\pi$ (identical at both signers by hypothesis)
and the round identifier $r$ (identical by construction). The
collision-resistant hash $H$ is deterministic on its byte input,
so the typed digest --- whether
$\hstar_{\textsc{sem}} = H(\texttt{"semantic\_commit"}\,\Vert\,\quant{\ystar}\,\Vert\,h_\pi\,\Vert\,r\,\Vert\,v^\star)$
on the semantic branch or
$\hstar_{\textsc{ver}} = H(\texttt{"verdict\_commit"}\,\Vert\,\textsc{VerdictPayload}\,\Vert\,h_\pi\,\Vert\,r)$
on the verdict branch (\S\ref{sec:protocol-aggregate-digest}) ---
is identical at $i$ and $j$.
\end{proof}

\subsection{Proof Detail for Lemma~\ref{lem:cert-uniqueness}}
\label{app:proof-cert}

\begin{proof}[Proof detail]
The certificate $\cert{\hstar}$ is constructed as a set
$\Sigma\subseteq\{(\mathrm{id}_k,\sigma_k)\}$ of pairs. The
verifier counts at most one signature per signer identity: the
runner enforces per-identity uniqueness when assembling
$\Sigma$, and any duplicate $(\mathrm{id}_k,\cdot)$ entry is
discarded. A Byzantine identity may itself sign more than one
distinct digest with its own key---signature unforgeability
(Assumption~\ref{assn:auth-channels}) only rules out
\emph{others} forging that identity, not the identity
equivocating---but each such identity contributes at most one
counted signature to any one certificate. Honest identities
satisfy the sign-once rule of
Assumption~\ref{assn:auth-channels} and produce a signature on
at most one digest per round. With $|\Sigma|\ge 2f{+}1$ distinct
identities and at most $f$ Byzantine identities, the pigeonhole
principle gives $|\{\mathrm{id}_k\in\Sigma : \mathrm{id}_k\in\Honest\}|\ge f{+}1$.
\end{proof}

\subsection{Proof of Theorem~\ref{thm:agreement}
            (Digest-level Agreement)}
\label{app:proof-agreement}

\begin{proof}[Proof detail]
Suppose two honest verifiers $i$ and $j$ each accept a commit
object in round $r$. Let $\hstar_i$ and $\hstar_j$ denote the
two committed digests, with certificates of size at least
$2f{+}1$ on each. We show $\hstar_i = \hstar_j$.

The argument does \emph{not} rely on quorum-intersection
between $\Sigma_i$ and $\Sigma_j$: under our system model
$n \ge 3f{+}1$ (rather than $n = 3f{+}1$ exactly), the bound
$|\Sigma_i\cap\Sigma_j| \ge (2f{+}1)+(2f{+}1)-n$ can be as
small as $0$ (as it is for the experimental setting $n=10$,
$f=2$). Instead we use \emph{deterministic decision on a
shared delivered view} plus the \emph{honest sign-once} rule,
which give a unique committable digest that any valid
certificate must bind.

\textbf{(i) Identical delivered view.} A successful round
delivers, by the RBC properties and the successful-round
condition stated in \S\ref{subsec:net} (identical delivered
view across all honest nodes once each has collected
$|V_i(r)| \ge n-f$ inputs), the same delivered index set
$I_r$ and the same delivered structured proposals
$\{(i, z_i) : i\in I_r\}$ to every honest node.

\textbf{(ii) Deterministic typed decision.} The H-CSC pipeline
of Algorithm~\ref{alg:certified-commitment-round}, run on a
fixed delivered view $V(r)$ and a fixed frozen parameter tuple
$\pi$ (with deterministic tie-break and a deterministic
aggregator), is a deterministic function: it outputs at most
one typed candidate digest per round. Concretely, let
$D^\star(r) = \textsc{HCSC}(V(r), \pi, r)$ denote that
canonical typed digest (either $\hstar_{\textsc{sem}}$ or
$\hstar_{\textsc{ver}}$, or $\bot$ on a typed abort).

\textbf{(iii) Honest sign-once rule.} An honest agent in round
$r$ signs at most one digest --- the protocol's own
$D^\star(r)$ --- and ignores any other digest presented for
signing (Stage~5 rule).

\textbf{(iv) Any valid certificate must contain at least
$f{+}1$ honest signatures.} A certificate of size $2f{+}1$
drawn from a verdict group $G_{v^\star}$ (verdict path) or
within-verdict core $\core$ (semantic path) contains at least
$(2f{+}1) - f = f{+}1$ honest signers, since at most $f$
identities are Byzantine in $\Nset$.

\textbf{(v) Conclusion.} Both $\Sigma_i$ and $\Sigma_j$
contain at least $f{+}1$ honest signers; by (iii) every honest
signer in either set signed $D^\star(r)$. Hence both
$\hstar_i$ and $\hstar_j$ contain valid honest signatures on
$D^\star(r)$. By signature unforgeability
(Assumption~\ref{assn:auth-channels}), the honest signatures
on $\hstar_i$ are signatures on the unique digest each honest
agent actually signed. Therefore
$\hstar_i = \hstar_j = D^\star(r)$. The conclusion holds
without invoking any quorum intersection between $\Sigma_i$ and
$\Sigma_j$.

Because $D^\star(r)$ is one of the typed digests
$\hstar_{\textsc{sem}}$ or $\hstar_{\textsc{ver}}$ defined in
\S\ref{sec:protocol-aggregate-digest}, equality of digests
also implies equality of the typed payload: in the semantic
case the committed
$(\quant{\ystar},\pi,r,v^\star)$ tuple is identical at $i$ and
$j$; in the verdict case the
$(\textsc{VerdictPayload},\pi,r)$ tuple is identical, with no
quantised aggregate attached. The explicit domain separators
in the digest construction prevent a cross-type collision
under collision-resistance of $H$, so $\mathbf{commit\_type}$
matches as well.

The argument is independent of the admissibility predicate of
Definition~\ref{def:admissible-round}: agreement holds on any
round in which two commits exist at all. The admissibility
predicate is the precondition for a commit existing in the first
place; safe abort under non-admissibility
(Theorem~\ref{thm:safe-abort}) takes care of the no-commit case.
\end{proof}

\subsection{Proof of Theorem~\ref{thm:validity}
            (Semantic Validity Under Admissible Core)}
\label{app:proof-validity}

\begin{proof}[Proof detail]
By Assumption~\ref{assn:admissible-round}, the round is
admissible at radius $\thalpha$, so by
Definition~\ref{def:admissible-round} there exists a centre
$h\in\core$ with $d_\angle(\mathbf{e}_h,\mathbf{e}_j)\le\thalpha$
for every $j\in\core$. Triangle inequality on the unit sphere
yields
$d_\angle(\mathbf{e}_j,\mathbf{e}_k)\le 2\thalpha$ for all
$j,k\in\core$, i.e.\
$\mathrm{diam}_\angle(\core)\le 2\thalpha$.

Let $\mathbf{c}_\Honest$ be any honest centre in
$\core\cap\Honest$ (since $|\core|\ge 2f{+}1$ and $|\Byz|\le f$,
$|\core\cap\Honest|\ge f{+}1\ge 1$). The Euclidean geometric
median $\ystar$ of $\{\mathbf{e}_j:j\in\core\}$ lies in the
convex hull of $\{\mathbf{e}_j:j\in\core\}$ (a standard property
of the GM that holds for any aggregator returning a
convex-combination point); applying
Lemma~\ref{lem:conv_diam_ball} to $P=\{\mathbf{e}_j:j\in\core\}$
and $\mathbf{p}=\mathbf{c}_\Honest$ gives
$\|\ystar-\mathbf{c}_\Honest\|_2\le\mathrm{diam}(\core)$. Since
$\mathbf{c}_\Honest\in\core$ and the diameter on $\mathbb{S}^{d-1}$
is bounded by $2\sin(\thalpha)$
(Lemma~\ref{lem:sphere_identity} with $\theta=2\thalpha$),
the Euclidean bound translates to the angular bound
$d_\angle(\ystar,\mathbf{c}_\Honest)\le\mathrm{diam}_\angle(\core)\le 2\thalpha$
after re-normalisation of $\ystar$ to the unit sphere.

The committed digest's underlying aggregate is therefore within
$2\thalpha$ of $\mathbf{c}_\Honest$. The bound is relative to
honest centres inside the selected core; it does not bound the
distance to an external ground-truth reference. The empirical
SCE numbers reported in Section~\ref{sec:eval-retained} are the
corresponding observation under specific encoder and aggregator
choices.
\end{proof}
\subsection{The Admissibility Predicate: Ball versus Connected
            Component}
\label{app:ball-vs-component}

Definition~\ref{def:admissible-round} specifies a $\thalpha$-ball;
the reference implementation returns the largest connected
component of the $\thalpha$-threshold graph on $G_{v^\star}$ and
does not re-test the ball predicate.
Proposition~\ref{prop:ball-component} of the main text states the
consequence; we give the proof detail and the restated theorem
here.

\begin{proof}[Proof detail for Proposition~\ref{prop:ball-component}]
\emph{Upper bound.} Let $\core$ be connected in the
$\thalpha$-threshold graph
$E_{\thalpha}=\{(i,j):\angle(\mathbf{e}_i,\mathbf{e}_j)\le\thalpha\}$
with $|\core|=k$. Fix $u,w\in\core$. Connectivity gives a simple
path $u=x_0,x_1,\dots,x_\ell=w$ inside $\core$ with
$\ell \le k-1$ (a simple path visits each of the $k$ vertices at
most once). Angular distance is a metric on $\mathbb{S}^{d-1}$,
so by the triangle inequality
\[
\angle(\mathbf{e}_u,\mathbf{e}_w)
\le \sum_{t=1}^{\ell}\angle(\mathbf{e}_{x_{t-1}},\mathbf{e}_{x_t})
\le \ell\,\thalpha \le (k-1)\thalpha .
\]
The metric is bounded above by $\pi$, giving the stated minimum.

\emph{Tightness.} Suppose $(k-1)\thalpha\le\pi$ and place
$\mathbf{e}_t = (\cos(t\,\thalpha),\sin(t\,\thalpha),0,\dots,0)$
for $t=0,\dots,k-1$. Consecutive points are at angular distance
exactly $\thalpha$ and non-consecutive points at distance
$|t-t'|\thalpha>\thalpha$, so $E_{\thalpha}$ restricted to this
set is exactly the path graph: the set is connected, and
$\angle(\mathbf{e}_0,\mathbf{e}_{k-1})=(k-1)\thalpha$.

\emph{Strictness of the containment.} A $\thalpha$-ball is
$\thalpha$-connected because its centre is adjacent to every
member. The converse fails for $k\ge 3$: in the configuration
above with $k=3$ the only candidate centres are the three points
themselves, and each is at distance $2\thalpha>\thalpha$ from one
of the others.
\end{proof}

\begin{corollary}[Theorem~\ref{thm:hcsc-semantic-validity} under
the implemented predicate]
\label{cor:semval-component}
If $\core$ is only known to be $\thalpha$-connected with
$|\core|=k$, then the conclusion of
Theorem~\ref{thm:hcsc-semantic-validity} weakens to
\[
\angle(\ystar,\mathbf{c}_{\Honest})
\;\le\;\mathrm{diam}_{\angle}(\core)
\;\le\;\min\{(k-1)\thalpha,\ \pi\},
\]
with the remaining conclusions ($v^\star$ shared across $\core$,
$|\core\cap\Honest|\ge f{+}1$) unchanged, since those are
cardinality statements independent of the geometric predicate.
\end{corollary}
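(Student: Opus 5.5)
The corollary has three conclusions: the diameter bound, the shared verdict, and honest occupancy of the core. I will split the argument along those lines.

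\emph{Geometric conclusion.} Retrace the proof of Theorem~\ref{thm:hcsc-semantic-validity} in Appendix~\ref{app:proof-validity}, replacing its first step. That proof derived $\mathrm{diam}_\angle(\core)\le 2\thalpha$ from the existence of a centre $h$. Here I instead invoke Proposition~\ref{prop:ball-component} (detail above) directly, which gives
\[
\mathrm{diam}_\angle(\core)\le\min\{(k-1)\thalpha,\pi\}
\]
for a $\thalpha$-connected set of size $k$. The rest of the chain does not use the ball form:
\begin{itemize}
\item $\ystar$ is a geometric median, so it lies in $\mathrm{conv}\{\mathbf{e}_j : j\in\core\}$.
\item Lemma~\ref{lem:conv_diam_ball}, applied with $\mathbf{p}=\mathbf{c}_\Honest\in\core$, gives $\|\ystar-\mathbf{c}_\Honest\|_2\le\mathrm{diam}(\core)$.
\item Lemma~\ref{lem:sphere_identity} converts Euclidean chord lengths to angles.
\end{itemize}
So the same steps yield $\angle(\ystar,\mathbf{c}_\Honest)\le\mathrm{diam}_\angle(\core)$.

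\emph{Main obstacle.} The delicate step is passing from a Euclidean bound on an unnormalised hull point to an angular bound after re-normalisation. This matters more here than in Theorem~\ref{thm:hcsc-semantic-validity}, because the diameter may now exceed $\pi/2$ and the hull can then come close to the origin.

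To handle it, I will argue directly with the radial projection rather than with chord lengths. Let $\mathbf{z}=\sum_t\lambda_t\mathbf{e}_t$ with $\mathbf{z}\ne 0$; the non-degeneracy of $\ystar$ is guaranteed by the abort branch of Stage~4a. Then
\[
\langle \mathbf{z},\mathbf{c}_\Honest\rangle=\sum_t\lambda_t\cos\angle(\mathbf{e}_t,\mathbf{c}_\Honest)\ \ge\ \|\mathbf{z}\|\cos D,
\]
where $D=\mathrm{diam}_\angle(\core)$ and $\|\mathbf{z}\|\le 1$.

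\begin{itemize}
\item If $D\le\pi/2$, then $\cos D\ge 0$, and I expect this lower bound to give $\angle(\mathbf{z}/\|\mathbf{z}\|,\mathbf{c}_\Honest)\le D$. I will check this carefully, since it is where the argument could slip.
\item If $D>\pi/2$, I would fall back on the fact that any spherical convex-hull point of $\core$ lies within $D$ of every member, a standard spherical fact when the points lie in an open hemisphere. If that fails, the bound becomes trivially $\le\pi$, which is already the cap in the statement.
\end{itemize}

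\emph{Cardinality conclusions.} These are immediate and independent of the predicate:
\begin{itemize}
\item $\core\subseteq G_{v^\star}$ by Stage~3 construction, so every member of $\core$ shares $v^\star$.
\item $|\core|\ge 2f{+}1$ and $|\Byz|\le f$ give $|\core\cap\Honest|\ge f{+}1$, exactly as in Lemma~\ref{lem:cert-uniqueness}.
\end{itemize}
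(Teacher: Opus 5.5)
Your route is the paper's own: reuse the convex-hull argument from the proof of Theorem~\ref{thm:hcsc-semantic-validity} and replace the $2\thalpha$ diameter bound by Proposition~\ref{prop:ball-component}. The cardinality conclusions are fine. For $D:=\mathrm{diam}_\angle(\core)\le\pi/2$, your inner-product treatment of the renormalisation is correct, and more careful than the paper's chord-to-angle step. State it as $\langle\mathbf{z},\mathbf{c}_\Honest\rangle\ge\cos D$, using that the $\lambda_t$ sum to one. Since $\|\mathbf{z}\|\le 1$, this gives $\cos\angle(\mathbf{z}/\|\mathbf{z}\|,\mathbf{c}_\Honest)\ge\cos D/\|\mathbf{z}\|\ge\cos D$, but only when $\cos D\ge 0$. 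Your display asserts $\ge\|\mathbf{z}\|\cos D$ unconditionally, and that is false for $D>\pi/2$.

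The genuine gap is the branch $D>\pi/2$. This is exactly the regime the connected predicate newly allows: at $\thalpha=0.65$, any core with $k\ge 4$ may have $D$ up to $(k-1)\thalpha\ge 1.95>\pi/2$.

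\emph{The fact you fall back on is false there.} Put $\mathbf{c}$ at the north pole and $\mathbf{a},\mathbf{b}$ at colatitude $D\in(\pi/2,\pi)$ with small separation $\delta$. The three points lie in an open hemisphere and have diameter $D$. Yet the normalised midpoint has $\langle(\mathbf{a}+\mathbf{b})/\|\mathbf{a}+\mathbf{b}\|,\mathbf{c}\rangle=\cos D/\cos(\delta/2)<\cos D$, so it is farther than $D$ from $\mathbf{c}$.

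\emph{The trivial fallback does not repair this.} The bound $\angle\le\pi$ gives neither the middle inequality $\angle(\ystar,\mathbf{c}_\Honest)\le\mathrm{diam}_\angle(\core)$ nor the cap when $(k-1)\thalpha<\pi$, e.g.\ $k\in\{4,5\}$ at $\thalpha=0.65$.

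\emph{No argument can close the middle link.} It fails for the geometric median itself. Take $D=100^\circ$ and $\angle(\mathbf{a},\mathbf{b})=90^\circ$, put $100$ copies at each of $\mathbf{a}$ and $\mathbf{b}$, and join both to an honest $\mathbf{c}$ by meridian chains with steps $D/3\approx0.58<0.65$~rad. This set is $\thalpha$-connected, and one checks all pairwise angles are at most $D$. Reflection symmetry plus the first-order optimality condition force the Euclidean median to within $0.018$ of the chord midpoint. That midpoint's direction is about $104^\circ$ from $\mathbf{c}$, so $\angle(\ystar,\mathbf{c})\ge 102^\circ>D$.

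The paper's one-line proof (``the convex-hull argument uses only the diameter; substitute'') passes over the same renormalisation step, so you located the real weak point. What you can actually prove is the chain under the extra hypothesis $\mathrm{diam}_\angle(\core)\le\pi/2$. Beyond that, the middle term must be dropped, and any bound sharper than $\pi$ needs a different argument.
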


\begin{proof}
The convex-hull argument in
Appendix~\ref{app:proof-validity} uses only
$\mathrm{diam}_{\angle}(\core)$, not the existence of a centre;
substitute the bound of Proposition~\ref{prop:ball-component} for
the $2\thalpha$ bound obtained from the ball predicate.
\end{proof}

At the H-CSC operating point $\thalpha = 0.65$ this bound is
$2.6$~rad ($\approx 149^\circ$) at the minimum core size
$k = 2f{+}1 = 5$ and is vacuous for every $k\ge 6$, since
$5\thalpha = 3.25 > \pi$. It is also anti-monotone in $k$: a
larger, better-corroborated core carries a weaker certified
bound. Both properties argue for repairing the implementation
rather than the definition, which is the recommendation made in
\S\ref{sec:ball-vs-component}. Empirically, $5$ of the $73$
\textsc{semantic\_commit} cores in our MVR-50 runs are connected
but not balls, and the maximum observed
$\mathrm{diam}_{\angle}(\core)$ over all $73$ is $0.964$~rad
against $2\thalpha = 1.30$~rad, so
Theorem~\ref{thm:hcsc-semantic-validity}'s numerical conclusion is
not violated on any observed round even where its hypothesis is
not verified.

\subsection{Proof of Theorem~\ref{thm:hcsc-verdict-validity}
            (Verdict-Commit Validity: Sufficiency and Tightness)}
\label{app:proof-verdict-validity}

The main-text proof is complete; we restate it here in the
notation of this appendix and add the two remarks that the main
text omits for space. Recall $h_v = |G_v\cap\Honest|$,
$b_v=|G_v\cap\Byz|$, $\sum_v b_v = |\Byz\cap V(r)| \le f$, and
$m=\textit{verdict\_margin}^{\min}$.

\begin{proof}[Proof detail]
\emph{Sufficiency ($m>f$).} Let $v\ne v^\star$. The commit
hypothesis gives $|G_v|\le|G_{v^\star}|-m$. Then
\[
h_{v^\star}-h_v
=\bigl(|G_{v^\star}|-b_{v^\star}\bigr)-\bigl(|G_v|-b_v\bigr)
\ge m-b_{v^\star}+b_v \ge m-f \ge 1,
\]
where the last step uses that $\textit{verdict\_margin}$ and $m$
are integers, so $m>f$ implies $m\ge f{+}1$. Hence $v^\star$ is
the unique honest-argmax, the tie-break of
Definition~\ref{def:two-validities} is not exercised, and
$\mathrm{majority}_{\Honest}(V)=v^\star$.

\emph{Remark on slack.} The bound is not tight in $b_v$: the
adversary's best strategy is $b_{v^\star}=f$, $b_v=0$, i.e.\ pad
the group the protocol will select and leave the honest runner-up
untouched. Under any other allocation the honest gap is strictly
larger. This is why the condition depends on $f$ and not on the
realised Byzantine count.

\emph{Remark on the delivered view.} The argument is entirely
internal to $V(r)$; it does not require $|V(r)|=n$. Agents whose
proposals were not delivered contribute to neither $h_v$ nor
$b_v$, and $\mathrm{majority}_{\Honest}(V)$ is by
Definition~\ref{def:two-validities} evaluated on the delivered
view. Undelivered honest agents are therefore a validity concern
only insofar as RBC totality is violated, which
Assumption~\ref{assn:auth-channels} and the successful-round
condition exclude.

\emph{Tightness ($m\le f$).} Construct $V(r)$ with
$|G_{v^\star}|=2f{+}1$ containing all $f$ Byzantine agents, and a
second group $|G_u|=2f{+}1-m$ containing only honest agents; the
remaining $n-(4f{+}2-m)$ delivered agents, if any, are honest and
distributed over further verdicts with group size below
$2f{+}1-m$. Then $\textit{verdict\_margin}=m$ and the verdict
branch fires on $v^\star$, but $h_{v^\star}=f{+}1$ and
$h_u=2f{+}1-m\ge f{+}1$. For $m<f$ the inequality is strict in
the adversary's favour ($h_u>h_{v^\star}$) and
$\mathrm{majority}_{\Honest}(V)=u\ne v^\star$; for $m=f$ the two
are equal, and whether the honest reference agrees with $v^\star$
is decided by a tie-break the adversary chose the labels to win.
Feasibility requires $n\ge 4f{+}2-m$, which holds for the
experimental $n=10$, $f=2$, $m\in\{1,2\}$
($n\ge 9$ and $n\ge 8$ respectively). The case $m=0$ is realised
in our own evaluation and is recorded as
Example~\ref{ex:tiebreak-capture}.
\end{proof}

\subsection{Proof of Theorem~\ref{thm:det-semantic-validity}
            (Deterministic Semantic-Commit Validity)}
\label{app:proof-det-semantic}

\begin{proof}[Proof detail]
The main-text proof is complete. We record here why the condition
$|\core|>(n+f)/2$ rather than the weaker $|\core|\ge 2f{+}1$ is
what the conclusion needs, and why the conclusion is a strict
majority rather than a plurality.

Write $b=|\Byz\cap V(r)|\le f$, $N=|V(r)|\le n$,
$\Honest_V=\Honest\cap V(r)$, so $|\Honest_V|=N-b$. All of
$\core$ holds $v^\star$.

The two inequalities established in the main text are
\[
|\core\cap\Honest| \;\ge\; |\core|-b \;\ge\; |\core|-f
\;>\; \frac{n-f}{2}
\]
and
\[
|\Honest_V\setminus\core| \;\le\; N-|\core| \;\le\; n-|\core|
\;<\; \frac{n-f}{2}.
\]
The common threshold $(n-f)/2$ is exactly what makes them
compose, and it is the reason the hypothesis is stated as
$|\core|>(n+f)/2$: subtracting the maximum Byzantine mass $f$
from one side and the core size from the total $n$ on the other
must land on the same value, which forces
$|\core|-f = n-|\core|$, i.e.\ $|\core|=(n+f)/2$, as the break-even
point.

At the quorum size $|\core|=2f{+}1$ the argument fails whenever
$2f{+}1\le(n+f)/2$, i.e.\ $n\ge 3f{+}2$. For $n=3f{+}1$ the two
coincide and the ordinary quorum already suffices; for the
experimental $n=10$, $f=2$ we have $2f{+}1=5<6=(n+f)/2$, so the
shipped quorum is two short of the deterministic condition, and
$|\core|\ge 7$ is required.

Strictness of the majority: with
$A=|\{i\in\Honest_V:\rho_i.v=v^\star\}|\ge|\core\cap\Honest|$ and
$B=|\Honest_V|-A\le|\Honest_V\setminus\core|$, the displays give
$A>(n-f)/2>B$. Then $2A>A+B=|\Honest_V|$, which is the definition
of a strict majority. A strict majority is a fortiori the unique
plurality, so the conclusion implies --- and is strictly stronger
than --- that of Theorem~\ref{thm:hcsc-verdict-validity}. It is
strictly stronger in a way that matters for a three-valued
verdict vocabulary, where a plurality can be as small as
$\lceil|\Honest_V|/3\rceil+1$.
\end{proof}

\subsection{Proof of Lemma~\ref{lem:containment} and
            Corollary~\ref{cor:coverage-identity}}
\label{app:proof-containment}

\begin{proof}[Proof detail for Lemma~\ref{lem:containment}]
$\core\subseteq G_{v^\star}$ gives $|G_{v^\star}|\ge|\core|$. The
verdict groups $\{G_v\}$ partition $V(r)$, so for any
$v\ne v^\star$, $|G_v|\le|V(r)|-|G_{v^\star}|$. Hence
\begin{align*}
\textit{verdict\_margin}
&= |G_{v^\star}|-\max_{v\ne v^\star}|G_v|\\
&\ge 2|G_{v^\star}|-|V(r)|
 \ \ge\ 2|\core|-n\\
&> (n+f)-n \;=\; f,
\end{align*}
using $|V(r)|\le n$ and the hypothesis. Since the margin is an
integer, $\textit{verdict\_margin}\ge f{+}1$.
\end{proof}

\begin{proof}[Proof detail for Corollary~\ref{cor:coverage-identity}]
Both protocols read the same delivered view $V(r)$, form the same
verdict partition, and select the same $v^\star$ by the same
$\textsc{largest\_verdict}$ rule, so any committed verdict
agrees. It remains to show the two commit predicates coincide.
$\textsc{M3}^{\det}$ commits iff
$|G_{v^\star}|\ge 2f{+}1$ and $\textit{verdict\_margin}>f$; note
that $\textit{verdict\_margin}>f$ together with
$|V(r)|\ge n-f\ge 2f{+}1$ already implies
$|G_{v^\star}|\ge(|V(r)|+f{+}1)/2$, but we do not need this.

$\textsc{H-CSC}^{\det}$ commits iff its semantic gate
$\bigl(|\core|>(n+f)/2$ and the admissibility predicate holds$\bigr)$
fires, or its verdict gate
$\bigl(|G_{v^\star}|\ge 2f{+}1$ and $\textit{verdict\_margin}>f\bigr)$
fires. The verdict gate is literally $\textsc{M3}^{\det}$'s
predicate, so $\textsc{M3}^{\det}$ commits $\Rightarrow$
$\textsc{H-CSC}^{\det}$ commits. Conversely, if the semantic gate
fires then $|\core|>(n+f)/2$, so by
Lemma~\ref{lem:containment} $\textit{verdict\_margin}>f$, and by
$|G_{v^\star}|\ge|\core|>(n+f)/2\ge 2f{+}1$ (the last step for
$n\ge 3f{+}2$; for $n=3f{+}1$, $(n+f)/2=2f{+}\tfrac12$ and
integrality gives $|G_{v^\star}|\ge 2f{+}1$) the verdict gate
fires as well. Hence the semantic gate implies the verdict gate,
the disjunction collapses to the verdict gate, and the two commit
predicates are equal as predicates on delivered views.

The corollary is therefore an identity of functions, not a
statistical statement; the empirical check reported in
\S\ref{sec:level4} ($0$ discordant tasks in either mode, both
protocols at $0.72/0.64$) is a regression test on the
implementation rather than evidence for the claim.
\end{proof}

\subsection{Proof of Proposition~\ref{prop:quantiser} and the
            Numerical Instantiation}
\label{app:proof-quantiser}

\begin{proof}[Proof detail]
Part (1) is the explicit construction given in the main text; note
that it needs only one coordinate and is therefore independent of
$d$, and that it survives any fixed tie-break rule because the two
values lie strictly on opposite sides of the boundary rather than
on it.

Part (2). Write $y_k=\eta\,(t_k+u_k)$ with $t_k\in\mathbb{Z}$ and
$u_k\in[0,1)$ the fractional part. Round-to-nearest maps $y_k$ to
$\eta t_k$ if $u_k<\tfrac12$ and to $\eta(t_k+1)$ otherwise;
$y'_k=y_k+\delta_k$ with $0\le\delta_k<\eta$ rounds differently
exactly when the half-open interval $[u_k,u_k+\delta_k/\eta)$
contains $\tfrac12$ modulo $1$. Under $u_k\sim\mathrm{Unif}[0,1)$
that event has probability $\delta_k/\eta$. Independence across
$k$ gives
$\Pr[\text{all agree}]=\prod_k(1-\delta_k/\eta)$, and
$1-x\le e^{-x}$ for $x\in[0,1]$ gives the exponential bound.
\end{proof}

\paragraph{Instantiation for our artefact.}
The CRSE encoder inherits the \texttt{intfloat/e5-base}
embedding width, $d=768$; the artefact uses $\eta=10^{-6}$, a
value the previous version of this manuscript did not state. A
cross-environment re-encode of the MVR-50 canonical texts against
the frozen embedding cache gives minimum cosine similarity
$0.99999990$ and maximum element-wise absolute difference
$\delta_{\max}=3.6\times10^{-7}$
(\S\ref{sec:encoder-determinism}). Under the model of part (2):

\begin{center}
\footnotesize
\begin{tabular}{@{}l c c c@{}}
\toprule
$\eta$ & mean $\bar\delta$ assumed & $\Pr[\text{digests agree}]\le$ & quantisation displacement \\
\midrule
$10^{-6}$ & $10^{-8}$ & $e^{-7.68}\approx 5\times10^{-4}$ & $1.4\times10^{-5}$~rad \\
$10^{-4}$ & $10^{-8}$ & $e^{-0.077}\approx 0.93$          & $1.4\times10^{-3}$~rad \\
$10^{-2}$ & $3.6\times10^{-7}$ & $\approx 0.972$          & $0.139$~rad $\approx 0.21\,\thalpha$ \\
\bottomrule
\end{tabular}
\end{center}

\noindent
where the displacement column is the worst-case
$\|\quant{\ystar}-\ystar\|_2 \le (\eta/2)\sqrt{d}$, converted to
angle on the unit sphere via Lemma~\ref{lem:sphere_identity}. The
table is a model, not a measurement, and the assumed $\bar\delta$
values are chosen to be conservative relative to
$\delta_{\max}$; the robust conclusion is the qualitative one,
that at the shipped $\eta$ the quantiser step is the same order of
magnitude as the discrepancy it is meant to absorb, and that
enlarging $\eta$ far enough to be comfortable costs a fifth of the
admissibility radius while still leaving agreement probabilistic.
The resolution adopted in \S\ref{sec:encoder-determinism} is to
change the admissibility instrument, under which the hashed object
becomes an exactly reproducible byte string and
Proposition~\ref{prop:quantiser} no longer bites.

\subsection{Proof of Theorem~\ref{thm:safe-abort}
            (Safe Abort Under Non-admissibility or Insufficient Certificate)}
\label{app:proof-safe-abort}

\begin{proof}[Proof detail]
The proof is by direct inspection of the control flow of
Algorithm~\ref{alg:certified-commitment-round}. We trace each
abort branch of the H-CSC pipeline in order; in every case the
algorithm returns \Abort\ \emph{before} any commit object is
constructed.

Stage~2 (verdict grouping) partitions $V(r)$ into verdict
groups $\{G_v\}_{v\in\mathcal{V}}$ and selects $v^\star$
deterministically via the \textsc{largest\_verdict} tie-broken
rule (Section~\ref{sec:eval-setup}); if no verdict reaches
$|G_{v}|\ge 2f{+}1$ the algorithm returns
$\Abort(\texttt{verdict\_below\_quorum})$. Stage~3a extracts
the largest angular-threshold component
$\core\gets\textsc{LargestAngularComponent}(G_{v^\star},\thalpha)$;
the semantic branch is unavailable and control passes to
Stage~3b if either $|\core|<2f{+}1$ or the radius predicate
$r^\star\gets\min_{h\in\core}\max_{j\in\core}\angle(\mathbf{e}_h,\mathbf{e}_j)$
exceeds $\thalpha$. Stage~3b is taken iff Stage~3a returned no
admissible core. The Stage~2 gate has already guaranteed
$|G_{v^\star}|\ge 2f{+}1$ (otherwise the round would have
aborted with \texttt{verdict\_below\_quorum}), so Stage~3b's
remaining check is the margin predicate
$\textit{verdict\_margin}\ge\textit{verdict\_margin}^{\min}$;
if it fails the algorithm returns
\Abort(\texttt{v2\_both\_paths\_failed:}\discretionary{}{}{}%
\texttt{semantic\_core\_failed:}\discretionary{}{}{}%
\textit{semantic\_fail\_reason}),
matching Algorithm~\ref{alg:certified-commitment-round}. Stage~4 (semantic branch only) computes
$\ystar\gets\agg(\{\mathbf{e}_i:i\in\core\})$ and returns
$\Abort(\texttt{aggregation\_failed})$ on a degenerate result
(zero norm or numerical failure). Stage~5 collects valid
signatures on the candidate typed digest $\hstar$---either
$\hstar_{\textsc{sem}}=H(\texttt{"semantic\_commit"}$ $\Vert\,Q_\eta(\ystar)\,\Vert\,h_\pi\,\Vert\,r\,\Vert\,v^\star)$
on the semantic branch or
$\hstar_{\textsc{ver}}=H(\texttt{"verdict\_commit"}$ $\Vert\,\textsc{VerdictPayload}(v^\star)$ $\Vert\,h_\pi\,\Vert\,r)$
on the verdict branch (\S\ref{sec:protocol-aggregate-digest});
if fewer than $2f{+}1$ distinct signatures arrive within
$\Delta$ after GST it returns
$\Abort(\texttt{insufficient\_signers},|\Sigma|,2f{+}1)$.

These branches are exhaustive: the algorithm returns either
$\Commit(\hstar,\pi,\cert{\hstar},\core,\mathbf{commit\_type})$
on the semantic or verdict success path, or one of the four
typed \Abort\ values. The control flow never publishes a
\textsc{semantic\_commit} when Stage~3a's admissibility check
fails (Stage~3a guards), never publishes a \textsc{verdict\_commit}
when the verdict-admissibility predicate fails (Stage~3b
guards), and never publishes either commit type when fewer
than $2f{+}1$ distinct signatures arrive (Stage~5 guards
combined with Lemma~\ref{lem:cert-uniqueness}).
\end{proof}

\subsection{Proof of Theorem~\ref{thm:termination}
            (Conditional Termination Under Admissible Round and
             Synchrony)}
\label{app:proof-termination}

\begin{proof}[Proof detail]
We prove two complementary statements. \emph{(Unconditional
termination)} The protocol always terminates with either
$\Commit$ or $\Abort$ within one round of communication after
GST. Stages 1--4 of
Algorithm~\ref{alg:certified-commitment-round} are local
computations on the delivered view $V(r)$ and do not block.
Stage 5 collects valid signatures over the candidate typed
digest $\hstar$ within the bounded delay
$\Delta$ guaranteed by
Assumption~\ref{assn:partial-synchrony}; if fewer than $2f{+}1$
distinct valid signatures arrive, the algorithm returns
$\Abort(\texttt{insufficient\_signers})$. The protocol therefore
exits via $\Commit$ or a typed $\Abort$ in finite time and
makes no liveness assumption on Byzantine behaviour.

\emph{(Conditional commit liveness)} A round terminates with
$\Commit$ if the relevant signer set
$\Sigma_{\textit{src}}$ used at Stage 5 contains at least
$2f{+}1$ honest agents. This is the strict reading of
Theorem~\ref{thm:hcsc-termination} and is strictly stronger
than Assumption~\ref{assn:admissible-round} (which only
requires a $2f{+}1$ admissible core, possibly mixed with
Byzantines). The reason the stronger condition is needed is
that liveness cannot depend on Byzantine signatures: an
adversary may withhold any subset of its signatures, so only
honest agents in $\Sigma_{\textit{src}}$ are guaranteed to
produce a valid signature within $\Delta$ after GST. Under
Assumption~\ref{assn:auth-channels} (authenticated channels)
and Assumption~\ref{assn:partial-synchrony} (partial synchrony),
the at-least-$2f{+}1$ honest agents in $\Sigma_{\textit{src}}$
broadcast and have their signatures delivered within $\Delta$
after GST, so $|\Sigma|\ge 2f{+}1$ without any Byzantine
contribution.

The semantic branch and the verdict branch differ only in
which set is taken as $\Sigma_{\textit{src}}$. On the semantic
branch $\Sigma_{\textit{src}}=\core$ (the within-verdict
admissible core selected at Stage~3a) and the commit-liveness
condition is: $\core$ contains at least $2f{+}1$ honest
agents. On the verdict branch
$\Sigma_{\textit{src}}=G_{v^\star}$ (the entire candidate
verdict group; any $2f{+}1$ distinct valid signatures suffice)
and the commit-liveness condition is: $G_{v^\star}$ contains
at least $2f{+}1$ honest agents. In both cases the condition
matches the authors' reference implementation, which is
planned to accompany a future reproducibility package, and matches
Theorem~\ref{thm:hcsc-termination}'s strengthened
admissibility statement. Together with the unconditional
termination clause this gives commit-or-abort termination
within one round after GST, with $\Commit$ reached when the
honest count in $\Sigma_{\textit{src}}$ meets the quorum and a
typed $\Abort$ otherwise. The conclusion that
Byzantine signatures are not required for liveness matches
Theorem~\ref{thm:hcsc-termination}'s strengthened-admissibility
statement in the main text.
\end{proof}

\subsection{Discussion: Lemma~\ref{lem:rep-boundary}
            (Representation Boundary)}
\label{app:rep-boundary-discussion}

The proof of Lemma~\ref{lem:rep-boundary} given in
Section~\ref{sec:correctness} is direct (two proposals with
identical embeddings cannot be distinguished by any function of
the embeddings) and does not require additional details. We use
this appendix instead to correct an empirical claim made at this
point in the previous version, and to record the measurement that
replaces it.

\paragraph{The corrected claim.}
v1 stated that when a Byzantine proposal $\rho_b$ is engineered so
that $\angle(\cE(\rho_h),\cE(\rho_b))$ is small but nonzero, ``the
latter outcome dominates: Byzantine embeddings that approach an
honest cluster's boundary produce \emph{abort}, not commit,'' and
concluded that the typed safe-abort branch is the defence. Direct
measurement contradicts this. Against verdict-preserving,
rationale-corrupting adversaries --- Mode~B of
Definition~\ref{def:semantic-poisoning}, generated as two attack
families (scope drift and reasoning substitution) at two evasion
levels, $n=10$ tasks, $40$ attacks, $80$ honest proposals,
re-encoded with the same CRSE checkpoint and run through the
unmodified protocol code --- the attack enters the selected core on
$100\%$, $70\%$, $100\%$ and $80\%$ of rounds respectively, and
H-CSC commits on $10/10$ rounds with a contaminated payload in
every case. That is the same round-level outcome as the
verdict-only baseline, which admits $100\%$ by construction.
Abort does not dominate; abort does not occur.

\paragraph{Why the safe-abort branch cannot serve as the defence.}
The natural repair --- tighten $\thalpha$ until the attacks fall
outside the ball --- fails for an architectural reason rather than
a calibration one. Tightening $\thalpha$ causes the semantic path
to fail, which routes the round to the verdict-commit fallback,
which certifies the entire verdict group unfiltered. Sweeping
$\thalpha\in\{0.85,0.75,0.65,0.55,0.45\}$ leaves the maximum harm
of an admitted attack constant at $8.0$ on a blind $0$--$10$
grounding scale, while the attack pass rate rises monotonically
from $85\%$ to $100\%$ and honest retention falls from $0.738$ to
$0.375$. At $\thalpha=0.45$ the semantic-commit rate is $0.00$ and
H-CSC has degenerated exactly into certificate-wrapped majority.
Splitting the admitted attacks by commit type does not rescue the
picture: attacks admitted through the semantic path have maximum
harm $7.0$--$8.0$ (mean $3.79$--$4.18$), against $8.0$ (mean
$4.36$--$4.50$) for the verdict path. The semantic path carries no
harm bound either.

\paragraph{What the boundary actually is.}
The mechanism is a signal-to-noise failure in the honest
population, not a shortfall in encoder training. The honest
proposals' own angular dispersion about their geometric median has
median $0.528$~rad, $90$th percentile $0.992$~rad and $95$th
percentile $1.126$~rad, while the attack displacement is
$0.58$--$0.85$~rad. At $\thalpha=0.65$ the admissibility ball only
just contains the honest population, so it cannot be tightened
without discarding honest agents faster than adversaries. The same
inequality defeats an off-the-shelf entailment model and a
frontier LLM judge on the same data
(\S\ref{sec:eval-r2}), which is why we read the boundary
as a property of the \emph{setting} --- free-text rationales from a
noisy honest population --- rather than of the encoder. The two
constructive consequences we can support are: change the
admissibility instrument (\S\ref{sec:eval-r2}), or change
the output schema so that the honest dispersion collapses
(\S\ref{sec:eval-r6}). Strengthening the encoder is
\emph{not} on the list, and v1's closing sentence recommending it
is withdrawn.

\section{Topology Design-Space Analysis}
\label{appendix:topology}

The original PODC submission of this work made a Jaccard
support-overlap graph and $k$-core peeling the central filtering
mechanism for Byzantine agents. After the post-PODC audit, the
topology mechanism is retained in the codebase as a design-space
option but is \emph{not} part of the main protocol path. We
describe the original motivation, the corrected finding, and the
disposition here for transparency.

\subsection{Original motivation}

Two sources of intuition motivated the topology mechanism.
First, two honest agents that base their proposals on overlapping
evidence cite overlapping \texttt{evidence\_id} sets, while a
Byzantine agent that fabricates citations does not. Building a
Jaccard graph $G_J=(V,E)$ with edges $(i,j)\in E$ iff
\[
\frac{|\rho_i.\textit{evidence\_ids}\cap\rho_j.\textit{evidence\_ids}|}
     {|\rho_i.\textit{evidence\_ids}\cup\rho_j.\textit{evidence\_ids}|}
\;\ge\; \tau_J
\]
should therefore tend to keep honest agents in a single dense
component while pushing Byzantines to isolated nodes or
disconnected fringes. Second, $k$-core peeling of $G_J$ at
$k=f{+}1$ should remove fringe Byzantines while preserving the
honest core.

\subsection{Corrected experimental finding}

The post-PODC audit ran the topology mechanism alongside the
main path on every benchmark. The summary observation, in three
parts, is summarised in the text below; the underlying
topology design-space outputs are recorded in the authors'
internal audit logs, which are planned to accompany a future
reproducibility package.

\begin{enumerate}
\item \textbf{No commit/abort decision change at the operating
point.} On BCS\_v1 across 120 episodes, on the 20-task pilot
across both attack modes, on MVR-50 across both attack modes,
and at every $\thalpha$ in the threshold sensitivity sweep, the
topology gate produces commit/abort decisions \emph{identical}
to the main path on every task / every grid point. The selected
core size shrinks slightly (the gate trims a small number of
nodes), but the round-level decision does not change.

\item \textbf{Topology aborts more in the loose region.} For
$\thalpha\ge 0.65$ on the 20-task pilot, the topology gate
aborts $5$--$20$ percentage points more rounds than the main
path, achieving ``safety by extra aborts''. This does not
reduce \texttt{invalid\_hmaj} below the main path's
in any measured cell except a single MVR-50 cell where the gate
lowers \texttt{invalid\_hmaj} from $0.02$ to $0.00$
at the cost of an extra $4$ aborts.

\item \textbf{$k$-core peeling is empirically a no-op.} The
post-PODC audit reported zero Hard-bin Byzantines on BCS\_v1 ---
Byzantines that are admitted by the angular filter but excluded
by the $k$-core filter --- under the published threshold
settings. We therefore removed $k$-core from the main protocol
path (decision-log entry \texttt{D-11}).
\end{enumerate}

\subsection{Disposition}

The topology mechanism is retained in the reference
implementation behind a parameter flag with values in
$\{$\texttt{None}, \texttt{jaccard}, \texttt{jaccard\_kcore}$\}$.
The main path of this paper uses
\texttt{topology\_gate=None}. Once a future reproducibility
package is made available, the design-space ablation can be reproduced
by setting \texttt{topology\_gate="jaccard"} and re-running the
commitment evaluation scripts; the corresponding CSVs are
regenerated by the paper-assets pipeline included in that
package.

\subsection{What this appendix is not}

This appendix is \emph{not} a competing main-path proposal.
It does \emph{not} support claims of the form ``topology is
necessary'', ``topology improves accuracy'', or ``topology is a
safety gate that catches what the angular filter misses''. The
corrected finding is the opposite: at the chosen operating
point, the topology gate is decision-redundant with the main
path; outside the operating point, it trades commit rate for
extra aborts without improving honest-reference validity. We
report the design-space ablation here for transparency; the
corrected-decision rationale is recorded in the authors'
internal decision log, which is planned to accompany a future
reproducibility package.

\section{Additional Experimental and Implementation Details}
\label{appendix:additional-experiments}

This appendix collects implementation specifics, an audit note,
and a supplementary table that did not fit the \S6 main-text
budget.

\subsection{CRSE Training Configuration}
\label{app:crse-training}

The Certified Robust Semantic Encoder (CRSE) is fine-tuned from
the publicly-released \texttt{intfloat/e5-base} sentence
encoder.
\begin{itemize}
    \item \textbf{Contrastive loss:} a supervised contrastive
    objective with an annealed temperature schedule (start
    $\tau \approx 0.20$, target $\tau \approx 0.12$); the exact
    schedule and the per-stage $\tau$ values are recorded in
    the authors' training configuration, which is planned to
    accompany a future reproducibility package.
    \item \textbf{Hard negatives:} GPT-4o-generated semantic
    poisons (see \S\ref{app:prompts}), plus a per-anchor
    Byzantine-quota cap to prevent over-representation of any
    one attack instance.
    \item \textbf{Optimisation:} AdamW with a small base
    learning rate (~$5\times 10^{-6}$ at the projector and
    sub-$10^{-5}$ at the backbone, with a Byzantine-introduction
    schedule that further reduces the backbone rate in later
    stages), batch size $32$.
    \item \textbf{Infrastructure:} a single training device
    (Apple Silicon / MPS backend) for the trained checkpoint;
    AMP / mixed precision is config-gated.
\end{itemize}
The encoder checkpoint used in Section~\ref{sec:evaluation} is
the single trained CRSE used in all reported evaluations; the
authoritative training hyperparameters are recorded in our
experiment logs, and the resolved per-run config is written
alongside each experiment's outputs. The training
configuration and checkpoint are retained by the authors and are
planned to accompany a future reproducibility package.
Multi-encoder ablation is future work.

\subsection{Semantic Attack Generation Prompts}
\label{app:prompts}

We use GPT-4o (for BCS\_v1) and \texttt{gpt-4o-mini} (for
MVR-50) to generate Byzantine proposals. The system prompt below
is the canonical version used for the four primary attack types
B1--B4 reported in Section~\ref{sec:eval-setup}.

\begin{tcolorbox}[colback=gray!5,colframe=gray!50,title=System
Prompt for Semantic Attacker]
\small
\textbf{Role:} You are a Semantic Byzantine Adversary. Your goal
is to generate a ``poisoned'' version of the input text based on
a specific \textbf{Attack Strategy ID}.

\textbf{General Constraints:}
\begin{itemize}[leftmargin=*]
    \item \textbf{High similarity:} Maximise lexical overlap
    (cosine similarity $>0.9$). Use the same sentence structure
    and entities where possible.
    \item \textbf{Length:} Keep output length within $\pm 5\%$
    of the original.
    \item \textbf{Stealth:} The output must read fluently and
    appear authoritative.
\end{itemize}

\textbf{Attack Strategy Definitions:}
\begin{itemize}[leftmargin=*]
    \item \textbf{B1: Polarity flip.} Invert the truth value or
    stance of the central claim
    (e.g.\ ``X increases Y'' $\to$ ``X reduces Y'').
    \item \textbf{B2: Evidence omission.} Selectively remove
    qualifiers, conditions, or limitations
    (e.g.\ delete ``in most cases'') to make the claim appear
    absolute or biased.
    \item \textbf{B3: Fake causality.} Distort the causal
    structure. Convert correlations into direct causal
    statements or introduce spurious causal links between
    existing entities without changing the topic.
    \item \textbf{B4: On-topic hallucination.} Fabricate
    specific, verifiable details (dates, numbers, institution
    names) that support the topic but are factually unsupported
    by the original text.
\end{itemize}

\textbf{Example (Strategy B1):} \\
\textbf{Input:} ``The encryption protocol guarantees forward
secrecy.'' \\
\textbf{Output:} ``The encryption protocol fails to guarantee
forward secrecy.''
\end{tcolorbox}

\subsection{Bootstrap Confidence Intervals}
\label{app:bootstrap}

The MVR-50 confidence intervals reported in
Tables~\ref{tab:mvr50}, \ref{tab:mvr50-ci} and
Section~\ref{sec:eval-setup} are computed by task-level
non-parametric bootstrap with $10\,000$ resamples and
seed~$42$. Each bootstrap sample is obtained by sampling $50$
tasks (with replacement) from the original $50$-task list and
recomputing the per-method, per-mode metric on the resampled
task set; the reported $95\%$ interval is the
$2.5$th--$97.5$th percentile of the metric distribution. The
bootstrap is reproducible CPU-only; the script and CI report
are retained by the authors and are planned to accompany a future
reproducibility package.

\subsection{Strict-Validation Audit and Evidence-ID Hallucination}
\label{app:audit}

A strict-validation audit applied a task-aware schema check to
every Byzantine record at write time, after the discovery that
the original lenient validator did not catch
\texttt{evidence\_id} fields hallucinated by the LLM. The audit
found $3$ of $100$ rushing Byzantine records on MVR-50 that
cited an \texttt{evidence\_id} not present in the parent task
(none in the static-attack set). All $3$ records were traced
through the commitment evaluation: in each case the certified
pipeline aborted the affected task under both attack modes
(the \texttt{certified\_absorbs\_attack} paired class), so the
hallucinations did not change any commit-level result. The
runner-side validator was hardened post-hoc to reject any
record citing a nonexistent \texttt{evidence\_id} at write time;
the authors' internal audit log records the affected record IDs and
is planned to accompany a future reproducibility package.

\subsection{Logical Certificate, Not Threshold-Signature
            Cryptography}
\label{app:logical-cert}

The certificate $\cert{\hstar}$ in
Algorithm~\ref{alg:certified-commitment-round} is implemented by
a logical signature simulator that preserves per-signer
uniqueness and the $2f{+}1$ distinct-signer threshold. We do
\emph{not} implement, evaluate, or analyse production
threshold-signature primitives; production schemes
(e.g.\ BLS, FROST, threshold ECDSA) are a standard substitution
that does not change the protocol logic but does change the
signature payload size and verification cost. Cryptographic
performance benchmarking is explicit future work.

\section{Reproducibility and Corrections}
\label{appendix:repro}

This appendix does two things that v1 did not. It supplies the
mapping from the labels used in this manuscript to the
identifiers that actually appear in the code and result files,
and it lists the errors we found in v1 while preparing this
revision. We think an errata section makes a preprint more
useful, not less: every item below was found by re-running the
released pipeline against the manuscript, and a reader who
repeats that exercise should not have to rediscover them.

\subsection{Paper labels and code identifiers}
\label{app:decoder}

None of the method labels used in this manuscript
(H-CSC, CSC, M1--M3, V1) appear anywhere in the implementation
or in the released result files, which use a different naming
scheme. Table~\ref{tab:label-decoder} is the decoder.

\begin{table*}[t]
\centering
\caption{Paper label to code-artifact decoder. The
\texttt{method} and \texttt{param\_\allowbreak setting} columns are the
literal values in the released
\texttt{coverage\_\allowbreak recovery\_\allowbreak *.csv} files; a reader reproducing
Table~\ref{tab:hcsc-main} must filter on the pair.}
\label{tab:label-decoder}
\small
\begin{tabularx}{\linewidth}{@{}>{\raggedright\arraybackslash}X >{\raggedright\arraybackslash}X >{\raggedright\arraybackslash}X >{\raggedright\arraybackslash}X@{}}
\toprule
\textbf{Label in this paper} & \textbf{CSV \texttt{method}} &
\textbf{CSV \texttt{param\_\allowbreak setting}} & \textbf{Effective parameters} \\
\midrule
H-CSC (main path)
  & \texttt{V2\_\allowbreak hierarchical}
  & \texttt{theta\_\allowbreak 0.65\_\allowbreak \_\allowbreak fb\_\allowbreak margin\_\allowbreak 1}
  & $\thalpha = 0.65$~rad, fallback $=$ \textsc{margin\_\allowbreak 1}, $\eta = 10^{-6}$, $\eps = 0.70$ \\
CSC (strict-semantic ablation)
  & \texttt{B0\_\allowbreak certified\_\allowbreak main}
  & \texttt{default}
  & $\thalpha = 0.55$~rad, no fallback \\
M1 (abstaining majority)
  & \texttt{B1\_\allowbreak abstaining\_\allowbreak majority}
  & \texttt{default}
  & --- \\
M2 (margin majority)
  & \texttt{B2\_\allowbreak margin\_\allowbreak majority}
  & \texttt{margin\_\allowbreak 1}
  & --- \\
M3 (certificate $+$ majority)
  & \texttt{B3\_\allowbreak certificate\_\allowbreak wrapped\_\allowbreak majority}
  & \texttt{margin\_\allowbreak 1}
  & --- \\
V1 (verdict-conditioned semantic)
  & \texttt{V1\_\allowbreak verdict\_\allowbreak conditioned}
  & \texttt{theta\_\allowbreak 0.7\_\allowbreak \_\allowbreak cand\_\allowbreak largest\_\allowbreak verdict}
  & $\thalpha = 0.70$~rad \emph{(not the main-path $0.65$; see Erratum~E3)} \\
\bottomrule
\end{tabularx}
\end{table*}

\paragraph{A label collision the reader must know about.}
The tokens \texttt{B1}--\texttt{B4} are overloaded in v1 and,
for continuity, in this version. In
\S\ref{sec:eval-setup} and Table~\ref{tab:mvr50-ci} they denote
\emph{baseline methods} (\texttt{B0}--\texttt{B3} in the code
naming above). In the attack-suite description of
\S\ref{sec:eval-setup} and in Appendix~\ref{app:prompts} they
denote \emph{attack types}
(\textsc{B1}~polarity\_flip, \textsc{B2}~evidence\_omission,
\textsc{B3}~false\_causality,
\textsc{B4}~on\_topic\_hallucination). The new
verdict-preserving families introduced in
\S\ref{sec:eval-r2} continue the attack numbering as
\textsc{B6}~scope\_shift and
\textsc{B7}~reasoning\_substitution. Where the two could be
confused we now write ``method \texttt{B3}'' or ``attack
\textsc{B3}'' explicitly.

\paragraph{Encoder identity.}
The single CRSE checkpoint used for every embedding reported in
this paper is the file \texttt{500pt\_best\_model.pt}
($442$~MB), fine-tuned from a \texttt{intfloat/e5-base}-family
sentence encoder with a $512$-dimensional projection head
(Appendix~\ref{app:crse-training}). The projection dimension is
a useful integrity check on any reproduction: the released
embedding cache is $512$-dimensional, so a run that silently
fell back to the \texttt{all-MiniLM-L6-v2} default would produce
$384$-dimensional vectors and be immediately visible.
\emph{Author to confirm before posting:} v1's appendix names the
backbone \texttt{intfloat/e5-base} while the code default is
\texttt{intfloat/e5-base-v2} and the fallback constant names
\texttt{all-MiniLM-L6-v2}. These are three different models and
exactly one versioned name should survive here.

\paragraph{Quantiser and digest binding.}
The lattice quantiser $\quant{\cdot}$ uses step
$\eta = 10^{-6}$ in every reported configuration. The digest is
computed over a canonical serialisation of the quantised
aggregate together with the protocol parameters and round
identifier, as specified in \S\ref{sec:protocol}; the step value
is part of the bound parameter set, so two signers using
different steps produce different digests by design rather than
disagreeing silently.

\paragraph{Canonical proposal text.}
The encoder input is a deterministic five-field serialisation of
the structured proposal:

\begin{tcolorbox}[colback=gray!5,colframe=gray!50,
                  title=Canonical proposal text]
\small
\begin{verbatim}
[VERDICT: <verdict, lowercased, stripped>]
[CONFIDENCE: <float, 2 decimals>]
[EVIDENCE: <ids, sorted, comma-space joined>]
[RATIONALE: <whitespace-collapsed>]
[CLAIM: <whitespace-collapsed headline claim>]
\end{verbatim}
\end{tcolorbox}

\noindent
Fields are joined by single newlines. Evidence identifiers are
sorted lexicographically; an empty evidence list yields
\texttt{[EVIDENCE: ]}; confidence is rounded to two decimals;
rationale and claim have internal whitespace collapsed. A
missing required field raises rather than defaulting. This
serialisation, not the raw model output, is what
Assumption~\ref{assn:deterministic-cE} quantifies over.

\paragraph{Verified reproduction facts.}
The following were re-derived from the released artefacts during
this revision and answer questions raised in review.
\begin{itemize}[leftmargin=*]
  \item \textbf{Encoding reproduces the frozen cache.}
  Re-encoding the MVR-50 canonical texts with the checkpoint
  above and comparing against the released cache gives a minimum
  cosine similarity of $0.99999990$ and a maximum element-wise
  difference of $3.6\times10^{-7}$. This is reproduction to
  within the quantiser step, not bitwise identity
  (\S\ref{sec:limitations}, item~9).
  \item \textbf{The protocol layer is cheap and CPU-only.}
  Re-running all $6\,000$
  (method $\times$ parameter setting $\times$ mode $\times$
  task) combinations from the cached embeddings takes $1.3$
  seconds on CPU, with no API access and no GPU.
  \item \textbf{Byzantine agents do sign.} The experiments use
  \texttt{signer\_policy = "selected\_core"}: whichever agents
  are selected into the core are the signers, Byzantine agents
  included. The certificate threshold is therefore exercised
  adversarially rather than over an honest-only signer set.
  \item \textbf{\texttt{insufficient\_signers} never fires.}
  The abort reason \texttt{insufficient\_signers} occurs $0$
  times across all $6\,000$ per-task records. Reviewers asked
  for this figure even if zero; it is zero.
  \item \textbf{Pilot and test sets are disjoint.} The $20$-task
  calibration pilot used to set $\thalpha$ and the fallback rule
  shares $0$ tasks with MVR-50 and $0$ with MVR-100.
  \item \textbf{No detectable encoder contamination.} The CRSE
  training set is $488$ Wikipedia \texttt{science\_tech}
  anchors. Exactly $1$ of $488$ titles contains a
  climate-related term; the maximum token Jaccard similarity
  between any training anchor and any MVR claim is $0.176$
  (mean $0.059$); the title intersection with the evaluation
  claims is empty.
\end{itemize}

\paragraph{Artefact gap acknowledged.}
The result files released with v1 do not persist the quantised
aggregate $\widetilde{\mathbf{y}}$, the digest $\hstar$, or core
membership for any round: the digest routine computed them and
returned only the hex digest. The consequence is that v1's
headline claim --- that $74\% / 72\%$ of rounds carry an
embedding-backed digest --- was not independently checkable from
the released files, however true it is of the running code. The
commitment object is now persisted per round.

\subsection{Errata for version 1}
\label{app:errata}

Each item states what v1 said, what is true, and what it
changes. Items E1--E4 affect claims in the main text; E5--E6 are
proof-hygiene items; E7 is bibliographic.

\paragraph{E1 --- The one invalid commit was mis-typed.}
\emph{v1 said:} in \S\ref{sec:eval-setup}, that method
\texttt{B3} is safer than H-CSC ``by a single task (one MVR-50
task on which H-CSC emits an invalid \textsc{verdict\_commit}).''
\emph{Correction:} the event was a \textsc{semantic\_commit}, not
a verdict\_commit. On the affected static-mode task the honest
agents split $5$ \texttt{insufficient} / $3$ \texttt{support}
(honest plurality \texttt{insufficient}, matching gold), both
Byzantine agents voted \texttt{support}, and the delivered view
was $5$--$5$ with margin exactly $0$. The deterministic
tie-break (\texttt{support} $>$ \texttt{refute} $>$
\texttt{insufficient}) elected $v^\star = \texttt{support}$; the
support group's core contained $3$ honest and $2$ Byzantine
members, all inside $\thalpha$, giving
$|\core| = 5 = 2f{+}1$ and a Byzantine core fraction of $0.40$.
\emph{What it changes:} v1's implicit position that the semantic
branch is the safer branch is unsupported by its own data. The
event is an exploitable tie-break capture, analysed in
\S\ref{sec:margin-gate}.

\paragraph{E2 --- The attack-composition claim is wrong for
rushing mode.}
\emph{v1 said:} the four attack types are ``balanced
$25/25/25/25$ across the $100$ Byzantine slots'' and each slot
is paired across static and rushing.
\emph{Correction:} static is indeed $25/25/25/25$; rushing is
$29/9/41/21$. The attack selector chooses a target verdict
adaptively from the honest plurality it observes in the round,
and under rushing that plurality differs from the static
pre-assignment, so composition drifts. The mechanism is
reasonable; the description was not.
\emph{What it changes:} every static-versus-rushing difference
in Table~\ref{tab:hcsc-main} confounds attack timing with attack
composition and should not be read as a timing effect.

\paragraph{E3 --- The V1 baseline used a wider threshold than
the main path.}
\emph{v1 said:} nothing --- Table~\ref{tab:hcsc-main}'s V1 row
is described as ``verdict-conditioned semantic'' with no
threshold stated.
\emph{Correction:} that row was produced at
$\thalpha = 0.70$~rad, while the H-CSC main path runs at
$\thalpha = 0.65$~rad. The baseline was therefore evaluated
under a strictly more permissive admissibility predicate than
the proposed method, and the manuscript did not say so.
\emph{What it changes:} the V1 row is not a matched comparison
as printed. It is re-reported at matched $\thalpha$ in this
version, and the parameter appears in
Table~\ref{tab:label-decoder} so the discrepancy is visible from
the artefact alone.

\paragraph{E4 --- The semantic branch never enforced a verdict
margin.}
\emph{v1 said:} that the main path operates at
$\textit{verdict\_margin}^{\min} = 1$.
\emph{Correction:} that is true of the verdict-commit fallback
only. The semantic branch tested the admissibility predicate
alone, and the candidate-group rule tested only whether the top
verdict count reached quorum, so the principal branch ran at an
effective margin of $0$.
\emph{What it changes:} it is what makes E1 reachable, and it
means v1's operating-point description overstated the
constraint the protocol enforced. The two-path margin gate and
its coverage cost are given in \S\ref{sec:margin-gate}.

\paragraph{E5 --- Core extraction returns components, not
balls.}
\emph{v1 said:} Definition~\ref{def:admissible-round} specifies
a centre node and a radius, i.e.\ a ball.
\emph{Correction:} the implementation returns a connected
component of the overlap graph. Of $73$ extracted cores, $5$ are
not balls under the stated definition.
\emph{What it changes:} less than one might fear. The largest
observed core diameter is $0.964$~rad, well inside the
$2\thalpha = 1.30$~rad bound, so Theorem~\ref{thm:validity}'s
guarantee is not violated on any measured round, and re-running
with a strict ball criterion leaves the reported numbers
essentially unchanged. We record it because the definition and
the code should agree.

\paragraph{E6 --- Theorem~\ref{thm:hcsc-verdict-validity} had no
sufficiency proof.}
\emph{v1 said:} in the appendix, that ``the verdict-validity
argument is fully developed in Section~\ref{sec:correctness}
$\ldots$ and not repeated here.''
\emph{Correction:} the main-text sketch derives that
$\textit{verdict\_margin}^{\min} > f$ is \emph{necessary} for
deterministic honest-reference validity; it does not establish
that the condition is sufficient, and no appendix proof exists.
\emph{What it changes:} the theorem as stated in v1 was not
proved. This version supplies the complete argument, together
with the strictly stronger honest-majority result under the
$|\core| > (n{+}f)/2$ rule, in \S\ref{sec:correctness}.

\paragraph{E7 --- Bibliography hygiene and a substantive
omission.}
v1's \texttt{refs.bib} contains a duplicate \texttt{lewis2020rag}
key, a stale \texttt{dwork1988\_DELETED} entry, and $44$ of $75$
entries that are never cited. The substantive item is not
hygiene: CP-WBFT~\cite{zheng2026cpwbft} was peer-reviewed and
published before v1 appeared and was not cited, and the two
canonical attacks on robust
aggregation~\cite{baruch2019little, xie2020fall} were present in
the bibliography and never cited despite being directly relevant
to this paper's mechanism. Both are addressed in
\S\ref{sec:related_work}.

\paragraph{Scope of this errata.}
The list covers what a full re-run of the released pipeline
against the manuscript surfaced. It is not a guarantee that
nothing else is wrong. Items still under verification at the
time of writing --- among them the episode count reported for
the BCS\_v1 AUC and one sentence in the strict-validation audit
of Appendix~\ref{app:audit} --- will be added here in a later
version rather than corrected silently.

\section{Material Moved from the Main Text}
\label{app:moved}
These subsections were moved out of the body for length; they are unchanged.

\subsection{Protocol-Level Results, Recast}
\label{sec:eval-retained}

The v1 protocol-level measurements are unchanged and remain
correct. What changes is what they are evidence \emph{for}.

\paragraph{Main table.}
Table~\ref{tab:mvr50} reports the per-method, per-mode point
estimates and Table~\ref{tab:mvr50-ci} the $95\%$ task-level
bootstrap CIs ($10\,000$ resamples, seed $42$).
Figure~\ref{fig:hcsc-tradeoff} plots the same CIs. H-CSC
commits $0.90 / 0.92$ (static / rushing) at
$\INVhmaj = 0.02 / 0.00$ and $\VCC = 0.88 / 0.92$, with
$\SEM = 0.74 / 0.72$. M3, the certificate-wrapped margin-gated
majority, commits $0.88 / 0.92$ at $\INVhmaj = 0.00 / 0.00$.
Reading Table~\ref{tab:mvr50} requires two caveats established
above: static-versus-rushing differences confound timing with
attack composition (\S\ref{sec:eval-attack-audit}), and the
$\byzcore$ column measures a quantity the attack suite could not
put under pressure.

\begin{table*}[t]
\centering
\caption{MVR-50 main results ($n=10$, $f=2$, $50$
Climate-FEVER tasks; parameters in \S\ref{sec:eval-setup}).
$\CR$: commit rate. $\SEM$: semantic-commit coverage over all
rounds (verdict-only methods give $\SEM=0$ by construction;
verdict-commit coverage $=\CR-\SEM$, abort $=1-\CR$).
$\INVgold$/$\INVhmaj$: the two validity-mismatch rates
(Definition~\ref{def:two-validities}; lower better).
$\byzcore$: mean Byzantine fraction in the selected core.
\emph{cert?}: emits a $2f{+}1$-quorum certificate.
\textbf{Two caveats from \S\ref{sec:eval-attack-audit}:} the
rushing columns use a different attack-type composition from
the static columns ($29/9/41/21$ vs.\ $25/25/25/25$), so
static$\leftrightarrow$rushing differences are not clean
timing effects; and the Byzantine agents in this suite always
target a non-plurality verdict, so $\byzcore$ measures an
adversary that is structurally ineligible for the semantic core.
$^\ddagger$The V1 row was run at $\thalpha = 0.70$, not
$0.65$; v1 of this paper did not state this.}
\label{tab:hcsc-main}
\label{tab:mvr50}
\small
\renewcommand{\arraystretch}{0.95}
\setlength{\tabcolsep}{3.5pt}
\resizebox{\ifdim\width>\linewidth\linewidth\else\width\fi}{!}{%
\begin{tabular}{l c c c c c c c c c c c}
\toprule
& \multicolumn{5}{c}{\textbf{static}} & \multicolumn{5}{c}{\textbf{rushing}} & \\
\cmidrule(lr){2-6}\cmidrule(lr){7-11}
method & \CR & \SEM & \INVgold & \INVhmaj & \byzcore & \CR & \SEM & \INVgold & \INVhmaj & \byzcore & cert? \\
\midrule
Majority vote                          & 0.90 & 0.00 & 0.22 & 0.00 & 0.054 & 0.94 & 0.00 & 0.24 & 0.00 & 0.000 & no \\
Confidence-weighted vote               & 1.00 & 0.00 & 0.28 & \textbf{0.12} & 0.083 & 1.00 & 0.00 & 0.28 & \textbf{0.12} & 0.023 & no \\
All-nodes GM                           & 1.00 & 0.00 & 0.28 & 0.04 & 0.200 & 1.00 & 0.00 & 0.26 & 0.06 & 0.200 & no \\
Angular-threshold GM                   & 1.00 & 0.00 & 0.28 & 0.08 & 0.035 & 1.00 & 0.00 & 0.28 & 0.08 & 0.006 & no \\
M1: abstaining majority                & 0.94 & 0.00 & 0.24 & 0.06 & 0.111 & 0.94 & 0.00 & 0.24 & 0.02 & 0.011 & yes \\
M2: margin majority ($\marginmin{=}1$) & 0.88 & 0.00 & 0.20 & 0.00 & 0.090 & 0.92 & 0.00 & 0.22 & 0.00 & 0.000 & yes \\
M3: certificate\,$+$\,majority         & 0.88 & 0.00 & 0.20 & 0.00 & 0.088 & 0.92 & 0.00 & 0.22 & 0.00 & 0.000 & \textbf{yes} \\
V1: verdict-conditioned semantic$^\ddagger$ & 0.82 & 0.82 & 0.22 & 0.02 & 0.034 & 0.80 & 0.80 & 0.22 & 0.00 & 0.000 & yes \\
H-CSC (this paper)                     & 0.90 & 0.74 & 0.22 & 0.02 & 0.069 & 0.92 & 0.72 & 0.22 & 0.00 & 0.000 & \textbf{yes} \\
CSC: strict-semantic                   & 0.54 & 0.54 & 0.16 & 0.02 & 0.021 & 0.48 & 0.48 & 0.12 & 0.02 & 0.006 & yes \\
\bottomrule
\end{tabular}}
\end{table*}

\begin{table}[t]
\centering
\caption{Bootstrap $95\%$ CIs on the headline MVR-50 metrics
($10\,000$ task-level resamples, seed 42).}
\label{tab:mvr50-ci}
\small
\begin{tabular}{l l c c}
\toprule
method / mode & metric & point & 95\% CI \\
\midrule
H-CSC static  & $\CR$ & 0.90 & $[0.80, 0.98]$ \\
H-CSC rushing & $\CR$ & 0.92 & $[0.84, 0.98]$ \\
M3 static     & $\CR$ & 0.88 & $[0.78, 0.96]$ \\
M3 rushing    & $\CR$ & 0.92 & $[0.84, 0.98]$ \\
M2 static     & $\CR$ & 0.88 & $[0.78, 0.96]$ \\
M2 rushing    & $\CR$ & 0.92 & $[0.84, 0.98]$ \\
M1 static     & $\CR$ & 0.94 & $[0.86, 1.00]$ \\
M1 rushing    & $\CR$ & 0.94 & $[0.86, 1.00]$ \\
CSC static    & $\CR$ & 0.54 & $[0.40, 0.68]$ \\
CSC rushing   & $\CR$ & 0.48 & $[0.34, 0.62]$ \\
\midrule
H-CSC static  & $\INVhmaj$ & 0.02 & $[0.00, 0.06]$ \\
H-CSC rushing & $\INVhmaj$ & 0.00 & $[0.00, 0.00]$ \\
M3 static     & $\INVhmaj$ & 0.00 & $[0.00, 0.00]$ \\
M3 rushing    & $\INVhmaj$ & 0.00 & $[0.00, 0.00]$ \\
M2 static     & $\INVhmaj$ & 0.00 & $[0.00, 0.00]$ \\
M2 rushing    & $\INVhmaj$ & 0.00 & $[0.00, 0.00]$ \\
M1 static     & $\INVhmaj$ & 0.06 & $[0.00, 0.14]$ \\
M1 rushing    & $\INVhmaj$ & 0.02 & $[0.00, 0.06]$ \\
CWV static    & $\INVhmaj$ & 0.12 & $[0.04, 0.22]$ \\
CWV rushing   & $\INVhmaj$ & 0.12 & $[0.04, 0.22]$ \\
\bottomrule
\end{tabular}
\end{table}

\begin{figure*}[t]
\centering
\includegraphics[width=0.92\textwidth]{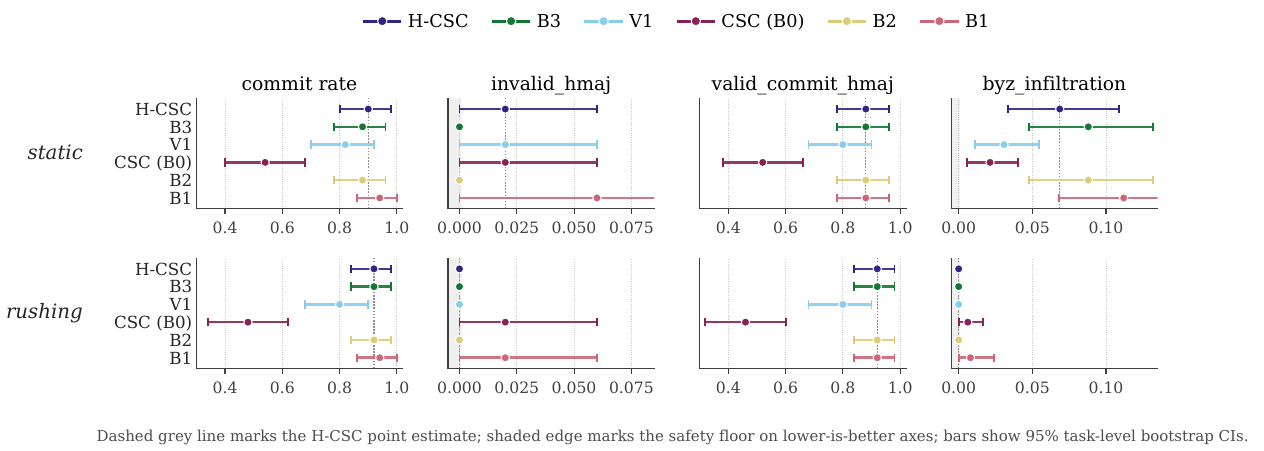}
\caption{MVR-50 strip plot for the six
\emph{certificate-emitting} methods (H-CSC, M3, V1, CSC, M2,
M1): one method$\times$mode per row, one metric per column.
Bars are $95\%$ task-level bootstrap CIs ($10\,000$ resamples,
seed $42$). Naive baselines (no certificate) omitted; see
Table~\ref{tab:mvr50}.}
\label{fig:hcsc-tradeoff}
\end{figure*}

\paragraph{The containment lemma, empirically.}
Lemma~\ref{lem:containment} predicts that at matched
deterministic guarantees the semantic condition is strictly
stronger than the verdict condition, so no coverage separation
is available. The data agree to the task.
\emph{(i)} At the shipped operating point, H-CSC and M3 produce
different decisions on $0/50$ tasks under rushing and $1/50$
under static --- the single static disagreement being the
tie-break capture round of \S\ref{sec:margin-gate}.
\emph{(ii)} At the deterministic operating point demanded by the
lemma ($|\core| > (n+f)/2$), the two methods' commit sets are
\emph{identical task by task}, with $0$ tasks on which one
commits and the other does not, and both achieve coverage
$0.72 / 0.64$.
\emph{(iii)} Under the strict-honest-majority rule of
Theorem~\ref{thm:det-semantic-validity}, semantic coverage
is $0.34 / 0.28$ at $\INVhmaj = 0.00$ --- the price of the
stronger guarantee, paid in coverage exactly as the lemma says
it must be.
\textbf{v1 reported the H-CSC/M3 match as an empirical
coincidence within overlapping CIs. It is a theorem.} The
correct conclusion is not that H-CSC is as good as M3 at the
same job; it is that the coverage axis was never the axis on
which a semantic layer could separate.

\paragraph{Coverage recovery over strict CSC --- the result that
survives.}
Against the strict-semantic configuration of the same framework
($\thalpha=0.55$, no fallback, $\CR = 0.54 / 0.48$ at
$\INVhmaj = 0.02 / 0.02$), the H-CSC main path recovers
coverage by $+0.36$ static / $+0.44$ rushing ($\VCC$ lift
$+0.353 / +0.450$) while holding the $\le 0.04$ safety floor.
The bootstrap CIs are \emph{disjoint}: $[0.80,0.98]$ and
$[0.84,0.98]$ against $[0.40,0.68]$ and $[0.34,0.62]$. Paired
absorption improves from $23/50$ to $44/50$ tasks. This is the
one headline effect in the paper that is large relative to its
uncertainty, and no part of it depends on the encoder being a
good detector: it is produced by (i) conditioning core
extraction on the verdict, so honest agents who agree on the
verdict but phrase rationales differently stay in one core, and
(ii) the verdict-level fallback. \textbf{This result now
carries a different burden than in v1: it justifies the
\emph{hierarchy} --- the typed layering of semantic finality
over verdict finality with a typed abort underneath --- and it
justifies nothing about the instrument used inside it.} It is
also, by \S\ref{sec:eval-r5}, the mechanism that destroys any
damage bound; the hierarchy buys coverage and pays for it in
filtering guarantees.

\paragraph{Cost of the margin-gate fix.}
\S\ref{sec:margin-gate} reports a tie-break capture
vulnerability we found in our own protocol: the semantic branch
never checks the verdict margin, so on an exact tie the
deterministic tie-break can hand $v^\star$ to a Byzantine-chosen
verdict and the semantic path then certifies it. The four-line
fix applies the margin condition to \emph{both} paths.
Table~\ref{tab:margin-gate} prices it. At $\marginmin = 1$ the
fix eliminates the invalid commit at a cost of $0.02$ static
coverage and none under rushing; we recommend it as the default.
$\marginmin = f{+}1 = 3$, the value
Theorem~\ref{thm:hcsc-verdict-validity} requires for
unconditional deterministic verdict validity, costs $0.18$ /
$0.28$ coverage --- which is the honest price of that theorem's
hypothesis and the reason the shipped configuration did not
meet it.

\begin{table}[t]
\centering
\caption{Cost of gating \emph{both} decision paths on the
verdict margin (\S\ref{sec:margin-gate}). $\marginmin=0$
is the shipped v1 behaviour on the semantic branch. MVR-50,
$n=10$, $f=2$.}
\label{tab:margin-gate}
\small
\setlength{\tabcolsep}{6pt}
\begin{tabular}{@{}c cc cc@{}}
\toprule
& \multicolumn{2}{c}{static} & \multicolumn{2}{c}{rushing} \\
\cmidrule(lr){2-3}\cmidrule(lr){4-5}
$\marginmin$ & $\CR$ & $\INVhmaj$ & $\CR$ & $\INVhmaj$ \\
\midrule
0 (as shipped) & 0.90 & 0.02 & 0.92 & 0.00 \\
\textbf{1 (recommended)} & \textbf{0.88} & \textbf{0.00} & \textbf{0.92} & \textbf{0.00} \\
2 & 0.86 & 0.00 & 0.84 & 0.00 \\
3 ($=f{+}1$) & 0.72 & 0.00 & 0.64 & 0.00 \\
\bottomrule
\end{tabular}
\end{table}

\paragraph{Core-extractor fidelity.}
The implementation extracts the largest angular \emph{connected
component} where Definition~\ref{def:admissible-round}
specifies a \emph{ball}. Auditing all $73$ extracted cores:
$5/73$ are not balls, and the largest observed core diameter is
$0.964$~rad, below $2\thalpha = 1.30$, so
Theorem~\ref{thm:hcsc-semantic-validity}'s bound is not
violated on any measured round. Substituting a strict ball
criterion changes the headline numbers negligibly. The
discrepancy is nonetheless real, it is the reason \textsc{B7}
enters the core more often than it enters the ball
(\S\ref{sec:eval-r1}), and the definition --- not the
implementation --- is the object we would keep.

\paragraph{Cross-model robustness.}
Re-running the entire generation pipeline at MVR-100 under four
agent LLMs spanning two vendors and two size regimes
(\texttt{gpt-4o-mini}, \texttt{gpt-4o},
\texttt{claude-sonnet-4-5}, \texttt{llama-3.3-70b-instruct}),
with tasks, prompts, encoder and protocol parameters held
fixed, keeps $\CR \in [0.92, 0.97]$, $\INVhmaj \in [0.00,
0.03]$ (worst-case CI upper bound $0.07$), and semantic-commit
share $\ge 0.68$ on every (model, mode) cell
(Table~\ref{tab:cross-model}). The typed-decision logic is a
deterministic function of the finality-signal vector $\Psi(r)$
(Definition~\ref{def:finality-signal}) and not of the agent
LLM's generation distribution, and the measurement is
consistent with that. This is a \emph{between-run} check: each
run uses a homogeneous agent population. Within-round vendor
heterogeneity remains future work. Note also what this check
does \emph{not} establish: it varies the agent LLM and holds
the encoder fixed, so it says nothing about the encoder
robustness question that \S\ref{sec:eval-r2} answers directly.

\begin{table}[t]
\centering
\caption{H-CSC headline numbers under four agent LLMs on the
same MVR-100 task base ($95\%$ task-level bootstrap CIs;
$n_{\textit{tasks}}=100$, $10\,000$ resamples, seed 42).
Protocol parameters, tasks and encoder are identical across
rows; only the agent LLM changes.}
\label{tab:cross-model}
\scriptsize
\setlength{\tabcolsep}{3pt}
\begin{tabular}{@{}l l l l c@{}}
\toprule
agent LLM & mode & $\CR$ [95\% CI] & $\INVhmaj$ [95\% CI] & $\SEM$ share \\
\midrule
\multirow{2}{*}{\texttt{gpt-4o-mini}}
  & static  & $0.93\,[0.88, 0.98]$ & $0.00\,[0.00, 0.00]$ & 0.82 \\
  & rushing & $0.93\,[0.87, 0.97]$ & $0.01\,[0.00, 0.03]$ & 0.79 \\
\addlinespace
\multirow{2}{*}{\texttt{gpt-4o}}
  & static  & $0.96\,[0.92, 0.99]$ & $0.00\,[0.00, 0.00]$ & 0.81 \\
  & rushing & $0.95\,[0.90, 0.99]$ & $0.01\,[0.00, 0.03]$ & 0.80 \\
\addlinespace
\multirow{2}{*}{\texttt{claude-sonnet-4-5}}
  & static  & $0.92\,[0.86, 0.97]$ & $0.01\,[0.00, 0.03]$ & 0.85 \\
  & rushing & $0.95\,[0.90, 0.99]$ & $0.01\,[0.00, 0.03]$ & 0.88 \\
\addlinespace
\multirow{2}{*}{\texttt{llama-3.3-70b}}
  & static  & $0.97\,[0.93, 1.00]$ & $0.03\,[0.00, 0.07]$ & 0.71 \\
  & rushing & $0.93\,[0.88, 0.98]$ & $0.01\,[0.00, 0.03]$ & 0.68 \\
\bottomrule
\end{tabular}
\smallskip\par
\noindent\footnotesize First-pass \texttt{schema\_valid}
(honest proposals, $n_{\text{honest}}=800$ per model):
\texttt{gpt-4o-mini} $99.8\%$, \texttt{gpt-4o} $100\%$,
\texttt{claude-sonnet-4-5} $100\%$, \texttt{llama-3.3-70b}
$98.4\%$ ($13$ of $800$ honest proposals failed JSON schema on
first attempt; those rounds still met the $2f{+}1$ quorum on the
remaining honest proposals).
\end{table}

\paragraph{Threshold sensitivity, and why the plateau is not
reassuring.}
Sweeping
$\thalpha\in\{0.25,\dots,1.00\}$ rad on the strict-semantic
ablation over the 20-task calibration pilot (8 grid points
$\times$ 2 modes $\times$ 2 methods $\times$ 20 tasks $=$ 640
trials) yields three regimes (Table~\ref{tab:theta}): unusable
below $0.45$ ($\CR \le 0.15$); a single stable grid point at
$\thalpha = 0.55$ meeting $\CR \ge 0.65$, $\INVhmaj = 0$ and
$\byzcore \le 0.04$ under both modes; and a loose regime above
$0.65$ where coverage climbs toward $1.00$ while $\byzcore$
grows monotonically to $0.123 / 0.142$ at $\thalpha=1.00$. The
plateau is $\approx 6^\circ$ wide and benchmark-specific; we do
not claim a universal $\thalpha$. In light of
\S\ref{sec:eval-r5} we add a stronger caveat: this sweep is
calibrated against Family~I. Under Family~II the same sweep is
flat in damage (Table~\ref{tab:frontier}), so a well-chosen
$\thalpha$ is not evidence that the threshold is doing safety
work.

\begin{table}[t]
\centering
\caption{Threshold sensitivity on the 20-task calibration pilot
(disjoint from MVR-50/100), \texttt{strict-semantic CSC} only.
Displayed grid $\thalpha\in\{0.45,0.55,0.65,0.85,1.00\}$; full
8-point grid in
Appendix~\ref{appendix:additional-experiments}.}
\label{tab:theta}
\footnotesize
\setlength{\tabcolsep}{4pt}
\begin{tabular}{c c c c c c l}
\toprule
$\thalpha$ / deg & mode & $\CR$ & $\INVgold$ & $\INVhmaj$ & $\byzcore$ & regime \\
\midrule
0.45 / 25.78 & static  & 0.15 & 0.10 & 0.00 & 0.00 & strict \\
0.45 / 25.78 & rushing & 0.15 & 0.10 & 0.00 & 0.00 & strict \\
\textbf{0.55 / 31.51} & static  & \textbf{0.65} & 0.20 & \textbf{0.00} & 0.04 & \textbf{stable} \\
\textbf{0.55 / 31.51} & rushing & \textbf{0.65} & 0.20 & \textbf{0.00} & 0.02 & \textbf{stable} \\
0.65 / 37.24 & static  & 0.95 & 0.35 & 0.00 & 0.07 & loose \\
0.65 / 37.24 & rushing & 1.00 & 0.40 & 0.05 & 0.04 & loose \\
0.85 / 48.70 & static  & 1.00 & 0.40 & 0.05 & 0.09 & loose \\
0.85 / 48.70 & rushing & 1.00 & 0.40 & 0.05 & 0.08 & loose \\
1.00 / 57.30 & static  & 1.00 & 0.40 & 0.05 & 0.12 & loose \\
1.00 / 57.30 & rushing & 1.00 & 0.40 & 0.05 & 0.14 & loose \\
\bottomrule
\end{tabular}
\end{table}

\paragraph{Topology gate.}
The Jaccard support-overlap gate produces decisions identical
to the main path on BCS-v1, aborts more without reducing
$\INVhmaj$ on MVR-50 static, and aborts more while shaving
$0.02$ off $\INVhmaj$ on MVR-50 rushing only; outside the chosen
$\thalpha$ it aborts strictly more. It is a design-space
transparency artefact, not part of the contribution, and is not
required for any guarantee of \S\ref{sec:correctness}. Full
ablation in Appendix~\ref{appendix:topology}. We note in
passing that the evidence-id overlap signal this gate is built
on is the same signal that scores at chance in
Table~\ref{tab:instrument}.


\subsection{Safe abort and termination}
\label{sec:abort-termination}

The two remaining v1 theorems are properties of the control flow
rather than of a published commit, and are unchanged.

\begin{theorem}[Safe abort under no-admissibility or insufficient
certificate]
\label{thm:hcsc-safe-abort}
\label{thm:safe-abort} 
If the round is neither semantic-admissible nor
verdict-admissible, or if fewer than $2f{+}1$ distinct
signatures arrive on the candidate digest, the protocol
publishes $\mathcal{O}^{\textsc{abort}}$ with a typed reason
from the closed set of
Definition~\ref{def:csc-object}. In particular, the protocol
\emph{never} publishes a \textsc{semantic\_commit} when the
within-verdict semantic core is not admissible, and
\emph{never} publishes a \textsc{verdict\_commit} when
$|G_{v^\star}| < 2f{+}1$ or
$\textit{verdict\_margin} < \textit{verdict\_margin}^{\min}$.
\end{theorem}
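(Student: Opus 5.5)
The argument is direct inspection of the control flow of Algorithm~\ref{alg:certified-commitment-round}. The invariant to establish is that every \texttt{return} either emits a typed commit object after all guards on its path have passed, or emits an \Abort{} whose reason lies in the closed set of Definition~\ref{def:csc-object}. I will enumerate the return points in program order: Stage~2 (\texttt{verdict\_below\_quorum}), Stage~4a (\texttt{aggregation\_failed}), Stage~4b's else-branch (\texttt{v2\_both\_paths\_failed:}$\ldots$), and Stage~5 (\texttt{insufficient\_signers}). With the margin-gate fix of \S\ref{sec:margin-gate} there is also the extra reason \texttt{margin\_below\_threshold}, which only feeds the composite reason string. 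I will check that these exhaust the non-commit exits and that each carries a reason from the closed set.

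Next I handle the three ``never'' clauses separately. First, a \textsc{semantic\_commit} is constructed only inside the \texttt{if} $\textit{semantic\_path\_ok}$ block. That flag is set true only when $|\core|\ge 2f{+}1$ and the witness radius satisfies $r^\star\le\thalpha$, and under the fix also only when the margin gate passes. This is exactly semantic admissibility in the sense of Definition~\ref{def:admissible-round}. Second, a \textsc{verdict\_commit} is reached only after two conditions hold. The Stage~2 guard must not have fired, which gives $|G_{v^\star}|\ge 2f{+}1$. The Stage~4b test $\textit{verdict\_margin}\ge\textit{verdict\_margin}^{\min}$ must also succeed, where the shipped code hardcodes $1$ and I read this as the parameter in $\pi$. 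Third, both commit types pass through Stage~5. There the certificate counts distinct identities only, by Lemma~\ref{lem:cert-uniqueness}, so fewer than $2f{+}1$ distinct valid signatures forces the \texttt{insufficient\_signers} abort before any object is returned.

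Combining these, if the round is neither semantic- nor verdict-admissible, both commit branches are closed. The algorithm therefore exits through either Stage~2 or the both-paths-failed abort. If the round is admissible but the signature count is short, it exits at Stage~5. In every case the exit is an \Abort{} from the closed set, and no commit object is constructed before it.

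The main obstacle is the mismatch between specification and implementation recorded in \S\ref{sec:ball-vs-component}. The implemented core is a connected component, and the algorithm as printed re-tests the ball predicate through $r^\star$. I will state the theorem relative to the printed algorithm, which does re-test the ball predicate. I will then note that if the code omits the $r^\star$ test, the ``never publishes a \textsc{semantic\_commit} on a non-admissible core'' clause holds only for the connectivity predicate, by Proposition~\ref{prop:ball-component}. A smaller point is that ``verdict-admissible'' must be read as including the margin condition, so that the hypotheses line up with the Stage~4b guard.
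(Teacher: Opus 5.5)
Your proposal is correct and is essentially the paper's own proof. Both inspect the control flow of Algorithm~\ref{alg:certified-commitment-round}, trace each abort exit (Stage~2, the failed semantic path falling through to the Stage~3b/4b margin test, \texttt{aggregation\_failed}, and Stage~5 together with Lemma~\ref{lem:cert-uniqueness}), and check that each commit type is reachable only after its guards pass. Your caveat that the ``never commit on a non-admissible core'' clause depends on the printed $r^\star$ re-test, which the shipped connected-component extractor omits, is a fair refinement that the paper's proof leaves implicit. One small correction: you only need the implication from $\textit{semantic\_path\_ok}$ to admissibility, not the ``exactly'' equivalence you assert, which fails in the reverse direction and fails outright once the margin gate is added.
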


\begin{proof}[Proof sketch]
By inspection of the control flow of
Algorithm~\ref{alg:hcsc-round}: each abort branch returns
$\mathcal{O}^{\textsc{abort}}$ with the corresponding typed
reason before any commit object is constructed, and the set of
exit points is closed. Detailed proof in
Appendix~\ref{app:proof-safe-abort}. Note that with the
Stage-3 gate of \S\ref{sec:margin-gate} in place, the final
clause covers both branches; without it, it covers only the
verdict branch --- which is precisely the defect
\S\ref{sec:margin-gate} reports.
\end{proof}

\begin{theorem}[Conditional commit-or-abort termination under
synchrony]
\label{thm:hcsc-termination}
\label{thm:termination} 
Under Assumptions~\ref{assn:auth-channels},
\ref{assn:partial-synchrony}, and
\ref{assn:deterministic-cE}, after the global stabilisation
time the protocol terminates with one of the three typed
outcomes within one round of communication. Concretely:
\begin{itemize}
\item if the round is semantic-admissible
(Assumption~\ref{assn:admissible-round}, semantic variant) and
at least $2f{+}1$ honest agents are present in $\core$, the
protocol terminates with $\mathcal{O}^{\textsc{sem}}$;
\item else if the round is verdict-admissible (verdict
variant) and at least $2f{+}1$ honest agents are present in
$G_{v^\star}$, the protocol terminates with
$\mathcal{O}^{\textsc{ver}}$;
\item else the protocol terminates with
$\mathcal{O}^{\textsc{abort}}$
(\texttt{insufficient\_signers} or one of the path-failure
typed reasons).
\end{itemize}
The protocol does \emph{not} rely on Byzantine cooperation for
liveness: \emph{insufficient\_signers} is a valid termination
outcome. Across all $6\,000$ per-task records of our protocol-layer
sweep, \texttt{insufficient\_signers} fires $0$ times
(\S\ref{sec:eval-setup}), because the experimental Byzantine agents
do sign.
\end{theorem}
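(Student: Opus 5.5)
The proof has two parts, matching the two halves of the statement: unconditional termination, then the three-way case split on which outcome is reached.

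\emph{Termination.} Stages~1--4 of Algorithm~\ref{alg:certified-commitment-round} are local, deterministic computations on the delivered view $V(r)$. Encoding, grouping, component extraction, the geometric-median iteration with its fixed iteration count, quantisation and hashing all halt, so none of them blocks. The only step involving communication is Stage~5. Under Assumption~\ref{assn:partial-synchrony}, every message an honest agent sends after GST is delivered within $\Delta$. So after waiting $\Delta$ the collector either holds $2f{+}1$ distinct valid signatures or returns $\Abort(\texttt{insufficient\_signers})$. Every exit of the algorithm is therefore reached within one communication round after GST, and every exit point returns one of the three typed objects. This part needs no assumption about Byzantine behaviour.

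\emph{The case split.} I follow the order of the control flow. In the first case the round is semantic-admissible and $\core$ contains at least $2f{+}1$ honest agents. Stages~2--3 then yield \emph{semantic\_path\_ok}. The aggregate is non-degenerate: the points of $\core$ lie in a $\thalpha$-ball with $\thalpha<\pi/2$, so the geometric median of $\core$ has positive inner product with the centre, and I will argue this via Lemma~\ref{lem:conv_diam_ball} and Lemma~\ref{lem:sphere_identity}. By Assumption~\ref{assn:deterministic-cE} and Lemma~\ref{lem:digest-consistency}, every honest member of $\core$ computes the same $\hstar_{\textsc{sem}}$ and signs it under the sign-once rule. Those signatures arrive within $\Delta$ by Assumption~\ref{assn:auth-channels} and Assumption~\ref{assn:partial-synchrony}, so $|\Sigma|\ge 2f{+}1$ without any Byzantine signature, and $\mathcal{O}^{\textsc{sem}}$ is returned.

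In the second case the semantic path fails but the round is verdict-admissible, meaning $|G_{v^\star}|\ge 2f{+}1$ and the margin condition holds, and $G_{v^\star}$ contains $2f{+}1$ honest agents. The same argument applies with $\Sigma_{\textit{src}}=G_{v^\star}$ and $\hstar_{\textsc{ver}}$, and $\mathcal{O}^{\textsc{ver}}$ is returned. In every remaining case the control flow reaches one of the typed abort returns. Theorem~\ref{thm:hcsc-safe-abort} already enumerates these returns exhaustively, so I invoke it rather than re-tracing them.

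\emph{Main obstacle.} The delicate step is showing that the honest signers in $\Sigma_{\textit{src}}$ actually sign the digest the collector is waiting for. This needs all honest nodes to compute the same $D^\star(r)$ from the same $V(r)$, which is the successful-round condition together with the deterministic-decision argument in the proof of Theorem~\ref{thm:hcsc-agreement}. I will reuse that argument directly. I will also state explicitly that the theorem depends on Assumption~\ref{assn:deterministic-cE}: if the encoder is not bit-exact, honest signers could sign different digests, and the liveness clause would degrade to the probabilistic bound of Proposition~\ref{prop:quantiser}. A secondary point to check is the aggregation-degeneracy abort, which I rule out only in the ball case. Under the connected-component implementation, the bound of Proposition~\ref{prop:ball-component} can exceed $\pi/2$. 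I will therefore state the first bullet for the ball predicate and flag that restriction.
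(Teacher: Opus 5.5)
Your route is the paper's: Stages~1--4 are local and halt, Stage~5 is bounded by $\Delta$ after GST, and on whichever branch is taken the $\ge 2f{+}1$ honest members of $\Sigma_{\textit{src}}$ supply the quorum without Byzantine help. Your two additions are steps the paper's proof skips: ruling out \texttt{aggregation\_failed}, and making explicit that every honest signer computes the same $D^\star(r)$.

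The step that fails is your last one: ``in every remaining case the control flow reaches one of the typed abort returns.'' Falling outside your two cases only means honest signatures \emph{alone} cannot guarantee a quorum. Byzantine members of $\Sigma_{\textit{src}}$ may still sign and complete the certificate. Example~\ref{ex:tiebreak-capture} is exactly such a round. There $\core$ has $3$ honest and $2$ Byzantine members and the margin is $0$, so neither of your cases applies, yet a \textsc{semantic\_commit} is published. Theorem~\ref{thm:hcsc-safe-abort} does not close this gap, because its hypotheses are not implied by the negation of your cases: neither admissibility predicate failing, nor fewer than $2f{+}1$ signatures arriving, holds in that round. The only provable conclusion in the residual case is ``some typed outcome within one round'', which your termination paragraph already gives. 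The third bullet, and the paper's own ``a typed Abort otherwise'', must be weakened to that. Flag this rather than attempt the bullet as written.

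Relatedly, by adding ``the semantic path fails'' to the hypothesis of your second case you have quietly corrected the statement. Read literally, the ``else if'' bullet promises $\mathcal{O}^{\textsc{ver}}$ in the following situation: the semantic branch is selected with fewer than $2f{+}1$ honest agents in $\core$, while the round is verdict-admissible with $2f{+}1$ honest agents in $G_{v^\star}$. That promise is false. If the Byzantine members of $\core$ withhold, Stage~5 returns \texttt{insufficient\_signers}, because the semantic branch never falls back to Stage~4b; if they sign, the outcome is $\mathcal{O}^{\textsc{sem}}$. Say this explicitly.

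Two smaller points. First, deriving \textit{semantic\_path\_ok} from semantic admissibility requires reading admissibility as ``the core selected by Stage~3 passes its size and radius tests''. The mere existence of a ball of $2f{+}1$ points inside $G_{v^\star}$ does not imply that the largest $\thalpha$-component satisfies $r^\star\le\thalpha$, since chaining can enlarge that component. Second, argue non-degeneracy directly: $\ystar\in\mathrm{conv}\{\mathbf{e}_j : j\in\core\}$ and $\langle\mathbf{e}_j,\mathbf{e}_h\rangle\ge\cos\thalpha>0$ together give $\langle\ystar,\mathbf{e}_h\rangle\ge\cos\thalpha$. Lemma~\ref{lem:conv_diam_ball} as stated bounds by the diameter, so it yields only $\|\ystar-\mathbf{e}_h\|_2\le 2\sin\thalpha\approx 1.21$ at $\thalpha=0.65$, which does not exclude $\ystar=0$.
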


\begin{proof}[Proof sketch]
Stages 1--4 are local computations and do not block. Stage 5
collects valid signatures over $\hstar$. Honest signers in
$\Sigma_{\textit{src}}$ broadcast over authenticated channels;
after global stabilisation~\cite{dwork1988partial} each broadcast
is delivered within the bounded delay. If the admissibility
predicate holds and $\Sigma_{\textit{src}}$ contains at least
$2f{+}1$ honest agents, the protocol collects $\ge 2f{+}1$
signatures within one round and publishes the corresponding typed
commit. Otherwise it exits via \emph{insufficient\_signers} or a
path-failure abort branch. Detailed proof in
Appendix~\ref{app:proof-termination}.
\end{proof}

\begin{remark}
\label{rem:hcsc-termination-correction}
Theorem~\ref{thm:hcsc-termination} corrects a potential
overclaim in the previous CSC formulation, where commit
termination was conditional only on a coherent honest quorum.
Under H-CSC, commit termination requires $\ge 2f{+}1$ honest
signers in the relevant signer set; if the signer set has
fewer honest agents, the protocol may \emph{terminate via}
\texttt{insufficient\_signers} \emph{rather than commit}. This is
the correct safety / liveness trade-off: commit liveness is
conditional on a sufficient honest signing quorum, not on
Byzantine cooperation.
\end{remark}


\subsection{Topology as a design-space option}
\label{sec:topology-as-option}

For completeness, the implementation accepts an optional
\emph{topology gate}
$\textit{topology\_gate} \in \{\bot, \textsc{jaccard}, \textsc{jaccard\_kcore}\}$.
The H-CSC main path uses $\textit{topology\_gate} = \bot$. We
report the Jaccard variant in
Appendix~\ref{appendix:topology} as a design-space ablation
only; on every measured benchmark and at every $\thalpha$ in
our sensitivity sweep, the topology gate produces commit/abort
decisions identical to the H-CSC main path or aborts more
without improving honest-reference validity. Topology is
\emph{not} part of the H-CSC contribution.

\FloatBarrier  

\subsection{Network and Communication Primitives}
\label{subsec:net}

\paragraph{Network Model.}
\sloppy
We assume the standard partial synchrony
model~\cite{dwork1988partial}.
There exist an unknown Global Stabilization Time (GST) and an
unknown delay bound $\Delta$ such that after GST, all messages
sent between honest nodes are delivered within time $\Delta$.
Nodes do not know GST nor $\Delta$ a priori and may use
increasing timeouts. Nodes communicate via authenticated
point-to-point channels and have access to a Public Key
Infrastructure (PKI).

\paragraph{Reliable Broadcast (RBC)}
To prevent equivocation and to ensure a consistent per-sender
view of inputs, we assume a reliable broadcast primitive
(e.g., Bracha's RBC~\cite{bracha1987asynchronous}).
For each sender $s$ and message $m$, denote by $\textsf{RBC\_broadcast}_s(m)$ the broadcast operation and by $\textsf{RBC\_deliver}_s(m)$ the delivery event.
RBC satisfies:
\begin{itemize}\sloppy
    \item \textbf{Validity.} If $s$ is honest and
    $\textsf{RBC\_broadcast}_s(m)$, then every honest node
    eventually $\textsf{RBC\_deliver}_s(m)$.
    \item \textbf{Agreement (Consistency).} If two honest nodes
    $\textsf{RBC\_deliver}_s(m)$ and $\textsf{RBC\_deliver}_s(m')$,
    then $m=m'$.
    \item \textbf{Totality (Uniform delivery).} If some honest
    node $\textsf{RBC\_deliver}_s(m)$, then every honest node
    eventually $\textsf{RBC\_deliver}_s(m)$.
    \item \textbf{Integrity.} Every honest node delivers at most
    one message from sender $s$, and any delivered message is
    authenticated as originating from $s$.
\end{itemize}

\noindent For each node $i$ and round $r$, let $V_i(r)$ denote
its \emph{delivered view} --- the set of messages
$\textsc{RBC\_deliver}$ed by $i$ from distinct senders in
round $r$. By RBC Agreement and Totality, in any round where
all honest nodes complete RBC, they obtain an identical
delivered view: $\forall i,j \in \mathcal{H}$,
$V_i(r) = V_j(r) \triangleq V(r)$.

To avoid blocking on Byzantine senders, a node proceeds to the
H-CSC local computation stages (Stages~1--5 of
Algorithm~\ref{alg:certified-commitment-round}) once it has
collected at least $n-f$ delivered inputs
($|V_i(r)| \geq n-f$), or aborts the round when its local
timeout $T_r$ expires.
Our correctness analysis (\S\ref{sec:correctness}) conditions
on \emph{successful rounds}: rounds in which all honest nodes
complete RBC with $|V(r)| \geq n-f$ and enter the local
computation stages with the identical delivered view $V(r)$.
Successful rounds are guaranteed to occur after GST when
timeouts are sufficiently large (liveness analysis in
\S\ref{subsec:liveness}).

\subsection{Abstract Semantic Encoding}
\label{subsec:encoding}

We abstract semantic representation via a pluggable \emph{deterministic shared encoder}. Let $\mathcal{X}$ be the space of admissible texts and $\mathbb{S}^{d-1}$ the unit hypersphere in $\mathbb{R}^d$. We assume a deterministic function $E: \mathcal{X} \to \mathbb{S}^{d-1}$ that is identically instantiated at all nodes (i.e., for any text $x \in \mathcal{X}$, all nodes compute the same embedding $E(x)$). Let $\mathrm{CanonText}(\rho_i)\in\mathcal{X}$ denote the deterministic canonical proposal text extracted from a structured proposal $\rho_i$. Node $i$'s round-$r$ embedding is
\[
e_i := E(\mathrm{CanonText}(\rho_i)) \in \mathbb{S}^{d-1}.
\]

The determinism and shared instantiation of $E$ are essential for exact agreement: without them, honest nodes could compute divergent embeddings for identical semantic content, violating the byte-level equality required for commitment (Section~\ref{subsec:agreement}). In practice, $E$ can be instantiated by any fixed embedding model whose parameters are distributed out-of-band (e.g., via a trusted setup or content-addressable distribution). The protocol treats $E$ as a black-box primitive (see Fig.~\ref{fig:crse_arch} for our empirical instantiation); the per-round admissibility predicate of Definition~\ref{def:admissible-round} is what binds embedding quality to commit feasibility, not a global property of $E$. 

\paragraph{Metric.}
We use the angular distance on $\mathbb{S}^{d-1}$,
\[
d_{\angle}(\mathbf{u},\mathbf{v})
:= \arccos(\langle \mathbf{u},\mathbf{v}\rangle)\in[0,\pi],
\]
and note the sphere identity
$\|\mathbf{u}-\mathbf{v}\|_2 =
2\sin\!\big(d_{\angle}(\mathbf{u},\mathbf{v})/2\big)$
for converting thresholds when needed.


\noindent
The protocol treats $\mathcal{E}$ as a black-box primitive. Whether
the embeddings of an honest sub-quorum cluster sufficiently to
admit a commit is a \emph{per-round} question, captured by the
admissibility predicate of
Definition~\ref{def:admissible-round} below: a round is admissible
iff a $2f{+}1$ subset of delivered embeddings sits inside an
angular ball of radius $\thalpha$. The fraction of rounds
satisfying the predicate is benchmark- and threshold-dependent and
is measured empirically in
Section~\ref{sec:evaluation} rather than assumed. When the
predicate fails, the protocol's intended behaviour is an explicit
safe abort.

\subsection{Relationship to the conservative CSC variant}
\label{sec:hcsc-strict-csc}

The previous CSC formulation (\textit{candidate\_rule} disabled,
$\textit{fallback} = \bot$) is a strict-semantic configuration
of the same framework: the angular-component core is computed
over the entire delivered view $V(r)$ rather than within
$G_{v^\star}$, and the protocol aborts whenever the global
core fails. We retain the CSC configuration as a conservative
ablation; the empirical comparison is reported in
Section~\ref{sec:eval-setup}. The H-CSC main path's two
deviations from CSC are: (i) the within-verdict scoping of the
semantic core, and (ii) the verdict-level fallback. Both
deviations are deterministic, do not change the certificate
envelope, and respect the same closed-set abort discipline.

\subsection{Result 5: The Hierarchical Fallback Erases Any
Damage Bound}
\label{sec:eval-r5}

The one positive signal we found is real, and the architecture
prevents it from being cashed.

\paragraph{The encoder tracks harm, not identity.}
Define damage $D = 10 - \textit{grounded}$ from the
Result-4 judge. Then
$\mathrm{corr}(D,\ \text{angular distance}) = +0.429$ with
$95\%$ CI $[+0.130, +0.653]$ --- significant. The angular
signal is picking up \emph{how harmful} a proposal is, which is
arguably what one would want it to track. By contrast
$\mathrm{corr}(D,\ \text{admitted}) = -0.331$, CI
$[-0.611, +0.078]$, not significant: the correlation with
harm does not survive the trip through the protocol's decision
logic.

\paragraph{Why it does not survive: the frontier is flat.}
Sweeping $\thalpha$ from $0.85$ down to $0.45$ ---
tightening the admissibility ball by nearly a factor of two ---
leaves the maximum damage among admitted attacks at $8.0$ at
both ends, while the attack admission rate \emph{rises} from
$85\%$ to $100\%$ (Table~\ref{tab:frontier}). The mechanism is
architectural and entirely visible in
Algorithm~\ref{alg:hcsc-round}:
\begin{quote}
tightening $\thalpha$ $\Rightarrow$ the semantic path fails
more often $\Rightarrow$ Stage~4b, the verdict-commit fallback,
fires $\Rightarrow$ \textbf{the fallback certifies the entire
verdict group without any semantic filtering at all}.
\end{quote}
At $\thalpha = 0.45$ the semantic-commit rate is exactly
$0.00$ and \textbf{H-CSC degenerates precisely into M3}. The
filter is not made stricter by tightening its threshold; it is
routed around. \emph{The coverage-recovery mechanism that is
H-CSC's genuine contribution (\S\ref{sec:eval-retained}) is the
same mechanism that annihilates any filtering guarantee.} This
is the mechanistic explanation of the containment lemma's
empirical shadow: H-CSC $\equiv$ M3 not by coincidence but
because the hierarchy is designed to fall back to it.

\paragraph{Splitting by commit type does not rescue it.}
One might hope the semantic branch bounds damage on the rounds
where it does fire, even if the fallback does not. It does not:
attacks admitted through \textsc{semantic\_commit} rounds reach
maximum damage $7$--$8$ (mean $3.79$--$4.18$) against
maximum $8.0$ (mean $4.36$--$4.50$) through
\textsc{verdict\_commit} rounds. Neither branch bounds damage
(Table~\ref{tab:frontier}, lower panel).

\begin{table}[t]
\centering
\caption{Result 5: the damage--coverage frontier is flat, and
the per-branch split offers no bound. Damage $D = 10 -
\textit{grounded}$ from the role-blind judge of
\S\ref{sec:eval-r4}. \textbf{Preliminary; $40$ attacks over
$N=10$ tasks.}}
\label{tab:frontier}
\footnotesize
\setlength{\tabcolsep}{4pt}
\begin{tabular}{@{}l c c c@{}}
\toprule
\multicolumn{4}{@{}l}{\emph{(a) sweeping the admissibility radius}} \\
$\thalpha$ & semantic-commit rate & attacks admitted & max $D$ admitted \\
\midrule
0.85 & 1.00 & 85\%  & \textbf{8.0} \\
0.45 & \textbf{0.00} & \textbf{100\%} & \textbf{8.0} \\
\midrule
\multicolumn{4}{@{}l}{\emph{(b) splitting the operating point by commit type}} \\
branch & & max $D$ admitted & mean $D$ admitted \\
\midrule
\textsc{semantic\_commit} & & 7--8 & 3.79--4.18 \\
\textsc{verdict\_commit}  & & 8.0  & 4.36--4.50 \\
\bottomrule
\end{tabular}
\end{table}


\subsection{Definitions}
\label{subsec:definitions}
\label{subsec:problem} 

\begin{definition}[Semantic proposal]
\label{def:semantic-proposal}
A \emph{semantic proposal} is a pair $z_i = (s_i, \rho_i)$ where
$s_i$ is a natural-language string emitted by agent $i$ and
$\rho_i$ is a structured object extracted from $s_i$ (verdict,
confidence, evidence-id list, rationale, claim text). A
semantic proposal is the unit of agreement; agents do not
agree on $s_i$ directly but on a deterministic function of
$\rho_i$.
\end{definition}

\begin{definition}[Semantic view]
\label{def:semantic-view}
The \emph{semantic view} delivered to the protocol in round $r$
is $V(r)=\{(i,\rho_i,\mathbf{e}_i):i\in I_r\}$, where
$I_r \subseteq \Nset$ is the delivered index set with
$|I_r| \ge n - f$ (the \emph{successful round} condition
established in \S\ref{subsec:net}), and $\mathbf{e}_i =
\cE(\mathrm{CanonText}(\rho_i))\in \mathbb{S}^{d-1}$ is the
unit-norm semantic embedding of $\rho_i$ under the
deterministic encoder $\cE$ and the
\textsc{CanonText} map of \S\ref{subsec:encoding}.
All downstream structures --- verdict groups $G_v(r)$,
semantic core $\core$, signer set $\Sigma_{\textit{src}}$ ---
are subsets of $V(r)$, and quorum thresholds always reference
the global fault bound $2f{+}1$ rather than $|I_r|$. The
protocol does not have access to the labels of which indices
are Byzantine.
\end{definition}

\begin{definition}[Certified semantic commitment object (typed)]
\label{def:csc-object}
A \emph{certified semantic commitment object} for round $r$ is
a typed value
\[
\mathcal{O} \in \{\mathcal{O}^{\textsc{sem}},
\mathcal{O}^{\textsc{ver}},
\mathcal{O}^{\textsc{abort}}\}
\]
where:

\begin{itemize}\sloppy
\item $\mathcal{O}^{\textsc{sem}}$ is a \textsc{semantic\_commit}
object with $\mathbf{commit\_type} = \textsc{semantic\_commit}$,
fields
$(\hstar, \pi, \cert{\hstar}, \core, \widetilde{\mathbf{y}}, v^\star)$,
where $\core\subseteq G_{v^\star}$
is a $2f{+}1$-sized within-verdict admissible core,
$\widetilde{\mathbf{y}} = Q_\eta(\agg(\core))$ is the
quantised aggregate over the core, $\hstar$ is the
parameter-bound semantic-commit digest with explicit
type-domain separator,
$\hstar = H(\texttt{"semantic\_commit"}\,\Vert\,
\widetilde{\mathbf{y}}\,\Vert\, h_\pi\,\Vert\, r\,\Vert\,
v^\star)$, and $\cert{\hstar}$ is a $2f{+}1$ distinct-signer
certificate over $\hstar$ drawn from $\core$.
\item $\mathcal{O}^{\textsc{ver}}$ is a \textsc{verdict\_commit}
object with $\mathbf{commit\_type} = \textsc{verdict\_commit}$,
fields $(\hstar, \pi, \cert{\hstar}, G_{v^\star}, v^\star)$,
where $G_{v^\star}\subseteq V(r)$ is the verdict group of size
$\ge 2f{+}1$. The digest binds a verdict-level payload to the
parameters and round id:
\begin{equation*}
\hstar = H\bigl(\texttt{"verdict\_commit"}\,\Vert\,\textsc{VerdictPayload}\,\Vert\, h_\pi \,\Vert\, r\bigr),
\end{equation*}
where
$\textsc{VerdictPayload} = \textsc{ser}(v^\star, |G_{v^\star}|, \textit{verdict\_margin}, n, f, r)$.
$\cert{\hstar}$ is a $2f{+}1$ distinct-signer certificate over
$\hstar$ drawn from $G_{v^\star}$. A \textsc{verdict\_commit}
\emph{does not} carry a semantic aggregate
$\widetilde{\mathbf{y}}$, and explicitly flags
$\mathbf{no\_semantic\_aggregate} = \textsc{true}$.
\item $\mathcal{O}^{\textsc{abort}}$ is an \emph{abort} outcome
with $\mathbf{commit\_type} = \textsc{abort}$ and a typed
reason from the closed set
\begin{align*}
\{&\,\texttt{verdict\_below\_quorum},\\
  &\,\texttt{core\_below\_quorum},\\
  &\,\texttt{admissibility\_failed},\\
  &\,\texttt{margin\_below\_threshold},\\
  &\,\texttt{aggregation\_failed},\\
  &\,\texttt{insufficient\_signers}\,\}
\end{align*}
(the first three appear under the typed prefix
\texttt{semantic\_core\_failed:} when the semantic-path
gate triggers them, and under the typed prefix
\texttt{v2\_both\_paths\_failed:} when both the semantic and
verdict-fallback gates fail in the same round, in which case
the inner reason from the semantic-path attempt is appended
verbatim).
\end{itemize}
\end{definition}

\begin{definition}[Committed-verdict projection]
\label{def:committed-verdict}
Given a typed object $\mathcal{O}$, the \emph{committed-verdict}
projection $\textsc{verdict}(\mathcal{O})$ returns $v^\star$
when $\mathcal{O} \in \{\mathcal{O}^{\textsc{sem}}, \mathcal{O}^{\textsc{ver}}\}$
and is undefined when $\mathcal{O} = \mathcal{O}^{\textsc{abort}}$.
For \textsc{semantic\_commit}, $v^\star$ is the verdict shared by
every member of the within-verdict admissible core $\core$ (the
core is extracted within $G_{v^\star}$ by construction). For
\textsc{verdict\_commit}, $v^\star$ is the candidate verdict
$\arg\max_v |G_v|$ with deterministic tie-break. The
honest-reference and dataset-gold validity predicates of
Definition~\ref{def:two-validities} are evaluated on
$\textsc{verdict}(\mathcal{O})$ when defined.
\end{definition}

\begin{definition}[Admissible semantic round]
\label{def:admissible-round}
Round $r$ is \emph{admissible at radius $\thalpha$} iff there
exists a subset $\core\subseteq V(r)$ with $|\core|\ge 2f{+}1$
and a centre node $h\in \core$ such that
$\angle(\mathbf{e}_h,\mathbf{e}_j)\le \thalpha$ for every
$j\in \core$. The companion diameter predicate
$\eps$ requires $\angle(\mathbf{e}_j,\mathbf{e}_k)\le \eps$ for
every pair in $\core$ and is reported as a stricter sister
condition.
For the H-CSC main path, this predicate is evaluated on the
candidate verdict group $G_{v^\star}$ rather than the full
delivered view; the semantic branch therefore requires
$\core \subseteq G_{v^\star}$ with $|\core|\ge 2f{+}1$ and
$r^\star \le \thalpha$. For the strict-semantic CSC ablation
(\S\ref{sec:hcsc-strict-csc}) the same predicate is evaluated
on the full delivered view $V(r)$.
\end{definition}

\begin{remark}
Definition~\ref{def:admissible-round} is a \emph{per-round
predicate}, not a universal assumption about all real LLM-agent
rounds. The admissible-round predicate is a commit precondition,
not an empirical claim that all honest LLM agents naturally
cluster. The set of rounds in which it holds is benchmark- and
threshold-dependent (see Section~\ref{sec:evaluation}). The
protocol's response when the predicate fails is an explicit safe
abort, not an attempt to commit a smaller cluster.
\end{remark}

\begin{definition}[Typed commit, abort, and invalid commit]
\label{def:commit-abort}
\sloppy
A round outcome is one of
\[
\{\textsc{semantic\_commit},\ \textsc{verdict\_commit},\ \textsc{abort}\}
\]
(Definition~\ref{def:csc-object}). For commit outcomes, the
\emph{typed invalid commit} relative to a reference $R$
partitions into an \emph{invalid semantic commit} (an
$\mathcal{O}^{\textsc{sem}}$ whose
$\textsc{verdict}(\mathcal{O})$ differs from $R$'s verdict),
an \emph{invalid verdict commit} (the analogous case for
$\mathcal{O}^{\textsc{ver}}$), and the combined rate (the union
of the two).
The empirical metrics
$\textsc{InvalidCommit}_{\mathrm{hmaj}}$ and
$\textsc{InvalidCommit}_{\mathrm{gold}}$ aggregate the combined
invalid commit rate against $R = \mathrm{majority}_{\Honest}(V)$
and $R = \mathrm{gold}(V)$ respectively
(Definition~\ref{def:two-validities}). Per-type breakdowns
$\textsc{InvalidSemanticCommit}_{\mathrm{hmaj}}$ and
$\textsc{InvalidVerdictCommit}_{\mathrm{hmaj}}$ are reported
separately when meaningful. \emph{Invalid abort} is undefined.
\end{definition}

\begin{definition}[Dataset-gold and evidence-bounded honest-reference validity]
\label{def:two-validities}
Let $\mathrm{gold}(V)$ be the dataset's reference verdict for
the round (when available), and let
$\mathrm{majority}_{\Honest}(V)$ be the
\emph{evidence-bounded honest-reference verdict}: the
deterministic top-of-honest verdict given only the visible
evidence, with tie-break order
$\textsc{support} > \textsc{refute} > \textsc{insufficient}$
matching the protocol's own $\textsc{largest\_verdict}$ rule.
This is the honest plurality (not strict majority): for $k$
verdict classes ($k=3$ here), the honest plurality coincides
with the strict-majority verdict only when one honest class
holds more than $|\Honest|/2$ votes; otherwise it picks the
honest top class under the same deterministic tie-break used
by the protocol. We say a \Commit\ outcome is
\emph{gold-valid} iff its committed verdict equals
$\mathrm{gold}(V)$, and \emph{honest-reference-valid} iff its
committed verdict equals $\mathrm{majority}_{\Honest}(V)$. We
always report both; the two notions diverge whenever the gold
label exceeds what a strict evidence-bounded reading supports.
The empirical safety metric in
Section~\ref{sec:eval-setup} is named
$\texttt{invalid\_hmaj}$ for backward compatibility with the
H-CSC bench tooling; readers should interpret ``hmaj'' as
``honest-reference (plurality)'' rather than ``strict
majority''.
\end{definition}

\begin{remark}
$\mathrm{majority}_{\Honest}$ is defined over the honest set,
which is unknown to the protocol. It is therefore an
\emph{evaluation} construct, not a protocol input, and is
computed only offline against records labelled by an oracle.
We report gold-validity and honest-reference-validity
separately because (i) Climate-FEVER fallback labels can
diverge from evidence-limited honest judgments, (ii) the
protocol is not a semantic-truth oracle, and (iii) the core
safety metric reported in
Section~\ref{sec:eval-setup} is the rate of invalid commits
against the evidence-bounded honest-reference verdict.
Conflating the two risks confusing dataset label noise (a
property of the benchmark) with protocol failure (a property
of the algorithm).
\end{remark}

\begin{definition}[Byzantine semantic poisoning]
\label{def:semantic-poisoning}
\emph{Byzantine semantic poisoning} is the production of a
proposal $z_j = (s_j, \rho_j)$ by an agent $j\in \Byz$ that is
(i) fluent and on-topic, (ii) lexically or embedding-proximal
to the honest mass (e.g., angular distance to the honest centre
below the median honest-honest distance), and (iii)
adversarially distorted in at least one of two ways:
\begin{description}
\item[Mode A (verdict-level).] The committed verdict
$z_j.v$ differs from the honest-reference verdict
$\mathrm{majority}_{\Honest}(V)$.
\item[Mode B (evidential, same-verdict).] $z_j.v$ matches the
honest-reference verdict, but the supporting rationale
$\rho_j$ is adversarially fabricated (mismatched evidence ids,
spurious causation, invented citations) so the proposal
appears benign on the verdict axis while corrupting the
evidence axis.
\end{description}
We distinguish two adversary modes: \emph{static} (the
Byzantine does not observe honest broadcasts) and
\emph{rushing} (the Byzantine reads the round's honest
broadcasts before submitting).
The primary safety metric \texttt{invalid\_hmaj}
(Definition~\ref{def:two-validities}) captures Mode A; Mode B
falls outside the present protocol's correctness signal and is
treated as a representation-boundary limitation
(\S\ref{sec:limitations}). An evidence-aware finality-control
extension that targets Mode B is noted as future work
(\S\ref{sec:conclusion}).
\end{definition}

\begin{definition}[Finality-signal vector]
\label{def:finality-signal}
\sloppy
For a delivered view $V(r)$ with verdict groups $\{G_v\}_v$
and within-verdict embeddings, the \emph{finality-signal
vector} is
\[
\Psi(r) \;=\; (\Psi^{\textsc{verdict}}, \Psi^{\textsc{semantic}}, \Psi^{\textsc{cert}}),
\]
where $\Psi^{\textsc{verdict}} = (\textit{top\_count}(V), \textit{verdict\_margin}(V))$
collects the largest verdict group's size and the margin to the
runner-up; $\Psi^{\textsc{semantic}} = (|\core|, r^\star_{\thalpha}(\core))$
collects the within-verdict admissible-core size and the
minimum admissibility radius; and $\Psi^{\textsc{cert}} = |\Sigma|$
is the count of distinct valid signatures on the candidate
digest. The H-CSC typed-decision logic
(Section~\ref{sec:protocol}) is a deterministic function of
$\Psi(r)$. The protocol-side allowed signals are exactly the
components of $\Psi(r)$; gold labels, honest/Byzantine identity
labels, evidence-truth oracles, source-reliability scores,
claim--evidence entailment, and external judge labels are
\emph{forbidden} as protocol inputs and are reserved for
offline evaluation only.
\end{definition}

\begin{remark}
\label{rem:hcsc-scope}
\sloppy
The structured proposal $\rho_i$ contains a verdict, a
confidence value, an evidence-id list, a rationale, and a claim
text. H-CSC's typed-decision step consumes only the verdict and
the canonical proposal text (via the encoder $\cE$); the
evidence-id list is part of the canonical text the encoder
reads, but the protocol does \emph{not} perform evidence-aware
reasoning over it, does not consult the confidence value, and
does not invoke any external evidence-truth, source-reliability,
claim--evidence-entailment, or judge module. A future
evidence-aware finality-control paper would extend the
$\Psi^{\textsc{evidence}}$ component (currently empty) with
provenance and entailment signals; that extension is
explicitly out of scope for the present paper.
\end{remark}


\subsection{Goal}
\label{subsec:goal}

\sloppy
Given a delivered view $V(r)$, an encoder $\cE$, a fault bound
$f$, and protocol parameters
\begin{multline*}
\pi=(\thalpha,\ \eps,\ \agg,\ Q_\eta,\ \textit{candidate\_rule}, \\
\textit{fallback},\ \textit{core\_method},\ \textit{topology\_gate})
\end{multline*}
with the H-CSC main-path values
$\textit{candidate\_rule}=\textsc{largest\_verdict}$,
$\textit{fallback}=\textsc{margin\_1}$,
$\textit{core\_method}=\textsc{angular\_component}$, and
$\textit{topology\_gate}=\bot$,
the protocol must output exactly one typed object
(Definition~\ref{def:csc-object}): a
\textsc{semantic\_commit} $\mathcal{O}^{\textsc{sem}}$ when the
within-verdict semantic core is admissible at radius
$\thalpha$; a \textsc{verdict\_commit} $\mathcal{O}^{\textsc{ver}}$
when the candidate verdict group has size $\ge 2f{+}1$ and
$\textit{verdict\_margin} \ge 1$ but the semantic core is not
admissible; or an \emph{abort} $\mathcal{O}^{\textsc{abort}}$
with a typed reason from the closed set in that definition.
The protocol must \emph{not} commit a digest without a valid
$2f{+}1$ distinct-signer certificate. The protocol must
\emph{not} emit a \textsc{semantic\_commit} when the
within-verdict semantic core is not admissible. The protocol
must \emph{not} emit a \textsc{verdict\_commit} when the
candidate verdict group is below quorum or fails the margin
condition. We give the precise correctness guarantees
(agreement, semantic validity, verdict validity, safe abort,
conditional commit-or-abort termination) in
Section~\ref{sec:correctness}.



\subsection{Assumptions}
\label{sec:correctness-assumptions}

\begin{assumption}[Authenticated channels and honest sign-once]
\label{assn:auth-channels}
The assumption has two clauses, both required by
Theorem~\ref{thm:hcsc-agreement} and Lemma~\ref{lem:cert-uniqueness}.
\begin{description}
\item[Authenticated channels.] Every honest agent's signature
is unforgeable. Two distinct signatures from the same agent on
distinct digests are detectable.
\item[Honest sign-once.] In each round $r$, an honest agent
signs at most one typed digest. Before signing it recomputes
Algorithm~\ref{alg:certified-commitment-round} locally on its
delivered view $V(r)$ and signs only the unique digest
$D^\star(r)$ returned by the deterministic H-CSC decision rule;
any conflicting digest for the same round is rejected without
signing.
\end{description}
\end{assumption}

\begin{assumption}[Partial synchrony]
\label{assn:partial-synchrony}
There is a global stabilisation time after which message
delivery is bounded; before that time, message order and
timing are adversarial. This assumption is used only for
Termination.
\end{assumption}

\begin{assumption}[Deterministic encoder and serialisation]
\label{assn:deterministic-cE}
$\cE$ is \emph{byte}-deterministic: two honest agents that compute
$\cE(\rho)$ for the same canonical proposal text obtain the
identical bit pattern for $\mathbf{e}$. The canonicalisation of
$\rho$ is byte-deterministic in the structured fields. The
quantiser $Q_\eta$ is a fixed-point lattice round with step
$\eta$ and a fixed tie-break rule.
\end{assumption}

\begin{remark}[This assumption is the load-bearing one, and v1
did not justify it]
\label{rem:determinism-debt}
Assumption~\ref{assn:deterministic-cE} is what makes
Lemma~\ref{lem:digest-consistency} --- and therefore L1 --- work,
because $\hstar$ is a hash and a hash has no tolerance. The
previous version of this manuscript justified it by saying that
$\agg$ ``converges to within a documented numerical tolerance''
and that $Q_\eta$ ``collapses sub-$\eta$ differences''. Neither
statement is sufficient: a rounding quantiser collapses a
difference only when the two values do not straddle a lattice
boundary. The parameter tuple $\pi$ names $\eta$ but v1 never
gives it a numeric value; the artefact uses $\eta = 10^{-6}$.
We analyse the resulting gap quantitatively in
\S\ref{sec:encoder-determinism} and show there that the empirical
finding of \S\ref{sec:eval-r2} discharges the assumption
outright.
\end{remark}

\begin{assumption}[Admissible round (per-type)]
\label{assn:admissible-round}
A round is \emph{semantic-admissible} at radius $\thalpha$ iff
the candidate verdict group $G_{v^\star}$ contains a $2f{+}1$
within-verdict admissible core
(Definition~\ref{def:admissible-round}).
A round is \emph{verdict-admissible at margin $m$} iff
$|G_{v^\star}| \ge 2f{+}1$ and $\textit{verdict\_margin} \ge m$;
the shipped operating point is $m = 1$.
The semantic-validity and termination guarantees below are
\emph{conditional} on the corresponding admissibility
predicate; the safe-abort theorem makes no admissibility
conditioning.
\end{assumption}


\subsection{Representation boundary}
\label{sec:rep-boundary}

The guarantees above are conditional on the admissibility
instrument being able to separate honest from Byzantine proposals
at all. The following lemma delimits when it cannot, for
\emph{any} instrument that is a function of the encoder.

\begin{lemma}[Representation boundary]
\label{lem:rep-boundary}
Suppose there exists an honest proposal $\rho_h$ and a
Byzantine proposal $\rho_b$ with
$\rho_h.\textit{verdict} = \rho_b.\textit{verdict}$ and
$\angle(\cE(\rho_h),\cE(\rho_b)) = 0$. Then no deterministic
embedding-only commitment protocol that operates only on $\cE$
can distinguish $\rho_h$ from $\rho_b$ on the within-verdict
admissibility predicate; any such protocol must either (i)
admit $\rho_b$ to the selected core, or (ii) abort.
\end{lemma}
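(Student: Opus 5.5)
The plan is an indistinguishability (symmetry) argument. First, fix precisely what ``operates only on $\cE$'' means. A deterministic embedding-only protocol takes as input the multiset of pairs $(\rho_i.\textit{verdict}, \mathbf{e}_i)$ indexed by position in $V(r)$. It outputs a typed decision together with a selected core $\core \subseteq G_{v^\star}$. Its within-verdict admissibility predicate and its core-membership rule are functions of the verdict label and the embeddings, and of nothing else: not the raw text $s_i$, not the identity labels. This matches the finality-signal restriction of Definition~\ref{def:finality-signal}. From the hypothesis $\angle(\cE(\rho_h),\cE(\rho_b)) = 0$ on the unit sphere, we get $\mathbf{e}_h = \mathbf{e}_b$ as vectors. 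Together with equal verdicts, this means $h$ and $b$ present identical inputs.

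Second, we will argue by a swap. Consider a delivered view $V$ containing $\rho_h$ and a view $V'$ that is identical except that the slot holding $\rho_h$ now holds $\rho_b$. Equivalently, consider a single view containing both proposals, where we exchange their indices. In either case the protocol's input, viewed as a function of (verdict, embedding) per position, is unchanged. Determinism therefore forces the same typed outcome and the same membership decision for that slot. Since the protocol does not know which indices are Byzantine (Definition~\ref{def:semantic-view}), any decision rule that places the slot in $\core$ must also place it there when the slot holds $\rho_b$. Any rule that excludes it must also exclude $\rho_h$.

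Third, we split into cases on the outcome. If the protocol emits a \textsc{semantic\_commit} with the slot in $\core$, then $\rho_b$ is admitted in the $V'$ world; this is case (i). If the protocol aborts, we are in case (ii). The remaining case is a semantic commit that excludes the slot. We must argue that this cannot be a \emph{distinguishing} behaviour. It excludes $\rho_h$ exactly as much as $\rho_b$, so it cannot separate them; by symmetry it treats the honest proposal as a Byzantine one. Accordingly, we will read ``distinguish'' as ``produce different admit/exclude decisions for $\rho_h$ and $\rho_b$,'' and the swap shows this is impossible. The dichotomy (i)/(ii) is then the statement that a protocol that is \emph{honest-preserving}, i.e.\ must not exclude $\rho_h$ from a commit it issues, either admits $\rho_b$ or declines to commit.

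The main obstacle is this last case. The lemma as literally phrased omits the ``commit but exclude both'' option. I would handle it by making explicit the implicit requirement that the protocol must not reject the honest proposal when committing, or else by folding the ``exclude both'' behaviour into the conclusion as a failure to distinguish. A secondary subtlety is positional dependence, such as the index-based tie-breaks in core extraction. The swap construction handles this by placing $\rho_b$ in the very same slot that $\rho_h$ occupied, so no positional information differs between the two worlds.
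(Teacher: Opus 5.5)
Your argument is correct and is the paper's argument made explicit. The paper's proof is the one-line observation that $\rho_h$ and $\rho_b$ have identical embeddings, so any function of the embeddings (in particular the within-verdict admissibility predicate) returns the same value on both; your index-exchange construction is the careful version of this, and it handles positional tie-breaks correctly by comparing the same slot across the two worlds. Your point about the third case (a \textsc{semantic\_commit} whose core excludes both proposals, for example when a separate tight $2f{+}1$ cluster exists in $G_{v^\star}$) is a genuine looseness in the lemma's literal (i)/(ii) dichotomy, which the paper's proof passes over silently, so stating the honest-preserving requirement explicitly is the right repair and not a gap in your proof.
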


\begin{proof}
Both proposals produce identical embeddings; the protocol
cannot distinguish them on any function of the embeddings.
The within-verdict admissibility predicate is such a function.
\end{proof}

\begin{remark}[The boundary is reached long before
$\angle = 0$]
\label{rem:rep-boundary-defence}
Lemma~\ref{lem:rep-boundary} is stated at exact coincidence, but
the operative regime is softer and is reached in practice. Under
the verdict-preserving attack families of
\S\ref{sec:eval-r2} --- adversaries that agree with the
honest verdict and corrupt only the rationale, so
Definition~\ref{def:semantic-poisoning} Mode~B --- Byzantine
proposals enter the selected core on $70\%$--$100\%$ of rounds
depending on family and evasion level ($n = 10$ tasks, $40$
attacks), and H-CSC commits a contaminated payload on $10/10$
rounds, which is the same rate as the verdict-only baseline. The
reason is a signal-to-noise failure rather than an encoder
quality failure: the honest population's own angular dispersion
about its geometric median has median $0.528$~rad and
$90$th percentile $0.992$~rad, while the attack displacement is
$0.58$--$0.85$~rad. At $\thalpha = 0.65$ the ball barely contains
the honest population itself, so there is no room to tighten.

We correct a claim made in the previous version at this point.
v1 asserted that the defence against same-verdict Byzantine
proposals is ``the verdict-fallback path's margin condition plus
the typed safe-abort branch''. Both halves are wrong.
\S\ref{sec:margin-gate} shows the margin condition is not applied
on the branch that fires. And tightening $\thalpha$ does not
recruit the safe-abort branch as a defence: it merely routes the
round to the verdict fallback, which admits the entire verdict
group unfiltered. Sweeping $\thalpha$ from $0.85$ down to $0.45$
leaves the maximum harm of an admitted attack constant at $8.0$
on a blind $0$--$10$ judge scale while the attack pass rate
\emph{rises} from $85\%$ to $100\%$; at $\thalpha = 0.45$ the
semantic-commit rate is $0.00$ and H-CSC has degenerated exactly
into certificate-wrapped majority. The hierarchical fallback that
recovers coverage is the same mechanism that annihilates any
filtering guarantee. This is an architectural property, not a
tuning failure.
\end{remark}


\subsection{Discharging the encoder-determinism assumption}
\label{app:encoder-determinism-full}

Level~1 rests on Assumption~\ref{assn:deterministic-cE}, which
requires two honest nodes to obtain \emph{bit-identical}
embeddings, because $\hstar$ is a hash and a hash has no
tolerance. This subsection shows that (i) the assumption is
empirically false for our own artefact across execution
environments, (ii) quantisation does not repair it, and (iii) the
instrument recommended by \S\ref{sec:eval-r2} discharges
it outright. Point (iii) is a genuine theoretical payoff of an
empirical negative result, and we want it stated explicitly rather
than left implicit.

\paragraph{(i) The assumption is false as measured.}
We re-encoded the MVR-50 canonical proposal texts with the same
CRSE checkpoint in a different execution environment and compared
against the frozen embedding cache used for all reported
experiments. The reproduction is excellent as a numerical matter:
minimum cosine similarity $0.99999990$ over all proposals,
maximum element-wise absolute difference $3.6\times10^{-7}$. It is
nonetheless \emph{not bit-identical}, and bit-identity is what
Assumption~\ref{assn:deterministic-cE} asserts. In angular terms
the discrepancy is
$\arccos(0.99999990) \approx 4.5\times10^{-4}$~rad, which is
$7\times10^{-4}$ of $\thalpha = 0.65$: utterly irrelevant to the
admissibility predicate of L3, and fatal to the byte-equality of
L1. That contrast is the whole issue. The encoder is
reproducible enough to decide geometry and not reproducible enough
to be hashed.

\paragraph{(ii) Quantisation converts ``certainly different''
into ``probably the same''.}

\begin{proposition}[A rounding quantiser cannot discharge
Assumption~\ref{assn:deterministic-cE}]
\label{prop:quantiser}
Let $Q_\eta$ be the coordinate-wise rounding quantiser with step
$\eta>0$ and any fixed tie-break. Then:
\begin{enumerate}
\item For every $\eta>0$ and every $\delta>0$ there exist
$\mathbf{y},\mathbf{y}'\in\mathbb{R}^d$ with
$\max_k|y_k-y'_k|\le\delta$ and
$Q_\eta(\mathbf{y})\ne Q_\eta(\mathbf{y}')$. Hence no choice of
$\eta$ makes digest agreement \emph{certain} in the presence of
any nonzero cross-environment discrepancy.
\item If the fractional parts $y_k/\eta \bmod 1$ are modelled as
independent and uniform and $|y_k-y'_k|=\delta_k<\eta$, then
\[
\Pr\bigl[Q_\eta(\mathbf{y})=Q_\eta(\mathbf{y}')\bigr]
= \prod_{k=1}^{d}\Bigl(1-\frac{\delta_k}{\eta}\Bigr)
\;\le\; \exp\Bigl(-\frac{1}{\eta}\sum_{k=1}^{d}\delta_k\Bigr).
\]
\end{enumerate}
\end{proposition}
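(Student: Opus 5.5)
For part~(1) the plan is an explicit one-coordinate construction that straddles a rounding boundary. Fix $\eta>0$ and $\delta>0$, and set $\epsilon=\min\{\delta,\eta\}/4$. Take $\mathbf{y}$ with $y_1=\eta/2-\epsilon$ and $\mathbf{y}'$ with $y'_1=\eta/2+\epsilon$; in every other coordinate let $y_k=y'_k$. Then $\max_k|y_k-y'_k|=2\epsilon\le\delta$. The two first coordinates lie strictly on opposite sides of the half-integer boundary $\eta/2$, so round-to-nearest sends $y_1$ to $0$ and $y'_1$ to $\eta$. Neither value sits on the boundary, so the fixed tie-break rule is never consulted and the conclusion holds for any choice of it. Hence $Q_\eta(\mathbf{y})\ne Q_\eta(\mathbf{y}')$. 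Because the construction uses a single coordinate, it does not depend on $d$. Since $H$ is injective on distinct inputs up to collisions, distinct quantised aggregates give distinct digests, and this is the reading ``no $\eta$ makes agreement certain.''

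For part~(2) I will work one coordinate at a time and then multiply. Write $y_k/\eta=t_k+u_k$ with $t_k\in\mathbb{Z}$ and $u_k\in[0,1)$. Take the perturbation in the form $y'_k=y_k+\delta_k$; the sign of $\delta_k$ is immaterial by symmetry of the uniform model. The two values round to different lattice points exactly when the interval $[u_k,u_k+\delta_k/\eta)$ crosses the rounding threshold $\tfrac12$ modulo $1$. Because $\delta_k<\eta$, this interval has length less than $1$, so it wraps at most once and contains at most one threshold point. Under $u_k\sim\mathrm{Unif}[0,1)$ the crossing event therefore has probability exactly $\delta_k/\eta$, and the per-coordinate agreement probability is $1-\delta_k/\eta$. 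Independence of the fractional parts across coordinates gives the product. Finally, applying $1-x\le e^{-x}$ for $x\in[0,1]$ term by term gives the exponential bound.

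The main obstacle is getting the probability in part~(2) exactly right. I need to check that the crossing event is precisely ``$\tfrac12\in[u_k,u_k+\delta_k/\eta)$ mod $1$.'' I also need to check that the wrap-around case, where $u_k+\delta_k/\eta\ge1$, neither double-counts nor misses a crossing; it does not, because the next threshold after $\tfrac12$ is at $\tfrac32$, which the interval cannot reach when $\delta_k<\eta$. Everything else is routine. One remark belongs with the statement: the product formula is an identity only under the stated modelling assumption. Part~(1) is the unconditional, worst-case content of the proposition.
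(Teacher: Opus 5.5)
Your proof is correct and follows the paper's argument essentially step for step. For part~(1) it uses a single-coordinate straddle of a half-integer rounding boundary; your offset $\epsilon=\min\{\delta,\eta\}/4$ is slightly tidier than the paper's $\delta/2$, whose images are $t\eta$ and $(t{+}1)\eta$ only when $\delta<2\eta$ (although the two roundings differ for every $\delta$). For part~(2) it uses the per-coordinate crossing probability $\delta_k/\eta$, independence, and $1-x\le e^{-x}$.

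One wording fix in your obstacle paragraph: the interval $[u_k,u_k+\delta_k/\eta)$ \emph{can} reach $\tfrac32$ when $u_k>\tfrac12$, and that crossing must be counted. It contributes measure $\max\{0,\delta_k/\eta-\tfrac12\}$, which together with $\min\{\delta_k/\eta,\tfrac12\}$ from $\tfrac12$ gives exactly $\delta_k/\eta$. What length less than $1$ actually guarantees is that the interval never contains both $\tfrac12$ and $\tfrac32$, so nothing is double-counted.
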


\begin{proof}
(1) Choose an integer $t$ and set
$y_1 := (t+\tfrac12)\eta - \tfrac{\delta}{2}$,
$y'_1 := y_1 + \delta$, with $y_k=y'_k$ for $k>1$. Then
$y_1 < (t+\tfrac12)\eta < y'_1$, so the two coordinates lie on
opposite sides of the rounding boundary and round to $t\eta$ and
$(t{+}1)\eta$ respectively. (2) Coordinate $k$ rounds differently
exactly when the interval spanned by $y_k$ and $y'_k$ contains a
boundary $(t+\tfrac12)\eta$; for $\delta_k<\eta$ and a uniform
fractional part this event has probability $\delta_k/\eta$.
Independence gives the product, and $1-x\le e^{-x}$ gives the
bound.
\end{proof}

Instantiating Proposition~\ref{prop:quantiser}(2) with our
artefact's parameters is instructive. With $d = 768$ (the
\texttt{e5-base} embedding width) and $\eta = 10^{-6}$, even a
mean per-coordinate discrepancy of $\bar\delta = 10^{-8}$ ---
a factor of $36$ \emph{below} the measured maximum of
$3.6\times10^{-7}$ --- gives
$\Pr[\text{digests agree}] \le \exp(-768\cdot10^{-8}/10^{-6})
= e^{-7.68}\approx 5\times10^{-4}$. The model is crude (real
coordinate errors are correlated and not uniform, and within one
fixed execution environment the encoder \emph{is} bit-reproducible,
which is why the frozen cache works at all), so this number should
be read as an order of magnitude and not as a measured failure
rate. The qualitative conclusion does not depend on the model:
$\eta = 10^{-6}$ is the same order as the discrepancy it is
supposed to absorb.

Coarsening $\eta$ helps but does not resolve. Taking
$\eta = 10^{-2}$ and $\delta_k \le 3.6\times10^{-7}$ bounds the
per-round disagreement probability by
$768\cdot3.6\times10^{-7}/10^{-2}\approx 2.8\times10^{-2}$, while
displacing the committed aggregate by up to
$(\eta/2)\sqrt{d}\approx 0.139$ in $\ell_2$, i.e.\ about
$0.139$~rad $\approx 0.21\,\thalpha$. So the coarsest quantiser
that makes accidental disagreement unlikely already perturbs the
committed semantic object by a fifth of the admissibility radius
--- and still leaves agreement merely probable. In a protocol
whose L1 guarantee is stated as deterministic, ``probable'' is the
wrong word. Honest options are: pin the execution environment
(fixed kernels, fixed reduction order, deterministic-math flags,
version-locked libraries) and declare that pinning part of $\pi$;
or restate L1 as holding with a stated failure probability; or
remove the floating-point object from the digest.

\paragraph{(iii) The lexical instrument removes the problem.}
\S\ref{sec:eval-r2} compares four admissibility
instruments on the same verdict-preserving attack families and
finds that a deterministic \emph{lexical-conformity} predicate ---
Jaccard overlap between an agent's canonical-text shingle set and
the consensus shingle set of the delivered group --- dominates the
learned CRSE angular predicate on every cell tested
(AUROC $0.865$--$0.982$ versus $0.621$--$0.744$; TPR at a $5\%$
honest false-positive rate $38\%$--$80\%$ versus $0\%$--$8\%$;
$n = 10$ tasks, $40$ attacks, $80$ honest proposals). That is an
empirical finding about detection quality. It has a separate,
purely theoretical consequence, which we record here:

\begin{proposition}[The lexical predicate is exactly
deterministic]
\label{prop:lexical-determinism}
Let $S(\rho)$ be the shingle set of the canonical proposal text of
$\rho$ under a fixed tokenisation, and let the admissibility test
be
$|S(\rho_i)\cap S_{\mathrm{cons}}| \cdot q \ge p \cdot |S(\rho_i)\cup S_{\mathrm{cons}}|$
for a fixed rational threshold $p/q$. Then the predicate's value,
and the identity of the selected core, are computable by exact
integer arithmetic over byte strings. Two honest nodes with the
same delivered view obtain the same result on any conforming
platform, with no assumption on floating-point behaviour, kernel
selection, reduction order, or library version.
\end{proposition}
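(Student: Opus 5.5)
The plan is to show that every quantity the predicate touches is a finite set of byte strings or a nonnegative integer, and that every operation applied to them is a total function on those discrete objects with a platform-independent specification. Once that holds, two honest nodes that start from the same input must obtain the same output. The input is fixed by the hypothesis: the same delivered view $V(r)$. Under the successful-round condition of \S\ref{subsec:net}, RBC Agreement and Totality give identical $V(r)$ at all honest nodes. The canonicalisation is byte-deterministic by the canonicalisation clause of Assumption~\ref{assn:deterministic-cE}, a clause that does not mention $\cE$. So each node holds the same byte string $\mathrm{CanonText}(\rho_i)$ for every $i\in I_r$.

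I would then trace the pipeline stage by stage, in the style of the proof of Lemma~\ref{lem:digest-consistency}.

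\emph{Step 1.} Tokenisation and shingling. A fixed tokenisation over bytes (for example lower-casing under a fixed table and splitting on whitespace) sends a byte string to a finite sequence of byte strings. Taking contiguous $k$-grams then gives the set $S(\rho_i)$, which I represent canonically as a lexicographically sorted, duplicate-free list.

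\emph{Step 2.} The consensus set. $S_{\mathrm{cons}}$ is the set of shingles occurring in more than half of the relevant proposals. Membership is decided by counting occurrences, an integer, and comparing it with the group size using the integer test $2c > N$.

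\emph{Step 3.} The predicate. $|S(\rho_i)\cap S_{\mathrm{cons}}|$ and $|S(\rho_i)\cup S_{\mathrm{cons}}|$ are cardinalities of finite sets of byte strings. The cross-multiplied test with $p,q\in\mathbb{Z}$ is a comparison of products of integers bounded by the input length. No division and no rounding occur, so no floating-point value appears anywhere.

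\emph{Step 4.} Core selection. The core is a deterministic function of the per-proposal Boolean outcomes, with any ties broken by agent index, exactly as in Stage~3's deterministic tie-break. Its identity is therefore equal at both nodes.

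The point that needs care, and the one I expect to be the main obstacle, is the phrase ``on any conforming platform''. This part is about specification rather than mathematics. I would make it precise by requiring that the tokeniser, the case map, and the set operations be specified over byte strings, not locale-dependent Unicode routines, and that the integers fit in a width the platform represents exactly. I would argue the width requirement is met because all counts are at most the total byte length of $V(r)$, so arbitrary-precision or 64-bit integers suffice, and multiplication of bounded integers is exact. With those requirements fixed in $\pi$, each step is a mathematical function of its input, so any two correct implementations agree. This is exactly why no hypothesis on kernels, reduction order, or library version is needed. The contrast with Proposition~\ref{prop:quantiser} then follows, since no step ever produces a real number that a lattice boundary could split.
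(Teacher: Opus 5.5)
Your proposal is correct and follows essentially the same route as the paper's proof. Both arguments use the same chain: byte-deterministic canonicalisation from the serialisation clause of Assumption~\ref{assn:deterministic-cE}, tokenisation and shingling as maps from byte strings to finite sets, exact set cardinalities feeding an integer comparison, and core selection as a deterministic function of integer scores with ties broken by agent id. Your extra requirement of a byte-level case map and bounded integer width makes explicit what the paper's phrase ``conforming platform'' leaves implicit.
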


\begin{proof}
Canonicalisation is byte-deterministic by
Assumption~\ref{assn:deterministic-cE}'s second clause, which
concerns serialisation only and involves no arithmetic.
Tokenisation and shingling are functions from byte strings to
finite sets of byte strings. Set intersection, union, and
cardinality are exact. The comparison is between two integers.
No step invokes a real-number operation, so no step admits
platform-dependent rounding. The core-selection rule (largest
conforming subset, ties broken by agent id) is a deterministic
function of the resulting integer scores.
\end{proof}

\noindent
Under a lexical admissibility instrument, therefore,
Assumption~\ref{assn:deterministic-cE} splits and its hard half
disappears:
\begin{itemize}
\item \textbf{L1 and L2} no longer depend on the encoder at all.
The digest binds a canonical byte string and a set of integer
scores. Lemma~\ref{lem:digest-consistency} holds with the
``identical execution environment'' clause deleted, and
Theorem~\ref{thm:hcsc-agreement} becomes unconditionally
deterministic given Assumption~\ref{assn:auth-channels}.
\item \textbf{L4} never used the encoder, so it is unaffected;
Theorems~\ref{thm:hcsc-verdict-validity} and
\ref{thm:det-semantic-validity} and
Lemma~\ref{lem:containment} are statements about set
cardinalities and go through verbatim with $\core$ read as the
lexically conforming core.
\item \textbf{L3} is the only level that still needs an encoder,
because it is a statement about embeddings. But it moves
\emph{off} the consensus critical path: a verifier may recompute
$\angle(\ystar,\mathbf{e}_i)$ as a post-hoc annotation on a
payload that is itself an exactly reproducible byte string, and
disagreement about the annotation is a disagreement about a
report, not about finality.
\end{itemize}

This is the sense in which the empirical result of
\S\ref{sec:eval-r2} is not only a negative finding about
the learned encoder. The instrument that detects the hard attack
family better is also the instrument that discharges the
assumption our agreement proof was leaning on. We did not
anticipate this when we designed the comparison, and we think it
is the strongest argument in the paper for the recommendation we
make.


%
\begin{figure*}[!t]
\centering
\resizebox{\ifdim\width>\linewidth\linewidth\else\width\fi}{!}{%
\begin{tikzpicture}[
  font=\footnotesize,
  >={Stealth[length=2mm, width=1.8mm]},
  stagenavy/.style={
    rectangle, rounded corners=3pt,
    draw=HCSCstageNavy, line width=0.6pt,
    fill=HCSCstageNavyFill,
    inner sep=4pt, align=center
  },
  stageblue/.style={
    rectangle, rounded corners=3pt,
    draw=HCSCblueOut, line width=0.6pt,
    fill=HCSCblueFill,
    inner sep=4pt, align=center
  },
  stagesem/.style={
    rectangle, rounded corners=3pt,
    draw=HCSCgreenOut, line width=0.6pt,
    fill=HCSCgreenFill,
    inner sep=4pt, align=center
  },
  stagever/.style={
    rectangle, rounded corners=3pt,
    draw=HCSCorangeOut, line width=0.6pt,
    fill=HCSCorangeFill,
    inner sep=4pt, align=center
  },
  stagegray/.style={
    rectangle, rounded corners=3pt,
    draw=HCSCgrayOut, line width=0.6pt,
    fill=HCSCgrayFill,
    inner sep=4pt, align=center
  },
  decision/.style={
    diamond, aspect=2.8,
    draw=HCSCblueOut, line width=0.6pt,
    fill=white,
    inner sep=1pt, align=center,
    font=\scriptsize
  },
  abortbox/.style={
    rectangle, rounded corners=3pt,
    draw=HCSCredOut, line width=0.6pt,
    fill=HCSCredFill,
    inner sep=3pt, align=center,
    font=\scriptsize
  },
  flow/.style={->, line width=0.55pt, color=black!75},
  ymark/.style={font=\scriptsize\bfseries, inner sep=1pt, fill=white}
]


\node[stagenavy, text width=14cm] (s1) at (0, 0)
  {\textbf{Stage 1: Canonicalise \& Encode}\\[1pt]
   For each $i \in I_r$: $\rho_i \gets \textsc{Canonicalise}(z_i)$;\quad
   $\mathbf{e}_i \gets \mathcal{E}(\mathrm{CanonText}(\rho_i))$; normalise to unit sphere.\quad
   Output: $V(r) = \{(i, \rho_i, \mathbf{e}_i) : i \in I_r\}$};

\node[stagenavy, text width=14cm] (s2) at (0, -2.0)
  {\textbf{Stage 2: Verdict Grouping}\\[1pt]
   Partition $V(r)$ by verdict.\quad
   $v^\star = \arg\max_v |G_v(r)|$ (tie-break: $\textsc{support}>\textsc{refute}>\textsc{insufficient}$).\\
   $\textit{verdict\_margin} = |G_{v^\star}| - \max_{v \ne v^\star}|G_v(r)|$};

\node[decision, text width=3.3cm] (d1) at (0, -3.9)
  {$|G_{v^\star}| \ge 2f{+}1$?};

\node[abortbox, text width=2.8cm] (a1) at (5.5, -3.9)
  {\textbf{abort}\\\texttt{verdict\_below\_quorum}};

\node[stageblue, text width=7cm] (s3a) at (-3.5, -6.1)
  {\textbf{Stage 3a: Within-verdict semantic core}\\[1pt]
   $\widehat{C} \gets \textsc{LargestAngularComponent}(G_{v^\star}, \theta_\alpha)$\\[1pt]
   $r^\star = \min_{h \in \widehat{C}} \max_{j \in \widehat{C}} \angle(\mathbf{e}_h, \mathbf{e}_j)$};

\node[decision, text width=5cm] (d2) at (-3.5, -8.3)
  {$|\widehat{C}| \ge 2f{+}1\ \wedge\ r^\star \le \theta_\alpha$?};

\node[stageblue, text width=5.4cm] (s3b) at (+3.5, -8.3)
  {\textbf{Stage 3b: Verdict fallback}\\[1pt]
   Reached only when Stage 3a fails.\\
   Check $\textit{verdict\_margin}$ vs.\ $\textit{margin}^{\min}$\\
   (operating point: $\textit{margin}^{\min}=1$).};

\node[decision, text width=4.5cm] (d3) at (+3.5, -10.6)
  {$\textit{verdict\_margin} \ge \textit{margin}^{\min}$?};

\node[abortbox, text width=2.6cm] (a2) at (+8, -10.6)
  {\textbf{abort}\\\texttt{v2\_both\_paths\_failed}};

\node[stagesem, text width=6.8cm] (s4a) at (-3.6, -13.0)
  {\textbf{Stage 4a: Aggregate $+$ Quantise $+$ Bind (semantic)}\\[1pt]
   $\bar{\mathbf{y}} = \textsc{GeomMedian}(\widehat{C})$; renormalise.\quad
   $\widetilde{\mathbf{y}} = Q_\eta(\bar{\mathbf{y}})$\\[1pt]
   $h^\star_{\textsc{sem}} = H(\texttt{"semantic\_commit"} \,\Vert\, \widetilde{\mathbf{y}} \,\Vert\, h_\pi \,\Vert\, r \,\Vert\, v^\star)$};

\node[stagever, text width=6.5cm] (s4b) at (+3.6, -13.0)
  {\textbf{Stage 4b: Verdict payload $+$ Bind}\\[1pt]
   $\textsc{VerdictPayload} = \textsc{ser}(v^\star, |G_{v^\star}|, \textit{verdict\_margin}, n, f, r)$\\
   $h^\star_{\textsc{ver}} = H(\texttt{"verdict\_commit"} \,\Vert\, \textsc{VerdictPayload} \,\Vert\, h_\pi \,\Vert\, r)$\\
   $\mathbf{no\_semantic\_aggregate} = \textsc{true}$};

\node[stagegray, text width=14cm] (s5) at (0, -15.4)
  {\textbf{Stage 5: Quorum Certificate}\\[1pt]
   Collect $2f{+}1$ distinct valid signatures on $h^\star$
   (source: $\widehat{C}$ for semantic path, $G_{v^\star}$ for verdict path).\\
   If $|\Sigma| < 2f{+}1$ $\Longrightarrow$ \texttt{abort: insufficient\_signers}.};

\node[stagesem, text width=4.6cm, font=\scriptsize] (tsem) at (-5, -17.6)
  {\textbf{semantic\_commit object}\\[1pt]
   $(h^\star_{\textsc{sem}},\ \pi,\ \mathrm{Cert}(h^\star_{\textsc{sem}}),$\\
   $\widehat{C},\ \widetilde{\mathbf{y}},\ v^\star)$};
\node[stagever, text width=4.6cm, font=\scriptsize] (tver) at (0, -17.6)
  {\textbf{verdict\_commit object}\\[1pt]
   $(h^\star_{\textsc{ver}},\ \pi,\ \mathrm{Cert}(h^\star_{\textsc{ver}}),\ G_{v^\star},\ v^\star)$;\\
   $\mathbf{no\_semantic\_aggregate}=\textsc{true}$};
\node[abortbox, text width=4.6cm] (tab) at (+5, -17.6)
  {\textbf{abort} with typed reason $\in$\\
   $\{$\texttt{verdict\_below\_quorum, core\_below\_quorum,}\\
   \texttt{admissibility\_failed, aggregation\_failed,}\\
   \texttt{insufficient\_signers, v2\_both\_paths\_failed:*}$\}$};


\draw[flow] (s1) -- (s2);
\draw[flow] (s2) -- (d1);

\draw[flow] (d1) -- (a1) node[ymark, midway, above] {NO};

\coordinate (j1) at ($(d1.south) + (0,-0.5)$);
\draw[flow,-] (d1.south) -- (j1) node[ymark, midway, right=2pt] {YES};
\draw[flow]   (j1) -| (s3a.north);

\draw[flow] (s3a) -- (d2);

\node[decision, text width=4.2cm, draw=HCSCredOut, dashed,
      line width=0.7pt] (dgate) at (-3.5, -10.6)
  {$\textit{verdict\_margin} \ge \textit{margin}^{\min}$?};

\draw[flow] (d2)    -- (dgate) node[ymark, midway, right=2pt] {YES};
\draw[flow] (dgate) -- (s4a)   node[ymark, midway, right=2pt] {YES};
\draw[flow, dashed, color=HCSCredOut] (dgate) -- (d3)
  node[ymark, midway, above, font=\tiny\bfseries] {NO};

\draw[flow] (d2) -- (s3b) node[ymark, midway, above] {NO};

\draw[flow] (s3b) -- (d3);
\draw[flow] (d3) -- (s4b) node[ymark, midway, right=2pt] {YES};

\draw[flow] (d3) -- (a2) node[ymark, midway, above] {NO};

\draw[flow] (s4a.south) -- (s4a.south |- s5.north);
\draw[flow] (s4b.south) -- (s4b.south |- s5.north);

\draw[flow] (s5.south -| tsem) -- (tsem.north);
\draw[flow] (s5.south -| tver) -- (tver.north);
\draw[flow] (s5.south -| tab)  -- (tab.north);

\node[font=\scriptsize, text width=15cm, align=center] (legend) at (0, -19.5)
  {\textcolor{HCSCstageNavy}{$\blacksquare$\,protocol stage}\quad
   \textcolor{HCSCblueOut}{$\blacksquare$\,decision predicate}\quad
   \textcolor{HCSCgreenOut}{$\blacksquare$\,\texttt{semantic\_commit} path}\quad
   \textcolor{HCSCorangeOut}{$\blacksquare$\,\texttt{verdict\_commit} path}\quad
   \textcolor{HCSCredOut}{$\blacksquare$\,typed \texttt{abort}}\\[2pt]
   \textcolor{HCSCredOut}{$\lozenge$\,dashed: the margin gate of
   \S\ref{sec:margin-gate}, \emph{not} present in the protocol as evaluated}};

\end{tikzpicture}}
\caption{\textbf{H-CSC per-round protocol flow, at stage granularity.}
Concept-level view: Figure~\ref{fig:hcsc-arch}. Left column:
always-attempted semantic-commit primary path (digest
$h^\star_{\textsc{sem}}$). Right column: verdict-fallback,
reached when the semantic-admissibility predicate at
$\theta_\alpha$ fails (digest $h^\star_{\textsc{ver}}$ with
$\mathbf{no\_semantic\_aggregate}=\textsc{true}$). If both
gates fail the protocol emits a typed \texttt{abort}; Stage~5
binds the chosen digest with a $2f{+}1$ distinct-signer
certificate. Formal pseudocode:
Algorithm~\ref{alg:certified-commitment-round}.
\emph{Solid arrows are the protocol as evaluated in
\S\ref{sec:evaluation}, i.e.\ the version the tie-break capture of
Example~\ref{ex:tiebreak-capture} exploits: \textsc{yes} at the
admissibility diamond goes straight to Stage~4a, so the semantic
branch never consults $\textit{verdict\_margin}$. The dashed diamond
is the margin gate added in \S\ref{sec:margin-gate}, shown at the point
where it inserts; taking its \textsc{no} edge hands the round to the
verdict fallback, whose own margin test then decides between a
verdict-commit and a typed abort.}}
\label{fig:hcsc-protocol-flow}
\end{figure*}

\end{document}